\DeclareMathAlphabet{\mathpzc}{OT1}{pzc}{m}{it}
\begin{document}

\newcommand\myrot[1]{\mathrel{\rotatebox[origin=c]{#1}{$\Longrightarrow$}}}
\newcommand\NEarrow{\myrot{45}}
\newcommand\SEarrow{\myrot{-45}}

\theoremstyle{plain}
\newtheorem{theorem}{Theorem}[section]
\newtheorem{lemma}[theorem]{Lemma}
\newtheorem{proposition}[theorem]{Proposition}
\newtheorem{corollary}[theorem]{Corollary}
\newtheorem{definition}[theorem]{Definition}
\newtheorem{remark}[theorem]{Remark}
\newtheorem{example}[theorem]{Example}

\newcommand{\lebesgue}{\ensuremath{\lambda\!\!\lambda}}
\renewenvironment{proof}{{\parindent 0pt \it{ Proof:}}}{\mbox{}\hfill\mbox{$\Box\hspace{-0.5mm}$}\vskip 16pt}
\newenvironment{proofthm}[1]{{\parindent 0pt \it Proof of Theorem #1:}}{\mbox{}\hfill\mbox{$\Box\hspace{-0.5mm}$}\vskip 16pt}
\newenvironment{prooflemma}[1]{{\parindent 0pt \it Proof of Lemma #1:}}{\mbox{}\hfill\mbox{$\Box\hspace{-0.5mm}$}\vskip 16pt}
\newenvironment{proofcor}[1]{{\parindent 0pt \it Proof of Corollary #1:}}{\mbox{}\hfill\mbox{$\Box\hspace{-0.5mm}$}\vskip 16pt}
\newenvironment{proofprop}[1]{{\parindent 0pt \it Proof of Proposition #1:}}{\mbox{}\hfill\mbox{$\Box\hspace{-0.5mm}$}\vskip 16pt}

\newcommand{\pK}{\p^{\mathcal{K}}}
\newcommand{\Law}{\ensuremath{\mathop{\mathrm{Law}}}}
\newcommand{\loc}{{\mathrm{loc}}}
\newcommand{\Log}{\ensuremath{\mathop{\mathcal{L}\mathrm{og}}}}
\newcommand{\Meixner}{\ensuremath{\mathop{\mathrm{Meixner}}}}
\newcommand{\of}{[\hspace{-0.06cm}[}
\newcommand{\gs}{]\hspace{-0.06cm}]}
\newcommand{\A}{\mathbf{A}}
\newcommand{\B}{\mathbf{B}}
\newcommand{\p}{\mathds{P}}
\newcommand{\Q}{\mathds{Q}}
\let\MID\mid
\renewcommand{\mid}{|}

\let\SETMINUS\setminus
\renewcommand{\setminus}{\backslash}

\def\stackrelboth#1#2#3{\mathrel{\mathop{#2}\limits^{#1}_{#3}}}

\renewcommand{\theequation}{\thesection.\arabic{equation}}
\numberwithin{equation}{section}

\newcommand\llambda{{\mathchoice
      {\lambda\mkern-4.5mu{\raisebox{.4ex}{\scriptsize$\backslash$}}}
      {\lambda\mkern-4.83mu{\raisebox{.4ex}{\scriptsize$\backslash$}}}
      {\lambda\mkern-4.5mu{\raisebox{.2ex}{\footnotesize$\scriptscriptstyle\backslash$}}}
      {\lambda\mkern-5.0mu{\raisebox{.2ex}{\tiny$\scriptscriptstyle\backslash$}}}}}

\newcommand{\prozess}[1][L]{{\ensuremath{#1=(#1_t)_{0\le t\le T}}}\xspace}
\newcommand{\prazess}[1][L]{{\ensuremath{#1=(#1_t)_{0\le t\le T^*}}}\xspace}
\newcommand{\pH}{\p^{\mathcal{H}}}
\newcommand{\tr}{\operatorname{tr}}
\newcommand{\lijepoa}{{\mathcal{A}}}
\newcommand{\lijepob}{{\mathcal{B}}}
\newcommand{\lijepoc}{{\mathcal{C}}}
\newcommand{\lijepod}{{\mathcal{D}}}
\newcommand{\lijepoe}{{\mathcal{E}}}
\newcommand{\lijepof}{{\mathcal{F}}}
\newcommand{\lijepog}{{\mathcal{G}}}
\newcommand{\lijepok}{{\mathcal{K}}}
\newcommand{\lijepoo}{{\mathcal{O}}}
\newcommand{\lijepop}{{\mathcal{P}}}
\newcommand{\lijepoh}{{\mathcal{H}}}
\newcommand{\lijepom}{{\mathcal{M}}}
\newcommand{\lijepou}{{\mathcal{U}}}
\newcommand{\lijepov}{{\mathcal{V}}}
\newcommand{\lijepoy}{{\mathcal{Y}}}
\newcommand{\cF}{{\mathcal{F}}}
\newcommand{\cG}{{\mathcal{G}}}
\newcommand{\cH}{{\mathcal{H}}}
\newcommand{\cM}{{\mathcal{M}}}
\newcommand{\cD}{{\mathcal{D}}}
\newcommand{\bD}{{\mathbb{D}}}
\newcommand{\bF}{{\mathbb{F}}}
\newcommand{\bG}{{\mathbb{G}}}
\newcommand{\bH}{{\mathbb{H}}}
\newcommand{\dd}{d}
\newcommand{\ddd}{\operatorname{d}}
\newcommand{\er}{{\mathbb{R}}}
\newcommand{\ce}{{\mathbb{C}}}
\newcommand{\erd}{{\mathbb{R}^{d}}}
\newcommand{\en}{{\mathbb{N}}}
\newcommand{\de}{{\mathrm{d}}}
\newcommand{\im}{{\mathrm{i}}}
\newcommand{\set}[1]{\ensuremath{\left\{#1\right\}}}
\newcommand{\indik}{{\mathbf{1}}}
\newcommand{\D}{{\mathbf{D}}}
\newcommand{\E}{{\mathds{E}}}
\newcommand{\N}{{\mathbb{N}}}
\renewcommand{\P}{{\mathbf{P}}}
\newcommand{\ud}{\operatorname{d}\!}
\newcommand{\ii}{\operatorname{i}\kern -0.8pt}
\newcommand{\Var}{\operatorname{Var }\,}
\newcommand{\dt}{\operatorname{d}\!t}   
\newcommand{\ds}{\operatorname{d}\!s}   
\newcommand{\dy}{\operatorname{d}\!y }    
\newcommand{\du}{\operatorname{d}\!u}  
\newcommand{\dv}{\operatorname{d}\!v}   
\newcommand{\dx}{\operatorname{d}\!x}   
\newcommand{\dq}{\operatorname{d}\!q}   
\newcommand{\cadlag}{c\`adl\`ag }

\newcommand{\F}{\mathbf{F}}
\newcommand{\1}{\mathds{1}}
\newcommand{\f}{\mathcal{F}^{\hspace{0.03cm}0}}
\newcommand{\BF}{B^{X, \F}(h)}
\newcommand{\CF}{C^{X, \F}}
\newcommand{\nuF}{\nu^{X, \F}}
\newcommand{\BG}{B^{X, \G}(h)}
\newcommand{\CG}{C^{X, \G}}
\newcommand{\nuG}{\nu^{X, \G}}
\newcommand{\BK}{B^{X, \mathbf{K}}(h)}
\newcommand{\CK}{C^{X, \mathbf{K}}}
\newcommand{\nuK}{\nu^{X, \mathbf{K}}}
\newcommand{\cf}{c^{X, \F}}
\newcommand{\G}{\mathbf{G}}
\newcommand{\M}{\mathcal{M}^{\textup{sp}}}
\newcommand{\K}{\mathbb{K}}
\def\EM{\ensuremath{(\mathbb{EM})}\xspace}
\newcommand{\Cm}{\mathcal{C}_{b, i}^2(\mathbb{R})}
\newcommand{\Cb}{\mathcal{C}_b^2(\mathbb{R})}
\newcommand{\la}{\langle}
\newcommand{\ra}{\rangle}
\newcommand{\uX}{\underline{Z}}
\newcommand{\oX}{\overline{Z}}
\newcommand{\oRd}{|\mathbb{R}^d}
\newcommand{\Norml}[1]{%
{|}\kern-.25ex{|}\kern-.25ex{|}#1{|}\kern-.25ex{|}\kern-.25ex{|}}

\newcommand{\lle}{\langle\hspace{-0.085cm}\langle}
\newcommand{\rre}{\rangle\hspace{-0.085cm}\rangle}

\title[No Arbitrage in Continuous Financial Markets]{No Arbitrage in Continuous Financial Markets}

\author[D. Criens]{David Criens}
\address{D. Criens - Technical University of Munich, Department of Mathematics, Germany}
\email{david.criens@tum.de}
\thanks{D. Criens - Technical University of Munich, Department of Mathematics, Germany,  \texttt{david.criens@tum.de}.}
\keywords{no arbitrage, financial bubble, minimal martingale measure, It\^o process, switching diffusion, stochastic exponential}
\thanks{\textbf{Acknowledgment:} The author thanks the referee for excellent reports which helped to improve the article substantially.
	Moreover, he is grateful to the associate editor for many valuable suggestions.}

\subjclass[2010]{60G44, 60H10, 91B70}

\date{\today}
\maketitle

\frenchspacing
\pagestyle{myheadings}

\begin{abstract}
We derive integral tests for the existence and absence of arbitrage in a financial market with one risky asset which is either modeled as stochastic exponential of an It\^o process or a positive diffusion with Markov switching.  In particular, we derive conditions for the existence of the minimal martingale measure. We also show that for Markov switching models the minimal martingale measure preserves the independence of the noise and we study how the minimal martingale measure can be modified to change the structure of the switching mechanism.
Our main mathematical tools are new criteria for the martingale and strict local martingale property of certain stochastic exponentials.
\end{abstract}

\section{Introduction}
The absence of arbitrage is of fundamental interest in many areas of financial mathematics. 
Our goal is to provide a systematic discussion for a financial market with one risky asset modeled via its discounted price process \(P = (P_t)_{t \in [0, T]}\), which we assume to be either the stochastic exponential of an It\^o process, i.e. to have dynamics
\begin{align}\label{eq: SEM}
\dd P_t = P_t (b_t \dd t + \sigma_t \dd W_t),
\end{align}
or to be a positive diffusion with Markov switching, i.e. to have dynamics
\begin{align}\label{eq: MSM}
\dd P_t = b(P_t, \xi_t) \dd t + \sigma (P_t, \xi_t) \dd W_t, 
\end{align}
where \(\xi = (\xi_t)_{t \in [0, T]}\) is a continuous-time Markov chain and \(W = (W_t)_{t \in [0, T]}\) is a Brownian motion.

For semimartingale markets the classical concepts of no arbitrage are the notions of \emph{no free lunch with vanishing risk (NFLVR)} as defined by Delbaen and Schachermayer \cite{DelbaenSchachermayer94,DelbaenSchachermayer98} and \emph{no feasible free lunch with vanishing risk (NFFLVR)} as defined by Sin \cite{sin1996strictly}. 
The difference between (NFLVR) and (NFFLVR) is captured by the concept of a \emph{financial bubble} in the sense of Cox and Hobson \cite{Cox2005}.
For our market it is well-known that (NFLVR)  is equivalent to the existence of an \emph{equivalent local martingale measure (ELMM)}, see \cite{DelbaenSchachermayer98}, and that (NFFLVR) is equivalent to the existence of an \emph{equivalent martingale measure (EMM)}, see \cite{Cherny2007,sin1996strictly,Yan97}.
The no arbitrage condition used in the stochastic portfolio theory of Fernholz \cite{fernholz2002stochastic} is \emph{no relative arbitrage (NRA)}. In complete markets Fernholz and Karatzas \cite{fernholz2010} showed that (NRA) is equivalent to the existence of a \emph{strict martingale density (SMD)}.
A weaker concept is \emph{no unbounded profit with bounded risk (NUPBR)}, which is known to be equivalent to the existence of a \emph{strict local martingale density (SLMD)}, see \cite{Choulli1996}. (NUPBR) is considered to be the minimal notion needed for portfolio optimization, see \cite{Karatzas2007}.

The first findings of this article are integral tests for the existence and non-existence of SMDs, ELMMs and EMMs.
For \eqref{eq: SEM} the tests are formulated in terms of Markovian upper and lower bounds for the volatility coefficient \(\sigma\) and for \eqref{eq: MSM} the tests depend on \(x \mapsto \sigma (x, j)\) with \(j\) in the state space of the Markov chain \(\xi\). 
The main novelty of our results is that they apply in the presence of multiple sources of risk.
Beside the Markov switching framework, this is for instance the case in diffusion models with a change point, which represent a change of the economical situation caused for instance by a sudden adjustment in the interest rates or a default of a major financial institution. In general, the question whether (NFLVR) and/or (NFFLVR) hold for a model with a change point is difficult, see \cite{FONTANA20143009} for some results in this direction. Our integral tests provide explicit criteria, which are easy to verify.
For many applications of the Markov switching model \eqref{eq: MSM} it is important to know how the change to an ELMM affects the dynamics of the Markov chain \(\xi\). 
As a second contribution, we study this question form a general perspective for independent sources of risk modeled via martingale problems. In particular, we show that the \emph{minimal local martingale measure (MLMM)}, see \cite{doi:10.1111/j.1467-9965.1992.tb00027.x}, preserves the independence and the laws of the sources of risk. To our knowledge, this property has not been reported in the literature. 
A third contribution of this article are integral tests for the martingale property of certain stochastic exponentials driven by It\^o processes or switching diffusions. These characterizations are our key tools to study the absence of arbitrage. 

We comment on related literature. 
For continuous semimartingale models the absence of arbitrage has been studied by Criens \cite{criens17b}, Delbaen and Shirakawa \cite{Delbaen2002}, Lyasoff \cite{MAFI:MAFI530} and Mijatovi\'c and Urusov \cite{MU(2012)}. Criens, Delbaen and Shirakawa and Mijatovi\'c and Urusov proved integral tests for the existence of SMDs, ELMMs and EMMs in diffusion frameworks. Our results can be viewed as generalizations to an It\^o process or Markov switching framework. For a model comparable to \eqref{eq: SEM}, Lyasoff proved that the existence of an ELMM is determined by the equivalence of a probability measure to the Wiener measure. The structure of this characterization is very different from our results. 
In Section \ref{sec: comments} below we comment in more detail on the results in \cite{criens17b, Delbaen2002,MAFI:MAFI530,MU(2012)}.
The martingale property of stochastic exponentials is under frequent investigation. At this point we mention the articles of Blanchet and Ruf \cite{doi:10.1080/15326349.2015.1114891}, Cheridito et al. \cite{CFY}, Criens \cite{criens17b} and Kallsen and Muhle--Karbe \cite{KMK}. Criens used arguments based on Lyapunov functions and contradictions to verify the martingale property of certain stochastic exponential in a multi-dimensional diffusion setting. We transfer these techniques to a general It\^o process setting. 
Cheridito et al. and Kallsen and Muhle--Karbe related the martingale property of a stochastic exponential to an explosion probability via a method based on the concept of \emph{local uniqueness} as defined in \cite{JS}. This technique traces back to work of Jacod and M\'emin \cite{JM76} and Kabanov et al. \cite{KLS-LACOM1,KLS-LACOM2}.
We use a similar argument for the Markov switching setting. The main difficulties are the proofs of explosion criteria and local uniqueness. 
Both approaches have a close relation to the work of Blanchet and Ruf, where a tightness criterion for the martingale property of non-negative local martingales  has been proven. The connection between Lyapunov functions, explosion and tightness is for instance explained in \cite{doi:10.1080/17442508.2019.1657430}.

Let us also comment on consecutive problems and extensions of our results: In case the discounted price process \(P\) is a positive It\^o process of the type
\[
\dd P_t = b_t \dd t + \sigma_t \dd  W_t,
\] 
our results on the martingale property of stochastic exponentials can be used to obtain characterizations for no arbitrage with a similar structure as for the model \eqref{eq: MSM}.
Moreover, in case \(P\) is the stochastic exponential of a diffusion with Markovian switching, i.e. 
\begin{align*}
\dd P_t &= P_t \dd  S_t, \\
\dd S_t &= b(S_t, \xi_t) \dd t  + \sigma (S_t, \xi_t) \dd W_t,
\end{align*}
our martingale criteria yield conditions for no arbitrage with a similar structure as for \eqref{eq: SEM}.
It is also interesting to ask about multi-dimensional models. In this case, results in the spirit of \cite{criens17b} can be proven by similar arguments as used in this  article. However, the conditions are rather complicated to formulate and space consuming. Therefore, we restrict ourselves to the one-dimensional case.

The article is structured as follows. In Section \ref{sec: MP SE} we give conditions for the martingale and strict local martingale property of certain stochastic exponentials.
In Section \ref{sec: Arb GM} we study the model \eqref{eq: SEM} and in Section \ref{sec: MSM} we study the model \eqref{eq: MSM}. In Section \ref{sec: modifying MLMM} we show that the MLMM preserves independence and laws for sources of risk and we explain how the MLMM can be modified to affect the law of an additional source of risk. The proofs are collected in the remaining sections.

\section{Martingale property of  Stochastic Exponentials}\label{sec: MP SE}
Fix a finite time horizon \(0 < T < \infty\) 
and let \((\Omega, \mathcal{F}, \F, \p)\) be a complete filtered probability space with right-continuous and complete filtration \(\F = (\mathcal{F}_t)_{t \in [0, T]}\). Moreover, fix a state space \(I \triangleq (l, r)\) with  \(- \infty \leq l < r \leq + \infty\).

In the following two sections we provide conditions for the martingale and strict local martingale property of certain stochastic exponentials.
\subsection{The general case}\label{sec: GC}
Assume that \(S = (S_t)_{t \in [0, T]}\) is an \(I\)-valued It\^o process with deterministic initial value \(S_0 \in I\) and dynamics
\[
\dd S_t = b_t \dd t + \sigma_t \dd W_t, 
\]
where \(W = (W_t)_{t \in [0, T]}\) is a one-dimensional Brownian motion and \(b = (b_t)_{t \in [0, T]}\) and \(\sigma = (\sigma_t)_{t \in [0, T]}\) are real-valued progressively measurable processes. It is implicit that \(b\) and \(\sigma\) are such that the integrals are well-defined, i.e. a.s.
\[
\int_0^T \big(|b_s| + \sigma^2_s \big)\dd s < \infty. 
\]

We assume that \(\llambda  \otimes \p\)-a.e.
\(
\sigma \not = 0,
\)
which latter will correspond to the assumption that we consider an asset price process with a non-vanishing volatility.

Let \(c = (c_t)_{t \in [0, T]}\) be a real-valued progressively measurable process such that a.s. 
\[
\int_0^Tc^2_s \dd s < \infty, 
\]
and let \(N = (N_t)_{t \in [0, T]}\) be a local martingale such that a.s. \(\Delta N \geq - 1\) and \([N, W] = 0\).
We ask  for conditions under which the non-negative local martingale
\begin{align}\label{eq:Z}
Z \triangleq \mathcal{E} \Big(N +  \int_0^\cdot c_s \dd W_s \Big),
\end{align}
is a true or a strict local martingale. 
Here, \(\mathcal{E}\) denotes the stochastic exponential. 
The structure of \(Z\) is very important in mathematical finance, because \(Z\) is the prototype of a strict local martingale density, see Lemma \ref{lem: decom} below.

Let \(\underline{a}, \overline{a} \colon I \to (0, \infty), \underline{u}, \overline{u} \colon I \to \mathbb{R}\) and \(\zeta \colon [0, T] \to \mathbb{R}_+\) be Borel functions such that \[\frac{1}{\overline{a}} + \frac{1}{\underline{a}} + |\overline{u}| + |\underline{u}| \in L^1_\textup{loc} (I), \qquad \zeta \in L^1([0, T]).\]
In case \((f, g)\) is one of the pairs \((\underline{u}, \underline{a}), (\underline{u}, \overline{a}), \dots\) we set
\begin{align}\label{eq: v1}
v(f, g)(x) \triangleq \int_{x_0}^x \exp \Big( - \int_{x_0}^y 2 f(z) \dd z \Big) \int_{x_0}^y \frac{2 \exp \big(\int_{x_0}^u2 f(z) \dd z\big)}{g(u)} \dd u \dd y,\quad x \in I,
\end{align}
where \(x_0 \in I\) is fixed.
Let \(l_n \searrow l, r_n \nearrow r\) be sequences such that \(l < l_{n+1} < l_n < r_n < r_{n +1} <  r\). 

The first main result of this section is the following:
\begin{theorem}\label{theo: mart Ito}
		Assume the following:
		\begin{enumerate}
\item[\textup{(M1)}] The sequence		\[
		\tau_n \triangleq \inf (t \in [0, T] \colon S_t \not\in (l_n, r_n)), \quad n \in \mathbb{N},  
		\]
		is a localizing sequence for \(Z\), i.e. \(Z_{\cdot \wedge \tau_n}\) is a martingale for every \(n \in \mathbb{N}\). We use the convention that \(\inf (\emptyset) \triangleq \infty\).
		\item[\textup{(M2)}] For  \(\llambda \otimes \p\)-a.a. \((t, \omega) \in [0, T] \times \Omega\)
		\begin{align*}
		\sigma^2_t(\omega) &\leq \zeta (t) \overline{a}(S_t(\omega)),\\
		\underline{u}(S_t(\omega)) \sigma_t^2 (\omega) &\leq b_t(\omega) + c_t(\omega) \sigma_t(\omega),\\
		\overline{u}(S_t(\omega)) \sigma_t^2 (\omega) &\geq b_t(\omega) + c_t(\omega) \sigma_t(\omega).
		\end{align*}
		\item[\textup{(M3)}] \(
		\lim_{x \nearrow r} v(\overline{u}, \overline{a})(x) = \lim_{x \searrow l} v(\underline{u}, \overline{a}) (x) =  \infty.
		\)
		\end{enumerate}
		Then, \(Z\) is a martingale.
\end{theorem}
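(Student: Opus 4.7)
The plan is to deduce $\E[Z_T] = 1$ (which suffices, since $Z$ is a non-negative local martingale, hence a supermartingale) via a change of measure combined with a Lyapunov-type non-explosion estimate. By (M1), $Z_{\cdot \wedge \tau_n}$ is a true martingale, so I define $\Q_n$ on $\mathcal{F}_T$ through $d\Q_n/d\p = Z_{T \wedge \tau_n}$. Splitting over $\{\tau_n \leq T\}$ yields
\begin{equation*}
\E[Z_T] - 1 = \E[Z_T \indik_{\{\tau_n \leq T\}}] - \Q_n(\tau_n \leq T),
\end{equation*}
and since $S$ is $I$-valued, $\tau_n \nearrow \infty$ $\p$-a.s., so the first term on the right tends to zero. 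Hence it suffices to prove that $\Q_n(\tau_n \leq T) \to 0$. Because $[N, W] = 0$, Girsanov on $[0, \tau_n]$ shows that $\widetilde{W}_t \triangleq W_t - \int_0^t c_s \ds$ is a $\Q_n$-Brownian motion on this stochastic interval, so $dS_t = (b_t + c_t\sigma_t)\dt + \sigma_t\,d\widetilde{W}_t$ there.

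The heart of the proof is a Lyapunov function built from $v^+ \triangleq v(\overline{u}, \overline{a})$ and $v^- \triangleq v(\underline{u}, \overline{a})$. A direct computation from \eqref{eq: v1} shows that $v^\pm$ lie in $C^1(I)$ with absolutely continuous derivative, satisfy $v^\pm(x_0) = v^{\pm\prime}(x_0) = 0$, and solve a.e.\ the ODE $\tfrac{1}{2} \overline{a} v^{\pm\prime\prime} + f \overline{a} v^{\pm\prime} = 1$, with $f = \overline{u}$ (respectively $\underline{u}$); in particular, $v^+$ is non-decreasing on $[x_0, r)$ while $v^-$ is non-increasing on $(l, x_0]$. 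I then define
\begin{equation*}
V(x) \triangleq v^+(x)\,\indik_{[x_0, r)}(x) + v^-(x)\,\indik_{(l, x_0)}(x),
\end{equation*}
which is non-negative and, thanks to the matching $v^{\pm\prime\prime}(x_0) = 2/\overline{a}(x_0)$, has absolutely continuous first derivative on all of $I$. By (M3), $V(x) \to \infty$ as $x \downarrow l$ or $x \uparrow r$.

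The decisive estimate is that the $\Q_n$-drift of $V(S)$ is dominated by $\zeta$. On $\{S_t \geq x_0\}$, $v^{+\prime}(S_t) \geq 0$, whence (M2) and the ODE for $v^+$ give
\begin{equation*}
v^{+\prime}(S_t)(b_t + c_t\sigma_t) + \tfrac{1}{2} v^{+\prime\prime}(S_t)\sigma_t^2 \leq \sigma_t^2 \bigl[ \overline{u}(S_t) v^{+\prime}(S_t) + \tfrac{1}{2} v^{+\prime\prime}(S_t) \bigr] = \frac{\sigma_t^2}{\overline{a}(S_t)} \leq \zeta(t),
\end{equation*}
and on $\{S_t < x_0\}$ the sign flip $v^{-\prime}(S_t) \leq 0$ combined with the reverse inequality $b_t + c_t\sigma_t \geq \underline{u}(S_t)\sigma_t^2$ from (M2) yields the same upper bound through the ODE for $v^-$. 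Applying It\^o's formula to $V(S)$ on $[0, \tau_n]$ under $\Q_n$ and localising the local martingale part, I obtain
\begin{equation*}
\E_{\Q_n}\bigl[V(S_{T \wedge \tau_n})\bigr] \leq V(S_0) + \|\zeta\|_{L^1([0, T])}.
\end{equation*}
Since $S_{\tau_n} \in \{l_n, r_n\}$ on $\{\tau_n \leq T\}$ and $k_n \triangleq \min(V(l_n), V(r_n)) \to \infty$ by (M3), Markov's inequality delivers $\Q_n(\tau_n \leq T) \leq (V(S_0) + \|\zeta\|_{L^1})/k_n \to 0$, completing the argument. The main obstacle I anticipate is not the drift bound itself — short once the ODE identity for $v(f, g)$ is in hand — but the low regularity of $V$ at $x_0$ and the justification of It\^o's formula for a function with merely absolutely continuous derivative, which will require an approximation argument or appeal to a Meyer--It\^o-type formula, as well as a careful Girsanov argument on the stochastic interval $[0, \tau_n]$.
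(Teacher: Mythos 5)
Your proposal is correct and follows essentially the same route as the paper: define \(\Q^n\) via (M1), apply Girsanov using \([N,W]=0\), and reduce the martingale property to \(\Q^n(\tau_n\le T)\to 0\), which is established by a Lyapunov-type non-explosion estimate built from the function \(v\) — precisely the content of the paper's Theorems \ref{theo: NE 1} and \ref{theo: 1D Feller}. The only (harmless) difference is the choice of Lyapunov function: the paper glues the Karatzas--Shreve majorants \(U_1, U_2 \ge 1 + v\) solving \(\tfrac12\overline{a}U'' + u\,\overline{a}U' = U\) and uses the multiplicative drift bound \(\mathcal{L}V\le \zeta V\) (so that \(\exp(-\int_0^{\cdot}\zeta(s)\,\dd s)V(S)\) is a supermartingale), whereas you use \(v\) itself, which solves \(\tfrac12\overline{a}v'' + u\,\overline{a}v' = 1\), and obtain the additive bound \(\E^{\Q^n}[V(S_{T\wedge\tau_n})]\le V(S_0)+\|\zeta\|_{L^1}\); both variants close the argument, and your sign analysis at \(x_0\), the use of (M2) with \(v^{+\prime}\ge 0\) and \(v^{-\prime}\le 0\), and the appeal to a Meyer--It\^o formula for \(C^1\) functions with locally absolutely continuous derivative (together with the occupation-times argument that the exceptional null set of \(V''\) is not charged, which is where \(\sigma\neq 0\) \(\llambda\otimes\p\)-a.e.\ enters) are all sound.
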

The proof of this theorem is given in Section \ref{sec: pf}. 
\begin{remark}
(M3) is independent of the choice of \(x_0\), see \cite[Problem 5.5.28]{KaraShre}.
\end{remark}
Next, we provide a counterpart to Theorem \ref{theo: mart Ito}.
Let \(\mathscr{H}\) be the set of all Borel functions \(h \colon \mathbb{R}_+ \to \mathbb{R}_+\) which are starting at zero, are strictly increasing and satisfy
\[
\int_0^\varepsilon \frac{\dd z}{h^2(z)} = \infty \text{ for all } \varepsilon > 0,\]
and let \(\mathscr{K}\) be the set of all Borel functions \(\kappa \colon \mathbb{R}_+ \to \mathbb{R}_+\), which are starting at zero, are strictly increasing and concave and satisfy
\[
\int_0^\varepsilon \frac{\dd z}{\kappa (z)} = \infty \text{ for all } \varepsilon > 0.\]
In case \((f, g)\) is one of the pairs \((\underline{u}, \underline{a}), (\underline{u}, \overline{a}), \dots\) we say that \((f, g)\) satisfies the \emph{Yamada--Watanabe (YW) conditions}, if for every \(n \in \mathbb{N}\) there exist \(h_n \in \mathscr{H}\) and \(\kappa_n \in \mathscr{K}\) such that and for all \(x, y \in [l_n, r_n]\) 
\begin{align*}
|g^\frac{1}{2} (x) - g^\frac{1}{2} (y)| &\leq h_n (|x - y|),\\ |g(x) f(x) - g(y) f(y)| &\leq \kappa_n (|x - y|).
\end{align*}
The second main result of this section is the following:
\begin{theorem}\label{theo: general SLM}
	Assume one of the following conditions:
\begin{enumerate}
		\item[\textup{(SL1)}]
	The pair \((\underline{u}, \underline{a})\) satisfies the YW conditions, for  \(\llambda \otimes \p\)-a.a. \((t, \omega) \in [0, T] \times \Omega\)
	\begin{equation}\label{eq: bounderies}\begin{split}
	\underline{a}(S_t(\omega)) &\leq \sigma^2_t(\omega),\\
	\underline{u}(S_t(\omega)) \sigma_t^2 (\omega) &\leq b_t(\omega) + c_t(\omega) \sigma_t(\omega),
	\end{split}\end{equation}
	and \(
	\lim_{x \nearrow r} v(\underline{u}, \underline{a})(x) < \infty.
	\)
	\item[\textup{(SL2)}]
	The pair \((\overline{u}, \underline{a})\) satisfies the YW conditions,
	for  \(\llambda \otimes \p\)-a.a. \((t, \omega) \in [0, T] \times \Omega\)
	\begin{align*}
	\underline{a}(S_t(\omega)) &\leq \sigma^2_t(\omega),\\
	\overline{u}(S_t(\omega)) \sigma_t^2 (\omega) &\geq b_t(\omega) + c_t(\omega) \sigma_t(\omega),
	\end{align*}
	and \(\lim_{x \searrow l} v(\overline{u}, \underline{a})(x) < \infty.\)
\end{enumerate}
	Then, \(Z\) is a strict local martingale.
\end{theorem}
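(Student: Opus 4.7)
The plan is to argue by contradiction. Suppose $Z$ were a true martingale on $[0,T]$, so that $\dd\Q := Z_T\,\dd\p$ defines a probability measure on $\mathcal{F}_T$ with $\Q \sim \p$; in particular $S$ remains $I$-valued $\Q$-a.s. Because $[N,W] = 0$, Girsanov's theorem identifies $W^{\Q}_t := W_t - \int_0^t c_s\,\dd s$ as a $\Q$-Brownian motion and yields the decomposition
\[
\dd S_t = (b_t + c_t\sigma_t)\,\dd t + \sigma_t\,\dd W^{\Q}_t.
\]
The task then becomes to show that the lower bounds on $\sigma^2$ and on $b + c\sigma$ in (SL1) force $S$ under $\Q$ to leave $I$ with positive probability, contradicting $\Q \sim \p$.

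Focusing on (SL1), I would introduce on an auxiliary Brownian space the one-dimensional Yamada--Watanabe diffusion
\[
\dd Y_t = \underline{u}(Y_t)\,\underline{a}(Y_t)\,\dd t + \sqrt{\underline{a}(Y_t)}\,\dd B_t, \qquad Y_0 = S_0,
\]
which admits a pathwise unique (hence strong) solution up to its explosion time $\tau^Y_e$ thanks to the YW moduli $h_n, \kappa_n$ imposed on $(\underline{u}, \underline{a})$. A direct computation identifies $v(\underline{u},\underline{a})$ with the Feller test function of $Y$ at the right endpoint $r$, so the hypothesis $\lim_{x\nearrow r} v(\underline{u},\underline{a})(x) < \infty$ gives, via Feller's test together with the full-support and strong-Markov properties of nondegenerate one-dimensional diffusions, that $\p(\tau^Y_e \leq T,\; Y_{\tau^Y_e-} = r) > 0$.

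To transport this exit event to $S$ under $\Q$, I would apply a time change: set $\rho(t) := \int_0^t \sigma^2_s/\underline{a}(S_s)\,\dd s$, with inverse $\tau$. A Dambis--Dubins--Schwarz type calculation shows that $\widetilde{S}_t := S_{\tau(t)}$ satisfies on $[0, \rho(T))$ an SDE of the form
\[
\dd\widetilde{S}_t = \widetilde{\mu}_t\,\dd t + \sqrt{\underline{a}(\widetilde{S}_t)}\,\dd\widetilde{W}_t,
\]
driven by a $\Q$-Brownian motion $\widetilde{W}$, and (SL1) translates into $\widetilde{\mu}_t \geq \underline{u}(\widetilde{S}_t)\,\underline{a}(\widetilde{S}_t)$. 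Since the diffusion coefficients coincide and share the YW modulus $h_n$, the one-dimensional comparison theorem (via the Yamada--Watanabe squared-difference trick) yields $\widetilde{S}_t \geq Y_t$ on the common interval of existence, while $\sigma^2 \geq \underline{a}(S)$ guarantees $\rho(T) \geq T$, so this interval contains $[0,T]$. Consequently $Y$ reaching $r$ on $[0,T]$ with positive probability forces $\widetilde{S}$, and hence $S$, to exit $I$ with positive $\Q$-probability, which is the sought contradiction. The proof under (SL2) is the mirror image, using $v(\overline{u},\underline{a})(l+) < \infty$ and the reflection $x \mapsto -x$ to reduce to the same comparison at the lower boundary.

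The principal obstacle is making the comparison step rigorous: the drift $\widetilde{\mu}_t$ is a generic progressively measurable process rather than a function of $\widetilde{S}_t$, so I cannot invoke the standard SDE comparison theorem directly and instead have to run a Yamada--Watanabe type estimate with random (dominating) drift. The $\kappa_n$ modulus on the product $\underline{u}\,\underline{a}$ is precisely what makes this argument go through. A related subtlety is the upgrade from Feller's ``positive probability of exit in finite time'' to ``positive probability of exit by time $T$''; this uses the full-support and strong-Markov properties of the nondegenerate diffusion $Y$, which I would quote rather than re-prove.
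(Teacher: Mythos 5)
Your proposal is correct and follows essentially the same route as the paper: argue by contradiction, pass to $\Q$ via Girsanov (using $[N,W]=0$), time-change $S$ so its diffusion coefficient becomes $\underline{a}^{1/2}$, compare pathwise from below with the Yamada--Watanabe diffusion $\dd Y_t=\underline{u}(Y_t)\underline{a}(Y_t)\,\dd t+\underline{a}^{1/2}(Y_t)\,\dd B_t$ via a Tanaka/Bihari estimate with random dominating drift, and invoke the fact that a regular diffusion with $v(\underline{u},\underline{a})(r-)<\infty$ reaches $r$ arbitrarily fast with positive probability (the paper cites Bruggeman--Ruf for exactly the ``by time $T$'' upgrade you flag). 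The two technical obstacles you single out — the comparison against a non-Markovian drift controlled only through $\kappa_n$, and the finite-horizon exit probability — are precisely the ones the paper resolves in Lemma 5.6 and via the cited hitting-time result.
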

The proof of this theorem is given in Section \ref{sec: pf}. 
In Section \ref{sec: dis mart theo} below we comment on the assumptions of Theorems \ref{theo: mart Ito} and \ref{theo: general SLM} and related literature.
\subsection{Markov switching case}\label{sec: MG MS}
In this section we consider a special case of the setting from Section \ref{sec: GC} and assume that \(S\) is a switching diffusion. 
Before we introduce the setting in detail, we clarify terminology: A process is called a \emph{Feller--Markov chain} if it is a Markov chain which is a Feller process in the sense that the corresponding transition semigroup is a self-map on the space of continuous functions vanishing  at infinity. For conditions implying that a Markov chain is Feller--Markov we refer to \cite{anderson2012continuous}. It is also important to stress that whenever we have fixed a filtration and a Markov chain, we presume that the Markov chain is Markovian for the given filtration.
All non-explained terminology for Markov chains, such as \emph{irreducible, recurrent,} etc., can be found in \cite{norris_1997}. 

We assume that \(S = (S_t)_{t \in [0, T]}\) is an \(I\)-valued It\^o process with deterministic initial value \(S_0 \in I\) and dynamics
\begin{align}\label{eq: SD}
\dd S_t = b(S_t, \xi_t) \dd t + \sigma (S_t, \xi_t) \dd W_t,
\end{align}
where \(W = (W_t)_{t \in [0, T]}\) is a one-dimensional Brownian motion, \(\xi = (\xi_t)_{t \in [0, T]}\)  is a continuous-time irreducible Feller--Markov chain with state space \(J \triangleq \{1, \dots, N\}, 1 \leq N \leq \infty,\) and deterministic initial value \(j_0 \in J\), and \(b \colon I \times J \to \mathbb{R}\) and \(\sigma \colon I \times J \to \mathbb{R}\backslash \{0\}\)  are Borel functions such that
\begin{align} \label{eq: int aspp}
\frac{1 + |b(\cdot, j)|}{\sigma^2(\cdot, j)} \in L^1_\textup{loc}(I) \text{ for all } j \in J.
\end{align}
It is implicit that the integrals in \eqref{eq: SD} are well-defined.
We allow \(N = \infty\) in which case \(J= \mathbb{N}\). A process of the type \eqref{eq: SD} is called a \emph{switching diffusion} and the elements of \(J\)  are called \emph{regimes}.
 
Let \(c \colon I \times J \to \mathbb{R}\) be a Borel function such that 
\begin{align}\label{eq: loc inte assp}
\frac{c(\cdot, j)}{\sigma (\cdot, j)} \in L^2_\textup{loc}(I) \text{ for all } j \in J.
\end{align} 
\begin{lemma}\label{lem: c finite}
Almost surely \(\int_0^T c^2(S_s, \xi_s) \dd s < \infty\).
\end{lemma}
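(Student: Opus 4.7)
The plan is to split the integral by regime, apply the occupation-times formula, and exploit that both the range of $S$ and the set of regimes visited by $\xi$ are almost surely relatively compact.

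First, I would observe that since $\xi$ is defined as an honest (hence non-explosive) continuous-time Markov chain on $[0,T]$, almost every path of $\xi$ has only finitely many jumps on $[0,T]$, so there exist random $K \in \mathbb{N}$ and $j_1, \dots, j_K \in J$ with $\xi([0,T]) = \{j_1, \dots, j_K\}$ almost surely. Simultaneously, since $S$ is continuous and $I$-valued, there exists a random compact interval $[\alpha, \beta] \subset I$ with $S([0,T]) \subset [\alpha, \beta]$ almost surely.

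Using $\dd \langle S \rangle_s = \sigma^2(S_s, \xi_s)\, \dd s$ together with $\sigma(\cdot, j) \neq 0$, I would then decompose
\begin{align*}
\int_0^T c^2(S_s, \xi_s)\, \dd s
&= \sum_{i=1}^K \int_0^T \frac{c^2(S_s, j_i)}{\sigma^2(S_s, j_i)}\, \mathbf{1}_{\{\xi_s = j_i\}}\, \dd \langle S \rangle_s\\
&\le \sum_{i=1}^K \int_0^T \frac{c^2(S_s, j_i)}{\sigma^2(S_s, j_i)}\, \dd \langle S \rangle_s
\end{align*}
and apply the occupation-times formula for continuous semimartingales to each summand, converting it to $\sum_{i=1}^K \int_{\mathbb{R}} (c^2(a, j_i)/\sigma^2(a, j_i))\, L_T^a(S)\, \dd a$, where $L^a(S)$ denotes the semimartingale local time of $S$ at level $a$. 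Since $L_T^a(S) = 0$ for $a \notin [\alpha, \beta]$ and $\sup_{a \in \mathbb{R}} L_T^a(S) < \infty$ almost surely, each summand is bounded above by $(\sup_a L_T^a(S))\, \|c(\cdot, j_i)/\sigma(\cdot, j_i)\|_{L^2([\alpha, \beta])}^2$, which is almost surely finite by assumption \eqref{eq: loc inte assp}. A finite sum of almost-surely finite quantities is almost-surely finite.

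The most subtle point is the path-wise finiteness of $\xi([0,T])$ when $N = \infty$, which rests on the non-explosion implicit in the hypothesis that $\xi$ is a Feller--Markov process on $[0,T]$; the rest is essentially a deterministic argument carried out on the full-measure event on which the above compactness statements hold.
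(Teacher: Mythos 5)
Your proof is correct and follows essentially the same route as the paper's: decompose over the (a.s. finitely many) regimes visited by \(\xi\), rewrite \(\dd s\) via \(\dd [S,S]_s/\sigma^2(S_s,\xi_s)\), apply the occupation times formula, and bound by the supremum of the local time over the compact range of \(S\) times the local \(L^2\)-norms from \eqref{eq: loc inte assp}. The only difference is cosmetic: you make explicit the non-explosion point for \(N=\infty\), which the paper passes over with the remark that \(\xi\) makes only finitely many jumps on \([0,T]\).
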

\begin{proof}
Set \(F \triangleq \{\xi_s \colon s \in [0, T]\}\),
\(m \triangleq \min_{s \in [0, T]} S_s\) and \(M \triangleq \max_{s \in [0, T]} S_s\).
Using that \(\xi\) only makes finitely many jumps in the finite time interval \([0, T]\), the occupation times formula for continuous semimartingales and \eqref{eq: loc inte assp}, we obtain a.s.
\begin{align*}
\int_0^T c^2 (S_s, \xi_s) \dd s &= \int_0^T \Big( \frac{c (S_s, \xi_s)}{\sigma (S_s, \xi_s)} \Big)^2 \dd  [S, S]_s \\&\leq \sum_{j \in F} \int_0^T \Big( \frac{c (S_s, j)}{\sigma (S_s, j)} \Big)^2 \dd  [S, S]_s 
\\&= \sum_{j \in F} \int_{m }^{M} \Big( \frac{c (x, j)}{\sigma (x, j)} \Big)^2 2L^S_T(x) \dd x 
\\&\leq  \max_{y \in [m, M]} 2L_T^S(y)  \sum_{j \in F} \int_{m}^{M}  \Big( \frac{c (x, j)}{\sigma (x, j)} \Big)^2  \dd x  < \infty,
\end{align*}
where \(L^S\) denotes the local time of \(S\). The lemma is proven.
\end{proof}

We are interested in the martingale property of the non-negative local martingale
\[
Z \triangleq \mathcal{E} \Big(\int_0^\cdot c(S_s, \xi_s) \dd W_s \Big).
\]
This definition coincides with \eqref{eq: Z} for the choices \(c = c(S, \xi)\) and \(N = 0\).

Before we state the main result of this section, we fix some notation.
Because \(L^2_\textup{loc} (I) \subset L^1_\textup{loc} (I)\), \eqref{eq: int aspp} and \eqref{eq: loc inte assp} imply that
\[
\frac{|b (\cdot, j) + c (\cdot, j) \sigma (\cdot, j)|}{\sigma^2(\cdot, j)} \in L^1_\textup{loc}(I) \text{ for all } j \in J.
\] 
Thus, we can set
\[
v(x, j) \triangleq \int_{x_0}^x \exp \Big( - \int_{x_0}^y  \frac{2(b + c \sigma)(z, j)}{\sigma^2(z, j)}  \dd z \Big) \int_{x_0}^y \frac{2\exp \big(\int_{x_0}^s \frac{2(b + c \sigma)(z, j)}{\sigma^2(z, j)} \dd z\big)}{\sigma^2(s, j)} \dd s \dd y
\]
for \((x, j) \in I \times J\) and a fixed \(x_0 \in I\).

We say that  \emph{\(\sigma\) satisfies the Engelbert--Schmidt (ES) conditions for \(j \in J\)} if one of the following holds:
\begin{enumerate}
\item[(ES1)] For every compact set \(K \subset I\) there are Borel functions \(f \colon K \to [0, \infty]\) and \(h \colon \mathbb{R} \to [0, \infty]\) and a constant \(c > 0\) such that the following properties are satisfied:
\begin{enumerate}
\item[(i)] \(\frac{f}{\sigma^2 (\cdot, j)} \in L^1 (K)\).
\item[(ii)] For every neighborhood \(U\) of the origin
\[
\int_U \frac{\dd y}{h(y)} = \infty.
\]
\item[(iii)] For all \(x, x + y \in K, y \in (- c, c)\)
\[|\sigma (x  + y, j) - \sigma (x, j)|^2 \leq f(x) h(y).\]
\end{enumerate}
\item[(ES2)] For every compact set \(K \subset I\) there are Borel functions \(g \colon K \to \mathbb{R}\) and \(h \colon \mathbb{R} \to [0, \infty]\) and a constant \(c > 0\) such that the following properties are satisfied:
\begin{enumerate}
	\item[(i)] \(g\) is increasing.
	\item[(ii)] For every neighborhood \(U\) of the origin
	\[\int_U \frac{\dd y}{h(y)} = \infty.\]
	\item[(iii)] For all \(x, x + y \in K, y \in (- c, c)\backslash \{0\}\)
	\[|\sigma (x  + y, j) - \sigma (x, j)|^2 \leq h(y) \frac{|g(x + y) - g(x)|}{|y|}.\]
	\item[(iv)] \(\inf_{x \in K} \sigma (x, j) > 0\).
\end{enumerate}
\end{enumerate}

We say that the Markov chain \(\xi\) is recurrent if it is a recurrent Markov chain when extended to the infinite time interval \(\mathbb{R}_+\).

The following theorem gives an almost complete answer to the question when \(Z\) is a true or strict local martingale. A proof is given in Section \ref{sec: pf MS}. 
  \begin{theorem}\phantomsection\label{theo: mart MS}
 	\begin{enumerate}
 		\item[\textup{(i)}] Suppose that \(c\) is  bounded on compact subsets of \(I \times J\), 
 		that \(\sigma\) satisfies the ES conditions for all \(j \in J\) and that 
 		\begin{align}\label{eq: MS M integral test}
 		\lim_{x \nearrow r} v(x, j) = \lim_{x \searrow l} v(x, j) = \infty \text{ for all } j \in J.
 		\end{align}
 		Then, \(Z\) is a martingale.
 		\item[\textup{(ii)}] Assume that \(\xi\) is recurrent and that there exists a \(j \in J\) such that \(\sigma\) satisfies the ES conditions for \(j\)
 		and
 		\begin{align}\label{eq: explosive regime cond}
 		\lim_{x \nearrow r} v(x, j) < \infty \text{ or } \lim_{x \searrow l} v(x, j) < \infty.
 		\end{align}
 		Then, \(Z\) is a strict local martingale.
 	\end{enumerate}
 \end{theorem}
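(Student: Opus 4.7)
The plan is to adapt the local-uniqueness-plus-non-explosion methodology of Cheridito--Filipović--Yor and Kallsen--Muhle-Karbe to the switching setting, using a regime-wise Feller test to read off explosion of a candidate measure. First, introduce the localizing sequence
\[
\tau_n \triangleq \inf\set{t \in [0, T] : S_t \notin (l_n, r_n)} \wedge \theta_n,
\]
where \(\theta_n\) is the first exit time of \(\xi\) from a finite set \(F_n \uparrow J\). Under either hypothesis the stopped exponential \(Z^{\tau_n}\) is a true \(\p\)-martingale on \([0,T]\), so one can define consistent measures \(\Q_n \sim \p\) on \(\mathcal{F}_{T \wedge \tau_n}\) by \(d\Q_n/d\p = Z_{T \wedge \tau_n}\). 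By Girsanov, under \(\Q_n\) the process \(\widetilde{W}_\cdot \triangleq W_\cdot - \int_0^{\cdot \wedge \tau_n} c(S_s, \xi_s)\,ds\) is a Brownian motion and, on \([0, T \wedge \tau_n]\),
\[
dS_t = \big(b(S_t,\xi_t) + c(S_t,\xi_t)\sigma(S_t,\xi_t)\big) dt + \sigma(S_t,\xi_t)\,d\widetilde{W}_t,
\]
while the law of \(\xi\) is unchanged, since \(c\) shifts only the \(W\)-component (cf.\ Section \ref{sec: modifying MLMM}).

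Second, I would establish local uniqueness of this two-component martingale problem in the Jacod--Shiryaev sense. The ES conditions on \(\sigma(\cdot, j)\), combined with the local integrability of \((b+c\sigma)/\sigma^2\), yield unique-in-law solutions of each single-regime SDE via Engelbert--Schmidt theory, and the generator of \(\xi\) uniquely determines its law; patching these at the jump times of \(\xi\) gives uniqueness of the joint martingale problem. By the Jacod--M\'emin / Kallsen--Muhle-Karbe criterion, the family \(\set{\Q_n}\) then extends consistently to a probability measure \(\Q\) on an enlarged canonical space carrying an explosion time \(\tau_\infty \triangleq \lim_n \tau_n\), and \(Z\) is a \(\p\)-martingale on \([0,T]\) if and only if \(\Q(\tau_\infty > T) = 1\).

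For part (i), Feller's test applied regime by regime (to the diffusion with drift \(b+c\sigma\) and volatility \(\sigma\)) identifies (\ref{eq: MS M integral test}) as non-explosion on \(\mathbb{R}_+\) in every regime \(j \in J\). On \([0,T]\) the non-explosive Feller--Markov chain \(\xi\) visits only finitely many states almost surely, so the \(\Q\)-trajectory of \(S\) is a concatenation of finitely many non-explosive diffusion segments and cannot reach \(\partial I\); hence \(\Q(\tau_\infty > T) = 1\) and \(Z\) is a true martingale. For part (ii), Feller's test yields \(\delta > 0\) and \(p > 0\) such that the regime-\(j\) diffusion started at \(S_0\) reaches \(\partial I\) within time \(\delta\) with probability at least \(p\). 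Irreducibility combined with recurrence extended to \(\mathbb{R}_+\) guarantees that with positive probability \(\xi\) enters the regime \(j\) at some \(s \in [0, T-\delta]\) and remains there throughout \([s, s+\delta]\); composing with the explosion estimate via the strong Markov property of \((S, \xi)\) under \(\Q\) gives \(\Q(\tau_\infty \leq T) > 0\), i.e.\ \(Z\) is a strict local martingale.

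The main obstacle is the technical core of the first two steps: proving local uniqueness of the switching-diffusion martingale problem under the relatively weak ES hypotheses, so that the Girsanov construction extends consistently past every \(\tau_n\) and the limit \(\Q\) is well defined on the explosion-enlarged space, and then rigorously translating regime-wise Feller divergence into a statement about the joint explosion time \(\tau_\infty\). In part (ii) the interplay between the recurrence of \(\xi\) and the short-time explosion probability in the explosive regime---needed to squeeze both events into the finite horizon \([0,T]\)---is the most delicate ingredient.
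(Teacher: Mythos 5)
Your proposal follows essentially the same route as the paper: joint localization in \((S,\xi)\), a Girsanov change of measure that leaves the law of \(\xi\) untouched (the independence-preservation result of Section \ref{sec: modifying MLMM}), local uniqueness of the switching SDE obtained by patching regime-wise pathwise-unique Engelbert--Schmidt solutions at the jump times of \(\xi\), a regime-wise Feller test for (non-)explosion, and for (ii) the combination of recurrence with the fact that the explosive regime reaches the boundary arbitrarily fast with positive probability. The only presentational difference is that the paper runs part (ii) as a direct contradiction (assuming \(Z\) is a martingale, so that every stopped version is automatically a martingale), which sidesteps your unproven claim that \((\tau_n)\) localizes \(Z\) under the weaker hypotheses of (ii), where \(c\) need not be bounded on compacts.
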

\begin{remark} The proof of Theorem \ref{theo: mart MS} (ii) is based on a contradiction argument. In case \eqref{eq: explosive regime cond} holds and \(Z\) is a martingale there exists an \(I\)-valued switching diffusion with an explosive regime \(j\). The recurrence of \(\xi\) ensures that this switching diffusion reaches the regime \(j\),  which leads to a contradiction. In case the initial regime \(j_0\) is already explosive, more precisely if \(\sigma\) satisfies the  ES conditions for \(j_0\) and \(\lim_{x \nearrow r} v(x, j_0) < \infty\) or \(\lim_{x \searrow l} v(x, j_0) < \infty\), the recurrence of \(\xi\) is not needed.
\end{remark}
Noting that \(\xi\) is recurrent in case \(N< \infty\), we obtain the following: 
\begin{corollary}\label{coro: suff nece MS}
	Suppose that \(c\) is bounded on compact subsets of \(I \times J\), that \(\sigma\) satisfies the ES conditions for all \(j \in J\) and that \(N < \infty\). Then,
		\(Z\) is a martingale if and only if \eqref{eq: MS M integral test} holds.
\end{corollary}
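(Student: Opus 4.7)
The plan is to read both implications off of Theorem~\ref{theo: mart MS} directly, the only additional input being the classical fact that a finite-state irreducible continuous-time Markov chain is recurrent.

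For the sufficient direction, note that under the standing assumptions of the corollary ($c$ bounded on compact subsets of $I \times J$ and $\sigma$ satisfying the ES conditions for every $j \in J$), if the integral condition \eqref{eq: MS M integral test} additionally holds, then the hypotheses of Theorem~\ref{theo: mart MS}(i) are met verbatim and that theorem yields that $Z$ is a martingale.

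For the converse I would argue by contraposition. Suppose \eqref{eq: MS M integral test} fails; then there exists some $j \in J$ for which at least one of
\[
\lim_{x \nearrow r} v(x, j) < \infty, \qquad \lim_{x \searrow l} v(x, j) < \infty
\]
holds. Since $\sigma$ is assumed to satisfy the ES conditions at every regime, in particular at this $j$, Theorem~\ref{theo: mart MS}(ii) becomes applicable provided $\xi$ is recurrent, and then forces $Z$ to be a strict local martingale (hence not a martingale), giving the contrapositive.

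The only point left to verify is therefore that $\xi$ is recurrent when $N < \infty$. Because $\xi$ is by hypothesis an irreducible continuous-time Markov chain and $J = \{1, \ldots, N\}$ is finite, the embedded discrete-time jump chain is an irreducible Markov chain on a finite state space, hence recurrent; combined with the a.s.\ finiteness of each holding time this implies that, when extended to $\mathbb{R}_+$, $\xi$ visits every state infinitely often almost surely, which is exactly the notion of recurrence required by Theorem~\ref{theo: mart MS}(ii). I do not anticipate any genuine obstacle here, as the corollary is essentially a bookkeeping consequence of Theorem~\ref{theo: mart MS} specialized to the finite-regime case.
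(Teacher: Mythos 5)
Your argument is correct and coincides with the paper's own proof, which likewise deduces recurrence of \(\xi\) from irreducibility and \(N<\infty\) (citing \cite[Theorems 1.5.6, 3.4.1]{norris_1997}) and then reads both implications off Theorem \ref{theo: mart MS}. The only cosmetic difference is that you spell out the embedded-jump-chain argument for recurrence where the paper just cites the reference.
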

\begin{proof}
If \(N < \infty\), the  recurrence of \(\xi\) follows from \cite[Theorems 1.5.6, 3.4.1]{norris_1997}. Now, the  claim is due to Theorem \ref{theo: mart MS}.
\end{proof}
In financial applications, \(N\) can be interpreted as the number of states of the business cycle and therefore \(N < \infty\) is a reasonable assumption.

\subsection{Comments on related literature}\label{sec: dis mart theo}
The martingale property of non-negative local martingales is under frequent investigation. We mention a few related works: A general semimartingale setting has been considered in \cite{criens2018EJP, J79,  JS} and a diffusion and/or  jump-diffusion setting has been studied in  \cite{CFY,KMK, LS, MU(2012), RufSDE,Sin}.

To the best of our knowledge, for a general It\^o process or Markov switching setting Theorems \ref{theo: mart Ito}, \ref{theo: general SLM} and \ref{theo: mart MS} are the first results which provide integral tests for the martingale property of certain stochastic exponentials.

For the diffusion case 
\[
\dd S_t = b(S_t) \dd t + \sigma (S_t) \dd W_t,
\]
a complete characterization of the martingale property of the non-negative local martingale
\[
Z = \mathcal{E} \Big( \int_0^\cdot c(S_s)\dd W_s \Big)
\]
has been proven in \cite{MU(2012)} under local integrability conditions. We stress that in \cite{MU(2012)} the diffusion \(S\) is allowed to explode, which is a feature not included in our framework.
Provided \(S\) is non-explosive, the main theorem of \cite{MU(2012)} shows that \(Z\) is a martingale if and only if 
\[
\lim_{x \nearrow r} v(u, \sigma^2) = \lim_{x  \searrow l} v(u, \sigma^2) = \infty, 
\]
where
\(
u \triangleq \frac{b + c \sigma}{\sigma^2}
\)
and \(v\) is defined as in \eqref{eq: v1}.
The same condition is implied by either Theorems \ref{theo: mart Ito} and \ref{theo: general SLM}, or Corollary \ref{coro: suff nece MS}. 
For the strict local martingale property we require that \(\sigma\) satisfies the ES conditions, which are not imposed in \cite{MU(2012)}. 

The key idea underlying Theorems \ref{theo: mart Ito}, \ref{theo: general SLM} and \ref{theo: mart MS} is a local change of measure combined with either a Lyapunov-type argument (in case of Theorem \ref{theo: mart Ito}), a comparison with one-dimensional diffusions (in case of Theorem \ref{theo: general SLM}) or a local uniqueness property (in case of Theorem \ref{theo: mart MS}).

The idea of using a local change of measure is not new. It has for instance been used in \cite{CFY,criens17b, criens2018EJP,RufSDE,Sin}. 
The Lyapunov and comparison arguments were inspired by \cite{criens17b}, where a multi-dimensional diffusions has been studied. To use the ideas in our general setting, we prove a new Lyapunov condition for It\^o processes and we transport the comparison arguments from a multi-dimensional diffusion setting to a one-dimensional It\^o process framework, see Section \ref{sec: pf} below.
The idea of relating local uniqueness to the martingale property of a stochastic exponential traces back to \cite{JM76,KLS-LACOM1,KLS-LACOM2}. More recently, the method was used in \cite{CFY, criens17b, KMK, Sin}.
Although the terminology suggests the converse, local uniqueness is a strong version of uniqueness in law. 
In the proof of Theorem \ref{theo: mart MS} we deduce local uniqueness from pathwise uniqueness by a Yamada--Watanabe-type argument. 

\section{On the absence and existence of arbitrage}
Let \(0< T < \infty\) be a finite time horizon and let \((\Omega, \mathcal{F}, \F, \p)\) be a complete filtered probability space with right-continuous and complete filtration \(\F = (\mathcal{F}_t)_{t \in [0, T]}\).
We consider a financial market consisting of one risky asset with discounted price process \(P = (P_t)_{t \in  [0, T]}\), which is assumed to be a positive continuous semimartingale with deterministic initial value.

	Recall the following classical terminology: A probability measure \(\Q\) is called an \emph{equivalent (local) martingale measure (E(L)MM)} if \(\Q \sim \p\) and \(P\) is a (local) \(\Q\)-martingale. A strictly positive local \(\p\)-martingale \(Z = (Z_t)_{t \in [0, T]}\) with \(Z_0 = 1\) is called a \emph{strict (local) martingale density (S(L)MD)} if \(ZP\) is a (local) \(\p\)-martingale.
	
In the following we study existence and non-existence of SMDs, ELMMs and EMMs in case \(P\) is either the stochastic exponential of an It\^o process or a positive switching diffusion. In case \(P\) is a positive It\^o process or the stochastic exponential of a real-valued switching diffusion similar results can be deduced from the martingale criteria in Section \ref{sec: MP SE}.
\subsection{Stochastic exponential model}\label{sec: Arb GM}
Suppose that \(P\) is the stochastic exponential of the real-valued It\^o process \(S = (S_t)_{t \in [0, T]}\) with deterministic initial value \(S_0 \in  \mathbb{R}\) and dynamics
\begin{align}\label{eq: S ito}
\dd S_t = b_t \dd t + \sigma_t \dd W_t, 
\end{align}
where \(W = (W_t)_{t \in [0, T]}\) is a one-dimensional Brownian motion and \(b = (b_t)_{t \in [0, T]}\) and \(\sigma = (\sigma_t)_{t \in [0, T]}\) are real-valued progressively measurable processes such that the stochastic integrals in \eqref{eq: S ito} are well-defined.
We assume that  \(\llambda  \otimes \p\)-a.e.
\(
\sigma \not = 0,
\)
which corresponds to the assumption that \(P\) has a non-vanishing volatility. 

\subsubsection{Absence of arbitrage}
In the following we study when a SMD,  ELMM or EMM exists.
As a minimal condition we assume that (NUPBR) holds. This is equivalent to the existence of a \emph{market price of risk \(\theta = (\theta_t)_{t \in [0, T]}\)}, i.e. a real-valued progressively measurable process such that a.s.
\[
\int_0^T \theta^2_s \dd s < \infty 
\]
and
\begin{align} \label{eq: def MPR}
\llambda \otimes \p\text{-a.e.} b - \theta \sigma = 0. 
\end{align}
We define the continuous local martingale 
\begin{align}\label{eq: Z}
Z \triangleq \mathcal{E} \Big(- \int_0^\cdot  \theta_s \dd W_s \Big).
\end{align}
Integration by parts and \eqref{eq: def MPR} yield that 
\begin{align}\label{eq: prod rule LM}
\dd Z_t P_t 
= Z_t P_t (\sigma_t - \theta_t)\dd W_t,
\end{align}
which shows that \(ZP\) is a local martingale or, equivalently, that \(Z\) is a SLMD. We observe the following:
\begin{enumerate}
	\item[\textup{(O1)}] If \(ZP\) is a martingale, then \(Z\) is a SMD by definition.
	\item[\textup{(O2)}] If \(Z\) is a martingale, we can define a probability measure \(\Q\) by the Radon--Nikodym derivative \(\frac{\dd \Q}{\dd \p} \triangleq Z_T\) and \(\Q\) is an ELMM by \eqref{eq: prod rule LM} and \cite[Proposition III.3.8]{JS}.
	\item[\textup{(O3)}] If \(ZP\) and \(Z\) are martingales, then \(\Q\) as defined in (O2) is an EMM by \cite[Proposition III.3.8]{JS}. 
\end{enumerate} 
In summary, to prove the existence of a SMD, ELMM and EMM we have to identify conditions for the martingale property of \(ZP\) and \(Z\).
The following is the main result of this section:

\begin{theorem}\label{theo: main1 SEM}
Suppose the following:
\begin{enumerate}
\item[\textup{(L1)}] The sequence \begin{align}\label{eq: loc seq L1}
\tau_n \triangleq \inf (t \in [0, T] \colon |S_t| \geq n ), \quad n \in \mathbb{N}, 
\end{align} is a localizing sequence for \(Z\). 
\item[\textup{(L2)}]
	There are Borel functions \(\overline{a} \colon \mathbb{R} \to (0, \infty)\) and \(\zeta \colon [0, T] \to \mathbb{R}_+\) such that 
	\begin{align*}
	\frac{1}{\overline{a}} \in L^1_\textup{loc}(\mathbb{R}), \quad \zeta \in L^1([0, T]),
	\end{align*}
	and \(\sigma^2_t(\omega) \leq \zeta(t) \overline{a} (S_t (\omega))\) for \(\llambda \otimes \p\)-a.a. \((t, \omega) \in [0, T] \times \Omega\).
\end{enumerate}
Then, \(Z\) is a martingale, \(\Q\) defined by \(\frac{\dd \Q}{\dd \p} \triangleq Z_T\) is an ELMM and 
	\begin{align}\label{eq:def B} 
	B = W + \int_0^\cdot \theta_t \dd t
	\end{align}
	is a \(\Q\)-Brownian motion such that 
	\[
	S = S_0 + \int_0^\cdot \sigma_t \dd B_t.
	\]
	If in addition
	\begin{align} \label{eq: cond SMD}
	\int_1^\infty \frac{\dd z}{\overline{a}(z)}  = \infty, 
	\end{align}
	then \(\Q\) is an EMM and \(Z\) is a SMD.
\end{theorem}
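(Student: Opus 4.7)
The plan is to invoke Theorem \ref{theo: mart Ito} twice. First, I use it to show that $Z$ itself is a $\p$-martingale; together with observation (O2) this yields the ELMM claim and the Girsanov identification of $B$. Second, under the additional assumption \eqref{eq: cond SMD}, I apply it to $ZP$, which together with (O1) and (O3) upgrades $Z$ to an SMD and $\Q$ to an EMM.

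For the first application I take $I = \mathbb{R}$, $c = -\theta$, $N \equiv 0$, $\underline{u} \equiv \overline{u} \equiv 0$, $\underline{a} = \overline{a}$ and $\zeta$ as in (L2), with $l_n = -n$, $r_n = n$. Condition (M1) is exactly (L1); the diffusion bound in (M2) is (L2); and the two drift comparisons in (M2) both reduce via the market-price-of-risk identity \eqref{eq: def MPR} to $b - \theta \sigma = 0$. With $\underline{u} = \overline{u} = 0$ the formula \eqref{eq: v1} collapses to $v(0, \overline{a})(x) = \int_{x_0}^x \int_{x_0}^y 2\,du/\overline{a}(u)\,dy$, and a Fubini rewriting gives $v(0, \overline{a})(x) = 2\int_{x_0}^x (x-u)/\overline{a}(u)\,du$ for $x \geq x_0$ (and a symmetric formula for $x \leq x_0$), which tends to $+\infty$ at both endpoints because $1/\overline{a}$ is positive and locally integrable. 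Hence (M3) holds and Theorem \ref{theo: mart Ito} shows that $Z$ is a $\p$-martingale. Applying Girsanov to $Z = \mathcal{E}(-\int_0^\cdot \theta_s\,dW_s)$ then yields that $B = W + \int_0^\cdot \theta_s\,ds$ is a $\Q$-Brownian motion; substituting $dW = dB - \theta\,dt$ into \eqref{eq: S ito} and using \eqref{eq: def MPR} gives $S = S_0 + \int_0^\cdot \sigma_s\,dB_s$. Consequently $P = P_0 \mathcal{E}(S - S_0)$ satisfies $dP_t = P_t \sigma_t\,dB_t$, so $P$ is a positive local $\Q$-martingale and (O2) identifies $\Q$ as an ELMM.

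For the second application, the product-rule expansion \eqref{eq: prod rule LM} yields $ZP = P_0 \mathcal{E}\bigl(\int_0^\cdot (\sigma_s - \theta_s)\,dW_s\bigr)$, to which I apply Theorem \ref{theo: mart Ito} with $c = \sigma - \theta$, $\underline{u} \equiv \overline{u} \equiv 1$ and $\underline{a} = \overline{a}$. The drift comparisons in (M2) reduce via \eqref{eq: def MPR} to the equality $b + (\sigma - \theta)\sigma = \sigma^2$. A short integration-by-parts computation gives
\[
v(1, \overline{a})(x) = \int_{x_0}^x \frac{1 - e^{-2(x-u)}}{\overline{a}(u)}\,du \ \text{for } x \geq x_0, \qquad v(1, \overline{a})(x) = \int_x^{x_0} \frac{e^{2(u-x)} - 1}{\overline{a}(u)}\,du \ \text{for } x \leq x_0.
\]
From the second formula $v(1, \overline{a})(x) \to +\infty$ as $x \to -\infty$ automatically, since $e^{2(u-x)} \to \infty$ uniformly on any compact subinterval below $x_0$; from the first formula the pointwise bound $1 - e^{-2(x-u)} \geq 1 - e^{-2}$ for $u \leq x-1$ together with \eqref{eq: cond SMD} gives $v(1, \overline{a})(x) \to +\infty$ as $x \to +\infty$ (and dominated convergence supplies the converse), so (M3) holds. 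The main obstacle is (M1), which (L1) only supplies for $Z$. I transfer it to $ZP$ via the explicit form $P = P_0 \exp(S - S_0 - \tfrac{1}{2}[S,S])$ of the continuous stochastic exponential: on $[0, \tau_n]$ one has $|S| \leq n$ and $[S,S] \geq 0$, hence $P \leq P_0 e^{n + |S_0|}$. Consequently $(ZP)^{\tau_n}$, a local martingale by \eqref{eq: prod rule LM}, is dominated by the uniformly integrable martingale $P_0 e^{n + |S_0|} Z^{\tau_n}$ supplied by (L1), and is therefore itself a uniformly integrable true martingale. Theorem \ref{theo: mart Ito} now yields the martingale property of $ZP$, and observations (O1) and (O3) complete the proof.
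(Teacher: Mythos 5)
Your proposal is correct and follows essentially the same route as the paper: two applications of Theorem \ref{theo: mart Ito} with $c=-\theta$, $\underline{u}=\overline{u}=0$ and then $c=\sigma-\theta$, $\underline{u}=\overline{u}=1$, with (M1) transferred from $Z$ to $ZP$ via the bound $P\leq P_0e^{n+|S_0|}$ on $[0,\tau_n]$ and uniform integrability. The only cosmetic difference is that you verify (M3) by direct computation of $v$ where the paper invokes \cite[Problem 5.5.27]{KaraShre} for the endpoint not covered by \eqref{eq: cond SMD}.
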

\begin{proof}
	We apply Theorem \ref{theo: mart Ito} with \(I \triangleq \mathbb{R}, l_n \triangleq - n, r_n \triangleq n\) and \(c \triangleq - \theta\). Note that (L1) equals (M1). Furthermore, set \(\underline{u} (x) \equiv \overline{u} (x) \triangleq 0\). Then, (L2) implies (M2), because \eqref{eq: def MPR} implies \(\llambda \otimes \p\)-a.e. \(b + c \sigma = 0\). Finally, note that 
	\[
	\int_{x_0}^{\pm \infty} \exp \Big(- 2 \int_{x_0}^x \underline{u} (y) \dd y \Big) \dd  x = \int_{x_0}^{\pm \infty} \exp \Big(- 2 \int_{x_0}^x \overline{u} (y) \dd y \Big) \dd  x = \pm \infty,
	\]
	which shows that (M3) holds due to \cite[Problem 5.5.27]{KaraShre}. We conclude that \(Z\) is a martingale and that \(\Q\) is an ELMM by (O2).
	
	Next, we assume that \eqref{eq: cond SMD} holds. We apply Theorem \ref{theo: mart Ito} with \(I \triangleq \mathbb{R}, l_n \triangleq -n, r_n \triangleq n\) and \(c \triangleq \sigma - \theta\) to show that the  local martingale
	\[
	Z' \triangleq \frac{Z P}{P_0} = \mathcal{E} \Big( \int_0^\cdot (\sigma_s - \theta_s) \dd W_s  \Big)
	\]
	is a martingale. In this case, \(\Q\) is an EMM and \(Z\) is a SMD by (O1) and (O3).
	By (L1), the set \(\{Z_{\gamma\wedge \tau_n} \colon \gamma \ \textup{stopping time}\}\) is uniformly integrable (see \cite[Proposition I.1.47]{JS}). Thus,  
	\begin{align*}
	\sup_{\gamma} \E^\p \big[ &Z'_{\gamma \wedge \tau_n}  \1_{\{Z'_{\gamma \wedge \tau_n} \geq K\}} \big] \\&\leq e^{|S_0| + n} \sup_{\gamma} \E^\p \big[ Z_{\gamma \wedge \tau_n} \1_{\{Z_{\gamma \wedge \tau_n} \geq e^{- |S_0| - n}K\}} \big] \to 0 \text{ as } K \to \infty, 
	\end{align*}
	where the \(\sup_{\gamma}\) is meant to be the supremum over all stopping times \(\gamma\).
	Due to \cite[Proposition I.1.47]{JS}, we conclude that (M1) holds for \(Z'\).
	Note that \eqref{eq: def MPR} implies that \(\llambda \otimes \p\)-a.e. \(b + c \sigma = \sigma^2\). Thus, we set \(\underline{u} (x)\equiv \overline{u}(x) \triangleq  1\) and note that (L2) implies (M2) for \(Z'\). Using Fubini's theorem and  \eqref{eq: cond SMD}, we obtain that
	\begin{align*}
	\lim_{x \nearrow \infty} v (1, \overline{a}) (x) &= 2 \int_{x_0}^\infty e^{- 2 y} \int_{x_0}^y \frac{e^{2u}}{\overline{a}(u)} \dd u \dd y
	\\&= 2 \int_{x_0}^\infty \frac{e^{2u}}{\overline{a}(u)}  \int_u^\infty e^{- 2y} \dd y \dd u
	\\&= \int_{x_0}^\infty \frac{\dd u}{\overline{a}(u)} = \infty.
	\end{align*}
	Because
	\[
	\int_{x_0}^{- \infty} \exp \Big( - 2 \int_{x_0}^x \dd y \Big) \dd x = - \infty, 
	\]
	\cite[Problem 5.5.27]{KaraShre} yields that 
	 \(\lim_{x \searrow - \infty} v(1, \overline{a}) (x) = \infty\). Hence, (M3) holds for \(Z'\). We conclude that \(Z'\) is a martingale and the proof is complete.
\end{proof}
In our setting there might exist several ELMMs and it is an important question which ELMM should be chosen for applications. The ELMM from Theorem \ref{theo: main1 SEM} is the \emph{minimal local martingale measure (MLMM)} as defined in \cite{doi:10.1111/j.1467-9965.1992.tb00027.x}.\footnote{In \cite{doi:10.1111/j.1467-9965.1992.tb00027.x} the MLMM has been called \emph{minimal martingale measure}. Because we distinguish between ELMMs and EMMs we adjust the terminology.}  For financial interpretations of the MLMM we refer to \cite{doi:10.1111/j.1467-9965.1992.tb00027.x} and for a general overview on possible applications we refer to \cite{FS2010}.
In Theorem \ref{theo: indp preserving} below we discover a new property of the MLMM: The MLMM preserves independence and laws of sources of risk.

In the following paragraph we relate the assumptions (L1) and (L2) to so-called \emph{weakly equivalent local martingale measures (WELMM)} as introduced in \cite{Kardaras2010}. 
We explain the connection from a general point of view under the assumptions that \(\mathcal{F} = \mathcal{F}_T\) and that (NUPBR) holds.
With slight abuse of notation, let \(Z = (Z_t)_{t \in [0, T]}\) be a SLMD with localizing sequence \((\tau_n)_{n \in \mathbb{N}}\). For every \(n \in \mathbb{N}\) we can define a probability measure \(\Q^n\) by the Radon--Nikodym derivative \(\frac{\dd \Q^n}{\dd \p} \triangleq Z_{T \wedge \tau_n}\). It is easy to see that \(\Q^n\) is an ELMM for the stopped process \(P_{\cdot \wedge \tau_n}\). In other words, for every \(n \in \mathbb{N}\) the notion (NFLVR) holds for all admissible strategies which invest riskless after \(\tau_n\).
Roughly speaking, this observation suggests that (NFLVR) holds in case we can take the limit \(n \to \infty\).
As explained in Section 2.4.2 of \cite{Kardaras2010}, Alaoglu's theorem yields that \((\Q^n)_{n \in \mathbb{N}}\) has an accumulation point \(\mathsf{Q}\) for the weak\(^*\) topology on the dual of \(L^\infty (\Omega, \mathcal{F}, \p)\), which is a finitely additive probability such that \(\mathsf{Q}(A) = 0\) for all \(A \in \mathcal{F}\) with \(\p(A) = 0\), see the Appendix of \cite{Cvitanic2001}. We use the sans-serif typeface to highlight that \(\mathsf{Q}\) is not necessarily a probability measure, because it may fail to be countably additive. Note that \(\mathsf{Q} = \Q^n\) on \(\mathcal{F}_{\tau_n}\) for every \(n \in \mathbb{N}\).
Using this fact, it follows that for all \(A \in \mathcal{F}\) with \(\mathsf{Q}(A) = 0\) we also have \(\p(A) = 0\), 
which shows that \(\mathsf{Q}\) and \(\p\) have the same null-sets. Indeed, if \(A \in \mathcal{F} = \mathcal{F}_T\) is such that \(\mathsf{Q}(A) = 0\), we have \(A \cap \{\tau_n> T\} \in \mathcal{F}_{\tau_n}\) and consequently
\[
\Q^n(A \cap \{\tau_n> T\}) = \mathsf{Q} (A \cap \{\tau_n > T\}) = 0
\]
for all \(n \in \mathbb{N}\).
This implies \(\p(A \cap \{\tau_n > T\}) = 0\) and, because \(\p\)-a.s. \(\tau_n \nearrow \infty\) as \(n \to \infty\), we conclude that \(\p(A) = 0\). 
Following \cite{Kardaras2010}, we call \(\mathsf{Q}\) a WELMM. 
The main difference between WELMMs and ELMMs, and therefore between (NUPBR) and (NFLVR), is that a WELMM is not necessarily a measure.

The idea of condition (L1) is to identify a WELMM, which, as explained above, is a natural candidate for an ELMM. Assuming that \((\tau_n)_{n \in \mathbb{N}}\) is given by \eqref{eq: loc seq L1} means controlling the MPR via the size of the asset. 
This assumption is reasonable from a modeling perspective, because, as explained by Lyasoff \cite[p. 488]{MAFI:MAFI530},
"excessively large expected instantaneous net returns from risky securities entail  excessively large demands for money (to invest in such securities), which, in turn, means higher and higher interest rates, which, in turn, means lower and lower market price of risk". 
In the diffusion settings of Mijatovi\'c and Urusov \cite{Mijatovic2012}, (L1) is equivalent to the local integrability condition \cite[Eq. 3.2]{Mijatovic2012} on the MPR, see \cite[Lemma 6.3]{MU(2012)}. 

Condition (L2) takes care on the countable additivity of the candidate WELMM, which corresponds to problems arising when \(n \to \infty\). Indeed, \(\mathsf{Q}\) is countably additive if and only if
\begin{align}\label{eq: to show} 
\limsup_{n \to \infty} \mathsf{Q} (\tau_n > T) = \limsup_{n \to \infty}\Q^n (\tau_n > T) = 1,
\end{align}
which is also the condition we check in the proof of Theorem \ref{theo: mart Ito}.
If \(\mathsf{Q}\) is countably additive, then \eqref{eq: to show} follows from the monotone convergence theorem and the fact that \(\p\)-a.s. \(\tau_n \nearrow \infty\) as \(n \to \infty\). Conversely, assume that \eqref{eq: to show} holds. Let \((E_k)_{k \in \mathbb{N}} \subset \mathcal{F}\) be a decreasing sequence with \(\bigcap_{k \in \mathbb{N}}  E_k = \emptyset\). Then, because \(E_k \in \mathcal{F} = \mathcal{F}_T\), we have \(E_k \cap \{\tau_n > T\} \in \mathcal{F}_{\tau_n}\),  which yields that 
\begin{align*}
\limsup_{k \to \infty} \mathsf{Q}(E_k) &\leq \mathsf{Q} (\tau_n \leq T) + \limsup_{k \to \infty} \mathsf{Q}  (E_k \cap \{\tau_n > T\})
\\&= \mathsf{Q} (\tau_n \leq T) + \limsup_{k \to \infty} \Q^n  (E_k \cap \{\tau_n > T\}) 
\\&= \mathsf{Q}  (\tau_n \leq T) \to 0 \text{ with } n \to \infty.
\end{align*}
Thus, \(\mathsf{Q}\) is continuous at zero,  which implies that it is countably additive.

\subsubsection{Existence of a financial bubble}
In Theorem  \ref{theo: main1 SEM} we gave conditions for the existence of an ELMM. In this section, we derive a counterpart to \eqref{eq: cond SMD}, which implies the existence of a financial bubble in the sense of \cite{Cox2005}.

As we explain next, the question when a SMD exists is strongly connected to the question when a non-negative local martingale is a strict local martingale. We recall the following: 
\begin{lemma}\label{lem: decom}
If \(Z\) is a SLMD, then there exists a market price of risk \(\theta = (\theta_t)_{t \in [0, T]}\) and a local martingale \(N = (N_t)_{t \in [0, T]}\) such that a.s. \(\Delta N > -1, [N, W] = 0\) and 
\begin{align}\label{eq: decom Z}
Z = \mathcal{E} \Big(N - \int_0^\cdot \theta_s \dd W_s \Big).
\end{align}
\end{lemma}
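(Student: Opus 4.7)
The plan is to build the decomposition in three steps: first express $Z$ as a stochastic exponential; next split the driving local martingale into a $W$-integral part and a remainder that is orthogonal to $W$ via a Galtchouk--Kunita--Watanabe-type decomposition; finally use that $ZP$ is a local martingale to identify the $W$-integrand in terms of $b$ and $\sigma$.

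For the first step, since $Z$ is a strictly positive local martingale with $Z_0=1$, the process
\[
L \triangleq \int_0^\cdot \frac{1}{Z_{s-}}\, \dd Z_s
\]
is a well-defined local martingale with $Z=\mathcal{E}(L)$, and the strict positivity $Z>0$ forces $\Delta L = \Delta Z/Z_- > -1$. For the second step, observe that $\langle W\rangle_t = t$ is absolutely continuous, so by the Kunita--Watanabe inequality $\langle L^c, W\rangle$ is absolutely continuous with respect to $\dd t$; denote its Radon--Nikodym density by $\phi$. Then $\phi$ is progressively measurable with $\int_0^T \phi_s^2\,\dd s<\infty$ a.s., so $\int_0^\cdot \phi_s\,\dd W_s$ is a well-defined continuous local martingale. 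Set $N \triangleq L-\int_0^\cdot \phi_s\,\dd W_s$. Since the stochastic integral $\int \phi\,\dd W$ is continuous, $\Delta N = \Delta L > -1$, and the continuous part of $N$ satisfies
\[
\langle N^c, W\rangle = \langle L^c, W\rangle-\int_0^\cdot \phi_s\,\dd s = 0
\]
by the definition of $\phi$, while the jump contribution to $[N,W]$ vanishes because $W$ is continuous; thus $[N,W]=0$.

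For the third step, apply integration by parts to $ZP$. Inserting $\dd P_t = P_t(b_t\,\dd t+\sigma_t\,\dd W_t)$ and $\dd Z_t = Z_{t-}(\phi_t\,\dd W_t+\dd N_t)$, using $[N,W]=0$ together with the continuity of $P$, one finds $\dd [Z,P]_t = Z_{t-} P_t \phi_t \sigma_t\, \dd t$, hence
\[
\dd(ZP)_t = Z_{t-} P_t(b_t+\phi_t\sigma_t)\,\dd t + \textup{(local martingale terms)}.
\]
Since $ZP$ is a local martingale by the SLMD assumption, the predictable finite-variation part of its canonical decomposition must vanish; because $Z>0$ and $P>0$, this forces $b_t+\phi_t\sigma_t=0$ $\llambda\otimes\p$-a.e. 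Setting $\theta \triangleq -\phi$, the identity \eqref{eq: def MPR} holds, so $\theta$ is a market price of risk, and plugging $\phi=-\theta$ back into $L = \int_0^\cdot \phi_s\,\dd W_s + N$ yields the representation \eqref{eq: decom Z}.

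The main technical obstacle is the second step: producing an integrand $\phi$ with enough integrability to make $\int \phi\,\dd W$ a genuine stochastic integral, and a remainder $N$ that is orthogonal to $W$ in the strong sense $[N,W]=0$ rather than merely $\langle N,W\rangle=0$. Once that decomposition is in hand, the third step is essentially forced by It\^o calculus applied to the local martingale $ZP$, and the first step is a routine use of the stochastic logarithm for strictly positive local martingales.
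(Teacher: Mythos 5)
Your proof is correct. The paper itself does not argue the lemma but simply cites \cite[Theorem 1]{doi:10.1080/07362999508809418} (Schweizer's decomposition result for strict local martingale densities), and your argument --- stochastic logarithm, Galtchouk--Kunita--Watanabe projection of $L^c$ onto $W$ with the Kunita--Watanabe inequality supplying $\int_0^T\phi_s^2\,\dd s<\infty$, and identification of the drift of $ZP$ to force $b+\phi\sigma=0$ --- is exactly the standard proof underlying that citation, so nothing further is needed.
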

\begin{proof}
	See \cite[Theorem 1]{doi:10.1080/07362999508809418}.
\end{proof}
In case \(Z\) is a SMD, \eqref{eq: decom Z} holds and 
\begin{align}\label{eq: ZP}
Z P = P_0 \mathcal{E} \Big( N + \int_0^\cdot (\sigma_s-\theta_s) \dd W_s \Big)
\end{align}
is a martingale by definition. If this is not the case, we have a contradiction and no  SMD exists.

The following is the main result of this section:
\begin{theorem}\label{theo: no SMD gen}
	Suppose there exists a Borel function \(\underline{a} \colon \mathbb{R}\to (0, \infty)\) such that \((1, \underline{a})\) satisfies the  YW conditions (see Section \ref{sec: GC} for this terminology), \(\underline{a} (S_t (\omega)) \leq \sigma^2_t (\omega)\) for \(\llambda \otimes \p\)-a.a. \((t, \omega) \in [0, T] \times \Omega\) and 
	\begin{align}\label{eq: cond  no SMD}
	\int_1^\infty \frac{\dd z}{\underline{a}(z)} < \infty.
	\end{align}
	Then, no SMD exists. 
\end{theorem}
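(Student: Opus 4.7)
The plan is to argue by contradiction, assuming that a SMD $Z$ exists, and to show that Theorem \ref{theo: general SLM}(SL1) then forces $ZP/P_0$ to be a strict local martingale, contradicting the defining property of a SMD. By Lemma \ref{lem: decom} one can write
\[
Z = \mathcal{E}\Big(N - \int_0^\cdot \theta_s \dd W_s\Big)
\]
for some market price of risk $\theta$ and some local martingale $N$ with $\Delta N > -1$ and $[N, W] = 0$. Formula \eqref{eq: ZP} then gives
\[
ZP = P_0\, \mathcal{E}\Big(N + \int_0^\cdot (\sigma_s - \theta_s) \dd W_s\Big),
\]
which is a true martingale by definition of a SMD. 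The goal is to apply Theorem \ref{theo: general SLM}(SL1) to $ZP/P_0$ with $I = \mathbb{R}$, $l_n = -n$, $r_n = n$, driver $c = \sigma - \theta$, the function $\underline{a}$ from the hypothesis, and $\underline{u} \equiv 1$.

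The YW conditions for $(1, \underline{a})$ are assumed, and the lower bound $\underline{a}(S_t) \leq \sigma^2_t$ is built into the hypothesis. For the drift inequality in \eqref{eq: bounderies}, I would invoke the market price of risk relation $b_t = \theta_t \sigma_t$ from \eqref{eq: def MPR}, which yields
\[
b_t + c_t \sigma_t = \theta_t \sigma_t + (\sigma_t - \theta_t) \sigma_t = \sigma^2_t = \underline{u}(S_t) \sigma^2_t,
\]
so the required inequality holds with equality. The one non-trivial verification is the boundary condition $\lim_{x \nearrow \infty} v(1, \underline{a})(x) < \infty$. This is carried out by exactly the Fubini computation already used in the proof of Theorem \ref{theo: main1 SEM}, giving
\[
\lim_{x \nearrow \infty} v(1, \underline{a})(x) = \int_{x_0}^\infty \frac{\dd u}{\underline{a}(u)},
\]
which is finite because $1/\underline{a} \in L^1_{\mathrm{loc}}(\mathbb{R})$ (part of the standing setup in Section \ref{sec: GC}) together with the assumption \eqref{eq: cond  no SMD} that $\int_1^\infty \dd z / \underline{a}(z) < \infty$.

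With all hypotheses of Theorem \ref{theo: general SLM}(SL1) verified, I conclude that $ZP/P_0$ is a strict local martingale, contradicting the fact that $ZP$ is a true martingale. Hence no SMD exists. The main conceptual point (rather than a real technical obstacle) is to feed the dynamics of $ZP$, not those of $Z$, into the strict local martingale criterion, and to notice that the MPR identity \eqref{eq: def MPR} makes $\underline{u} \equiv 1$ precisely the correct choice so that the lower bound $\underline{a} \leq \sigma^2$ transfers directly into condition (SL1).
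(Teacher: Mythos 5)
Your proposal is correct and follows essentially the same route as the paper: contradiction via Lemma \ref{lem: decom} and formula \eqref{eq: ZP}, then Theorem \ref{theo: general SLM} applied to \(ZP/P_0\) with \(\underline{u}\equiv 1\), \(c=\sigma-\theta\), the MPR identity giving the drift equality, and the Fubini computation giving \(\lim_{x\nearrow\infty}v(1,\underline{a})(x)<\infty\). The only difference is cosmetic: the paper cites "part (ii)" of that theorem where the conditions actually verified are those of (SL1), which you correctly identify.
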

\begin{proof}
	We use  Theorem \ref{theo: general SLM} with \(I \triangleq \mathbb{R}\) and
	\(\underline{u} \triangleq 1\) to show that \(ZP\) as defined in \eqref{eq: ZP} is a strict local martingale. Because \(\theta\) is a MPR, \(\llambda \otimes \p\)-a.e. \(b + (\sigma - \theta) \sigma = \sigma^2 = \underline{u}(S) \sigma^2\). Furthermore, Fubini's theorem and \eqref{eq: cond  no SMD} yield that
	\[
	\lim_{x \nearrow \infty}v(1, \underline{a}) (x) = 
	 \int_{x_0}^\infty \frac{\dd z}{\underline{a}(z)} < \infty.
	\]
	Thus, the conditions from part (ii) of Theorem \ref{theo: general SLM} hold and we conclude that \(ZP\) is a strict local martingale. Consequently, as explained above, no SMD exists. 
\end{proof}

The conditions \eqref{eq: cond SMD} and \eqref{eq: cond  no SMD} provide a test for the MLMM to be an EMM or not. 
In a diffusion setting the conditions boil down to a single sufficient and necessary condition, which is also given in \cite[Proposition 5.2]{criens17b}.

\subsubsection{Example: Diffusion models with a change point}
	Fontana et al. \cite{FONTANA20143009} study (NUPBR) and (NFLVR) for a model with a change point. The authors are interested in the influence of filtrations, which represent different levels of information. Under a weak form of the \(\mathcal{H}'\)-hypothesis the model can be included into our framework. More precisely, in this case \(S\) is of the form
\[
\dd S_t = \mu_t \dd t + \big(\sigma^{(1)}(t, S_t) \1_{\{t \leq \tau\}} + \sigma^{(2)}(t, S_t) \1_{\{t > \tau\}}\big) \dd W_t, 
\]
where \(\tau\) is a stopping time. The coefficient \(\sigma^{(i)}\) is assumed to be positive, continuous and Lipschitz continuous in the second variable uniformly in the first, see \cite[Condition I]{FONTANA20143009}.
Theorem \ref{theo: main1 SEM}  provides local conditions for (NFLVR). For instance in the special cases described in \cite[Section 3.3]{FONTANA20143009}, Theorem \ref{theo: main1 SEM} yields that (NFLVR) always holds, because 
\begin{align}\label{eq: mu assp Fea}
\mu_t = \mu^{(1)} (t, S_t) \1_{\{t \leq \tau\}} + \mu^{(2)}(t, S_t) \1_{\{t > \tau\}}, 
\end{align}
where \(\mu^{(i)}\) is locally bounded. This extends the observation from \cite{FONTANA20143009} that (NUPBR) holds in these cases.
Furthermore, if in addition to \eqref{eq: mu assp Fea} for \(i = 1, 2\)
\[
\big(\sigma^{(i)} (t, x)\big)^2 \leq \textup{const. } x, \quad (t, x) \in [0, T] \times [1, \infty),
\]
then even (NFFLVR) holds. The notion (NFFLVR) has not been studied in \cite{FONTANA20143009}.
\subsection{Diffusion model with Markov switching}\label{sec: MSM}
In this section, we assume that \(P\) is a positive continuous semimartingale with deterministic initial value \(P_0 \in (0, \infty)\) and dynamics
\[
\dd P_t = b(P_t, \xi_t) \dd t + \sigma (P_t, \xi_t) \dd W_t, 
\]
where \(W = (W_t)_{t \in [0, T]}\) is a one-dimensional Brownian motion, \(\xi = (\xi_t)_{t \in [0, T]}\)  is a continuous-time irreducible Feller--Markov chain with state space \(J \triangleq \{1, \dots, N\}\), \(1 \leq N \leq \infty\), and deterministic initial value \(j_0 \in J\), and \(b \colon (0, \infty) \times J \to \mathbb{R}\) and \(\sigma \colon (0, \infty) \times J \to \mathbb{R}\backslash \{0\}\)  are Borel functions such that 
\begin{align*} 
\frac{1 + |b(\cdot, j)|}{\sigma^2(\cdot, j)} \in L^1_\textup{loc}((0, \infty)) \text{ for all } j \in J.
\end{align*}

We can interpret \(N\) as the number of all possible states of the business cycle. The assumption of irreducibilily means that we exclude all states of the business cycle which are not attainable from the initial state. We assume \(\xi\) to be a Feller process for technical reasons. In case \(N < \infty\) any Markov chain with values in  \(J\) is a Feller process, because all real-valued functions on \(J\) are continuous and vanishing at infinity.
Due to Lemma \ref{lem: indep MC BM} in the Appendix, the sources of risk \(\xi\) and \(W\) are independent. 
The lemma even shows that it is not possible to model \(\xi\) and \(W\) as Markov processes for a superordinate filtration without their independence. This observation gives a novel interpretation for the independence assumption, which is typically interpreted as the price process being influenced by the business cycle and an additional independent source of risk represented by the driving Brownian motion. 

\subsubsection{Absence and existence of arbitrage}\label{sec: existence MM MS}
We impose the following two assumptions: The coefficient \(b\) is bounded on compact subsets of \((0, \infty) \times J\), \(\sigma^2\) is bounded away from zero on compact subsets of \((0, \infty) \times J\) and \(\sigma\) satisfies the ES conditions for all \(j \in J\), see Section \ref{sec: MG MS} for this terminology. 

We define
\[
\theta(x, j) \triangleq \frac{b(x, j)}{\sigma (x, j)}, 
\]
which is a Borel map bounded on compact subsets of \((0, \infty) \times J\). The process \(\theta_t \triangleq \theta (P_t, \xi_t)\) is a MPR. We define the continuous local martingale \(Z\) as in \eqref{eq: Z}. Note that the observations (O1) -- (O3) in Section \ref{sec: Arb GM} also hold in this setting.
	We call the E(L)MM \(\Q\) with Radon--Nikodym derivative \(\frac{\dd \Q}{\dd \p} = Z_T\) the \emph{minimal (local) martingale measure (M(L)MM)}.
The following theorem provides conditions for the existence of the M(L)MM and for \(Z\) to be a SMD.
		\begin{theorem}\phantomsection\label{theo: main existence MS}
\begin{enumerate}
	\item[\textup{(i)}] Assume that
	\begin{align}\label{eq: MS iff}
	\int_0^1 \frac{z}{\sigma^2(z, j)} \dd z= \infty \text{ for all } j \in J.
	\end{align}
	Then, \(Z\) is a martingale and the probability measure \(\Q\) defined by the Radon--Nikodym derivative \(\frac{\dd \Q}{\dd \p} \triangleq Z_T\) is an ELMM. Moreover, \(B\) as defined in \eqref{eq:def B} is a \(\Q\)-Brownian motion such that 
	\[P = P_0 + \int_0^\cdot \sigma (P_t, \xi_t)\dd B_t.
	\]
	If in addition 
	\begin{align}\label{eq: SMD MS}
		\int_1^\infty \frac{z}{\sigma^2(z, j)} \dd z = \infty \text{ for all } j \in J,
	\end{align}
	then \(\Q\) is an EMM.
	\item[\textup{(ii)}] If \eqref{eq: SMD MS} holds, then \(Z\) is a SMD.
\end{enumerate}
	\end{theorem}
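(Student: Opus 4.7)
The plan is to mirror the proof of Theorem \ref{theo: main1 SEM}, invoking Theorem \ref{theo: mart MS}(i) in place of Theorem \ref{theo: mart Ito}. Setting \(I \triangleq (0, \infty)\), the argument reduces to two applications of the Markov switching martingale criterion: one to the local martingale \(Z\) itself, and one to the normalised product \(ZP/P_0\), which by It\^o's product rule and the identity \(\theta = b/\sigma\) coincides with \(\mathcal{E}\bigl(\int_0^\cdot (\sigma(P_s, \xi_s)/P_s - \theta(P_s, \xi_s))\, \dd W_s\bigr)\). In both applications the standing hypotheses of Section \ref{sec: existence MM MS} render the driving integrand \(c\) locally bounded and supply the ES hypothesis of Theorem \ref{theo: mart MS}(i) directly, so it remains only to verify the two-sided integral test \(\lim_{x \searrow 0} v(x, j) = \lim_{x \nearrow \infty} v(x, j) = \infty\) for every \(j \in J\).

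For the first application the choice \(c(x, j) \triangleq -\theta(x, j)\) forces \(b + c\sigma \equiv 0\); the exponential factors in \(v\) collapse to \(1\), and a Fubini yields
\[
v(x, j) = \int_{x_0}^x \int_{x_0}^y \frac{2\, \dd s}{\sigma^2(s, j)} \dd y = \int_{x_0}^x \frac{2(x - s)}{\sigma^2(s, j)}\, \dd s
\]
for \(x > x_0\), with the mirrored identity \(v(x, j) = \int_x^{x_0} \frac{2(s - x)}{\sigma^2(s, j)}\, \dd s\) on the left. Sending \(x \searrow 0\) produces \(\int_0^{x_0} \frac{2s}{\sigma^2(s, j)}\, \dd s\), which equals \(\infty\) precisely by \eqref{eq: MS iff}, while \(v(\infty, j) = \infty\) is automatic via the lower bound \(v(x, j) \geq (x - x_0 - 1)\int_{x_0}^{x_0 + 1} \frac{2\, \dd s}{\sigma^2(s, j)}\) together with \(\sigma^2\) being locally bounded away from zero. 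Theorem \ref{theo: mart MS}(i) then forces \(Z\) to be a martingale; Girsanov combined with Lemma \ref{lem: indep MC BM} promotes \(B\) to a \(\Q\)-Brownian motion with \(P = P_0 + \int_0^\cdot \sigma(P_s, \xi_s)\, \dd B_s\); and (O2) from Section \ref{sec: Arb GM} identifies \(\Q\) as an ELMM.

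For the second application I take \(c'(x, j) \triangleq \sigma(x, j)/x - \theta(x, j)\); then \(b + c'\sigma = \sigma^2/x\) and \(2(b + c'\sigma)/\sigma^2 = 2/x\), so the exponential factors in \(v\) telescope to powers of \(x_0/y\) and \(s/x_0\). A first Fubini reduces the test function to \(v(x, j) = \int_{x_0}^x \frac{2s(x - s)}{x\, \sigma^2(s, j)}\, \dd s\) for \(x > x_0\). Setting \(F(s) \triangleq \int_{x_0}^s \frac{t}{\sigma^2(t, j)}\, \dd t\) and applying Fubini once more yields \(v(x, j) = \frac{2}{x}\int_{x_0}^x F(s)\, \dd s\), from which a direct two-line comparison shows \(v(\infty, j) = \infty\) if and only if \(F(\infty) = \infty\), i.e.\ precisely condition \eqref{eq: SMD MS}. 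The endpoint at \(0\) is again automatic: for \(x \leq x_0/2\) the mirrored expression \(v(x, j) = \frac{2}{x}\int_x^{x_0} \frac{s(s - x)}{\sigma^2(s, j)}\, \dd s\) is bounded below by a positive constant times \(1/x\), using that \(1/\sigma^2\) is a positive, locally bounded function on \([x_0/2, x_0]\). Hence under \eqref{eq: SMD MS} the process \(ZP/P_0\) is a martingale; (O1) then delivers part (ii), and combined with the first application (O3) identifies \(\Q\) as an EMM in the second half of (i).

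The proof is essentially bookkeeping: the only genuine task is to spot the correct pair \((S, c)\) to feed into Theorem \ref{theo: mart MS}(i) in each case, and then to execute the two Fubini-style manipulations that convert the abstract Feller-type test \(v(0, j) = v(\infty, j) = \infty\) into the explicit endpoint conditions \eqref{eq: MS iff} and \eqref{eq: SMD MS}. The only minor subtlety, in each of the two applications, is checking that the endpoint \emph{not} governed by the stated hypothesis really is automatic; both such checks reduce to the local positivity and local lower boundedness of \(\sigma^2\) built into the standing hypotheses of Section \ref{sec: existence MM MS}.
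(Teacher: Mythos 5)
Your proposal is correct and follows exactly the route the paper intends: its own proof of this theorem is the one-line remark that the claim "follows similar to the proof of Theorem \ref{theo: main1 SEM} when Theorem \ref{theo: mart MS} is used instead of Theorem \ref{theo: mart Ito}", and you carry out precisely that argument, with the two choices \(c=-\theta\) and \(c'=\sigma/x-\theta\) and the Fubini reductions of \(v\) matching the computations done explicitly in the proof of Theorem \ref{theo: main1 SEM}.
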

\begin{proof}
The claim follows similar to the proof of Theorem \ref{theo: main1 SEM} when Theorem \ref{theo: mart MS} is used instead of Theorem \ref{theo: mart Ito}.
\end{proof}
Theorem \ref{theo: mart MS} suggests that in case \(\xi\) is recurrent, the conditions in Theorem \ref{theo: main existence MS} are sufficient and necessary. The following theorem makes this precise.
\begin{theorem}\label{theo: main non existence MS} Suppose that \(\xi\) is recurrent. 
	\begin{enumerate}
	\item[\textup{(i)}] 
	If there exists a \(j \in J\) such that 
	\begin{align}\label{eq: no MLMM}
	\int_0^1 \frac{z}{\sigma^2(z, j)} \dd z < \infty, 
	\end{align}
	then \(Z\) is a strict local martingale and the MLMM does not exist. 
	\item[\textup{(ii)}]
	If there exists a \(j \in J\) such that
	\begin{align}\label{eq: no MMM}
	\int_1^\infty \frac{z}{\sigma^2 (z, j)} \dd z < \infty, 
	\end{align}
	then \(Z\) is no SMD. In particular, the MMM does not exist.
	\end{enumerate}
\end{theorem}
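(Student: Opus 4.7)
The plan is to invoke Theorem \ref{theo: mart MS}(ii) twice: directly on $Z$ for part (i), and on $ZP/P_0$ after rewriting the latter as a stochastic exponential $\mathcal{E}(\int_0^\cdot \tilde c(P_s, \xi_s) \dd W_s)$ for part (ii). The recurrence of $\xi$ and the standing ES conditions on $\sigma$ are already in force; in each case the task therefore reduces to computing the scale-type quantity $v(\cdot, j)$ attached to the appropriate coefficient and identifying its limit at the relevant boundary with the integral appearing in the hypothesis.

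For part (i), the coefficient attached to $Z$ is $c(x, j) = -\theta(x, j) = -b(x, j)/\sigma(x, j)$, so $b + c\sigma \equiv 0$ and the drift function $(b + c\sigma)/\sigma^2$ appearing in the definition of $v$ vanishes. The formula collapses to $v(x, j) = \int_{x_0}^x \int_{x_0}^y 2/\sigma^2(s, j) \dd s \dd y$; switching the order of integration on the triangle $\{x < y < s < x_0\}$ gives $v(x, j) = \int_x^{x_0} 2(s - x)/\sigma^2(s, j) \dd s$, and hence $\lim_{x \searrow 0} v(x, j) = \int_0^{x_0} 2s/\sigma^2(s, j) \dd s$. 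This limit is finite if and only if $\int_0^1 z/\sigma^2(z, j) \dd z < \infty$, so under \eqref{eq: no MLMM} Theorem \ref{theo: mart MS}(ii) yields that $Z$ is a strict local martingale. Since the MLMM is by definition the probability with density $Z_T$, it cannot exist.

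For part (ii), the product rule combined with $b = \theta \sigma$ gives $\dd(ZP) = Z(\sigma - P\theta) \dd W$, so
\[
\frac{ZP}{P_0} = \mathcal{E}\Big(\int_0^\cdot \tilde c(P_s, \xi_s) \dd W_s\Big), \quad \tilde c(x, j) = \frac{\sigma(x, j)}{x} - \frac{b(x, j)}{\sigma(x, j)}.
\]
With this coefficient $b + \tilde c \sigma = \sigma^2/x$, so the drift function becomes $1/x$ and the exponential factors in the definition of $v$ reduce to $(y/x_0)^{\pm 2}$. After simplification $v(x, j) = \int_{x_0}^x (1/y^2) \int_{x_0}^y 2s^2/\sigma^2(s, j) \dd s \dd y$, and Fubini on $\{x_0 < s < y < \infty\}$ yields $\lim_{x \nearrow \infty} v(x, j) = \int_{x_0}^\infty 2s/\sigma^2(s, j) \dd s$, finite if and only if $\int_1^\infty z/\sigma^2(z, j) \dd z < \infty$. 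Theorem \ref{theo: mart MS}(ii) applied to $ZP/P_0$ then shows $ZP$ is a strict local martingale, so $Z$ is no SMD and, in particular, the MMM cannot exist. The main technical point to check is that $\tilde c$ satisfies the standing local square-integrability condition $\tilde c(\cdot, j)/\sigma(\cdot, j) \in L^2_\loc((0, \infty))$ from Section \ref{sec: MG MS}; this is immediate because under the standing assumptions $b$ is bounded and $\sigma^2$ is bounded away from zero on compacts, so both $1/x$ and $b(x, j)/\sigma^2(x, j)$ are locally bounded on $(0, \infty) \times J$.
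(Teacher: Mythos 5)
Your proof is correct and takes essentially the same route as the paper: the paper's proof of this theorem consists of the single remark that the argument of Theorem \ref{theo: no SMD gen} should be repeated with Theorem \ref{theo: mart MS} in place of Theorem \ref{theo: general SLM}, which is exactly the application of Theorem \ref{theo: mart MS}(ii) to \(Z\) (for part (i)) and to \(ZP/P_0 = \mathcal{E}(\int_0^\cdot \tilde c(P_s,\xi_s)\,\dd W_s)\) with \(\tilde c(x,j)=\sigma(x,j)/x - b(x,j)/\sigma(x,j)\) (for part (ii)) that you carry out. Your explicit evaluation of the limits of \(v(\cdot,j)\) via Fubini and the verification that \(\tilde c(\cdot,j)/\sigma(\cdot,j)\in L^2_{\textup{loc}}((0,\infty))\) correctly supply details the paper leaves implicit.
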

\begin{proof}
	The claim follows similar to the proof of Theorem \ref{theo: no SMD gen} when  Theorem \ref{theo: mart MS} is used instead of Theorem \ref{theo: general SLM}.
\end{proof}
Recalling that in case \(N < \infty\) the Markov chain \(\xi\) is recurrent, we obtain the following:
\begin{corollary}\label{coro: MS MMM}
		Suppose that \(N < \infty\). Then, the following are equivalent:
	\begin{enumerate}
		\item[\textup{(a)}]
		The MLMM exists if and only if \eqref{eq: MS iff} holds.
		\item[\textup{(b)}]
		The MMM exists if and only if \eqref{eq: MS iff} and \eqref{eq: SMD MS} hold.
			\item[\textup{(c)}]
		\(Z\) is a SMD if and only if \eqref{eq: SMD MS} holds.
	\end{enumerate}
\end{corollary}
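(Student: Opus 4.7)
The plan is to combine Theorems \ref{theo: main existence MS} and \ref{theo: main non existence MS} with the elementary implications between the three concepts. As the corollary itself hints, the crucial preliminary observation is that when $N < \infty$ the irreducible Feller--Markov chain $\xi$ is automatically recurrent; this follows from \cite[Theorems 1.5.6, 3.4.1]{norris_1997} and has already been used in the proof of Corollary \ref{coro: suff nece MS}. Hence the recurrence hypothesis of Theorem \ref{theo: main non existence MS} is available for free, which is what makes every ``only if'' direction work.

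For \textup{(a)}, the ``if'' direction is the first part of Theorem \ref{theo: main existence MS}\,(i): under \eqref{eq: MS iff} the candidate density $Z$ is a true martingale, and $\frac{\dd\Q}{\dd\p}\triangleq Z_T$ defines the MLMM. For the ``only if'' direction, I would argue by contraposition: if \eqref{eq: MS iff} fails then there exists $j\in J$ with $\int_0^1 z\sigma^{-2}(z,j)\dd z<\infty$, so by Theorem \ref{theo: main non existence MS}\,(i) the process $Z$ is a strict local martingale; since the MLMM is by definition the measure with density $Z_T$, it cannot exist. For \textup{(c)}, the same dichotomy is applied to the other boundary: Theorem \ref{theo: main existence MS}\,(ii) gives the ``if'' direction from \eqref{eq: SMD MS}, whereas if \eqref{eq: SMD MS} fails, Theorem \ref{theo: main non existence MS}\,(ii) delivers that $Z$ is not a SMD.

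For \textup{(b)}, I would combine (a) and (c) with the following logical glue: the MMM exists iff $Z$ is a martingale \emph{and} $ZP$ is a martingale, equivalently iff the MLMM exists and $Z$ is a SMD (see observations \textup{(O1)}--\textup{(O3)} in Section \ref{sec: Arb GM}, which apply verbatim here). The ``if'' direction is Theorem \ref{theo: main existence MS}\,(i), second half. For the ``only if'' direction, if the MMM exists, then in particular the MLMM exists so by (a) we get \eqref{eq: MS iff}, and the density $Z$ is a SMD so by (c) we also get \eqref{eq: SMD MS}.

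I do not expect a real obstacle here: the content of the corollary is entirely contained in Theorems \ref{theo: main existence MS} and \ref{theo: main non existence MS}, and the only non-routine point is the passage from $N<\infty$ to recurrence of $\xi$, which is the same argument used in Corollary \ref{coro: suff nece MS}. The slight subtlety worth flagging in the write-up is that ``MLMM does not exist'' and ``MMM does not exist'' must be interpreted as statements about the specific candidate density $Z$ built from the market price of risk $\theta(x,j)=b(x,j)/\sigma(x,j)$: it is this canonical $Z$ that Theorem \ref{theo: main non existence MS} rules out, and the MLMM/MMM are by definition the measures induced by this $Z$.
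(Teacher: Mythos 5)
Your proposal is correct and follows exactly the route the paper intends: the paper's entire justification is the remark that $N<\infty$ forces $\xi$ to be recurrent, after which (a)--(c) are immediate by combining Theorem \ref{theo: main existence MS} (sufficiency) with Theorem \ref{theo: main non existence MS} (necessity), just as you do. Your additional logical glue for (b) (MMM exists iff MLMM exists and $Z$ is a SMD, via (O1)--(O3) and \cite[Proposition III.3.8]{JS}) and the clarification that non-existence is a statement about the canonical density $Z$ are both consistent with the paper's conventions.
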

With \(N = 1\) we recover \cite[Corollary 3.4, Theorems 3.6 and 3.11]{Mijatovic2012}. Corollary \ref{coro: MS MMM} means that the M(L)MM exists if and only if the M(L)MM exists for all markets with fixed regimes. 
We will see in the next section that in case one of the frozen markets allows arbitrage, it is not possible to find a risk-neutral market in which the business cycle has Markovian dynamics.

\subsubsection{Non-existence of structure preserving ELMMs and EMMs}
Let \(\mathcal{L}_{\textup{sp}}\) the set of all ELMMs \(\Q\) such that \(\xi\) is an irreducible recurrent Feller--Markov chain on \((\Omega, \mathcal{F}, \F, \Q)\) and let \(\mathcal{M}_\textup{sp}\) be the set of all EMMs in \(\mathcal{L}_\textup{sp}\). 
The main result of this section is the following:

\begin{theorem}\phantomsection\label{theo: NE MS}
	\begin{enumerate}
		\item[\textup{(i)}]
		Suppose there exists a \(j \in J\) such that  \eqref{eq: no MLMM} holds and \(\sigma\) satisfies the ES conditions for \(j\).
		Then, \(\mathcal{L}_\textup{sp} = \emptyset\).
		\item[\textup{(ii)}] 
		Suppose there exists a \(j \in J\) such that  \eqref{eq: no MMM} holds and \(\sigma\) satisfies the ES conditions for \(j\).
		Then, \(\mathcal{M}_\textup{sp} = \emptyset\).
	\end{enumerate}
\end{theorem}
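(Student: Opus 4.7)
My plan is to prove both parts by contradiction, in each case applying Theorem \ref{theo: mart MS}(ii) under the hypothetical measure with a carefully chosen integrand so that the conclusion of that theorem clashes with the martingale property of a process that is manifestly a true martingale.

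For part (i), suppose $\Q \in \mathcal{L}_\textup{sp}$. Since $\Q \sim \p$, $P$ is still a strictly positive continuous semimartingale with unchanged quadratic variation $\int_0^\cdot \sigma^2(P_s,\xi_s)\,\ds$; as $P$ is a local $\Q$-martingale, L\'evy's characterization furnishes a $\Q$-Brownian motion $W^\Q$ with $dP_t = \sigma(P_t,\xi_t)\,dW^\Q_t$. By definition of $\mathcal{L}_\textup{sp}$, $\xi$ is an irreducible recurrent Feller--Markov chain under $\Q$, and Lemma \ref{lem: indep MC BM} applied under $\Q$ gives $\Q$-independence of $\xi$ and $W^\Q$. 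We are therefore in the setup of Section \ref{sec: MG MS} under $\Q$ with drift coefficient identically zero. Now invoke Theorem \ref{theo: mart MS}(ii) with $c \equiv 0$: the ES conditions for $j$ are inherited from the hypothesis of Theorem \ref{theo: NE MS}, $\xi$ is $\Q$-recurrent, and a Fubini calculation yields $\lim_{x\searrow 0} v_\Q(x,j) = \int_0^{x_0} 2z/\sigma^2(z,j)\,dz$, finite by \eqref{eq: no MLMM}. Theorem \ref{theo: mart MS}(ii) then forces $\mathcal{E}(0) \equiv 1$ to be a strict local $\Q$-martingale, which is absurd.

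For part (ii), the same Girsanov/L\'evy step under $\Q \in \mathcal{M}_\textup{sp}$ yields $W^\Q$ and $dP = \sigma(P,\xi)\,dW^\Q$, but now $P$ is a true $\Q$-martingale, so
\[
P/P_0 = \mathcal{E}\Big(\int_0^\cdot \frac{\sigma(P_s,\xi_s)}{P_s}\,dW^\Q_s\Big)
\]
is a true $\Q$-martingale. Apply Theorem \ref{theo: mart MS}(ii) under $\Q$ with $c(x,j) \triangleq \sigma(x,j)/x$, for which $c/\sigma = 1/x$ is trivially in $L^2_\textup{loc}((0,\infty))$. An analogous Fubini evaluation gives $\lim_{x\nearrow\infty} v_\Q(x,j) = \int_{x_0}^\infty 2z/\sigma^2(z,j)\,dz$, finite by \eqref{eq: no MMM}. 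Theorem \ref{theo: mart MS}(ii) then declares $P/P_0$ a strict local $\Q$-martingale, contradicting $\Q \in \mathcal{M}_\textup{sp}$.

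The only non-routine insight is the choice of integrand: in part (i) the trivial choice $c \equiv 0$ is enough because Theorem \ref{theo: mart MS}(ii) is strong enough to rule out the constant $1$ being a strict local martingale under the named assumptions; in part (ii) one must select $c = \sigma/x$ so that the associated stochastic exponential is precisely $P/P_0$, whose $\Q$-martingale property is exactly what the EMM hypothesis provides. The remaining work --- the Girsanov/L\'evy representation of $P$ under $\Q$, the transfer of the Section \ref{sec: MG MS} hypotheses to $\Q$, and the two Fubini computations of $\lim v_\Q$ at the respective boundary --- is entirely mechanical.
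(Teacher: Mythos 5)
Your argument is correct and is essentially the paper's: the paper derives the contradiction directly from the non-existence result for switching diffusions with an explosive regime (Theorem \ref{theo: existence Markov}(ii)), while you reach the same contradiction one level up by feeding \(c \equiv 0\) (resp.\ \(c(x,j) = \sigma(x,j)/x\), so that \(Z = P/P_0\)) into Theorem \ref{theo: mart MS}(ii), whose proof is itself exactly that non-existence argument after the corresponding change of measure. The two Fubini evaluations of \(v\) at the respective boundaries and the transfer of the Section \ref{sec: MG MS} hypotheses to \(\Q\) (via the definition of \(\mathcal{L}_\textup{sp}\) and Lemma \ref{lem: indep MC BM}) all check out.
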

\begin{proof}
	The result follows from the contradiction argument used in the proof of Theorem \ref{theo: general SLM}, where Theorem \ref{theo: existence Markov} has to be used instead of Theorem \ref{theo: 1D Feller p2}.
\end{proof}
In Section \ref{sec: modifying MLMM} we show that an equivalent change to the MLMM does not affect the Markov chain \(\xi\). Thus, Theorem \ref{theo: NE MS} generalizes Theorem \ref{theo: main non existence MS}.
\subsubsection{Example: Markov switching CEV model}
We consider a version of the CEV model (see \cite{Cox15}) with Markov switching.
Take \(\beta \colon J \to (0, \infty)\) and assume that 
	\[
	\sigma (x, j) = x^{\beta(j)}, \quad (x, j) \in (0, \infty) \times J.
	\]
Furthermore, assume that \(b \colon (0, \infty) \times J \to \mathbb{R}\) is locally bounded such that 
\begin{align*}
\int_{1}^\infty \int_{1}^y \frac{\exp (- \int_s^y \frac{2b(z, j)}{z^{2 \beta(j)}} \dd z)}{s^{2  \beta(j)}} \dd s \dd y
= \int_0^1 \int_y^1 \frac{\exp (- \int_s^y \frac{2b(z, j)}{z^{2 \beta(j)}} \dd z)}{s^{2  \beta(j)}} \dd s \dd y = \infty
\end{align*} 
for all \(j \in J\).
Then, the discounted asset price process \(P\) exists due to Theorem \ref{theo: existence Markov} below. 
Let \(Z\) be defined as in \eqref{eq: Z} with \(\theta_t = \frac{b(S_t, \xi_t)}{\sigma(S_t, \xi_t)}\).
In case \(N< \infty\), Corollary \ref{coro: MS MMM} shows the following:
	\begin{enumerate}
		\item[\textup{(a)}] The MLMM exists if and only if \(\beta(j) \geq 1\) for all \(j \in J\).
		\item[\textup{(b)}] The MMM exists if and only if \(\beta(j) = 1\) for all \(j \in J\).
		\item[\textup{(c)}] \(Z\) is a SMD if and only if \(\beta (j) \leq 1\) for all \(j \in J\).
	\end{enumerate}
\subsection{Comments on related literature}\label{sec: comments}
For continuous semimartingale markets the existence and non-existence of SMDs, ELMMs and EMMs has been studied in \cite{criens17b,Delbaen2002, MAFI:MAFI530,  MU(2012)}.
We comment on these works in more detail.

In \cite{Delbaen2002,MU(2012)} a one-dimensional diffusion framework has been considered. We discuss the results from \cite{MU(2012)} and refer to \cite[Remark 3.2]{MU(2012)} for comments on the relation between \cite{Delbaen2002} and \cite{MU(2012)}. In \cite{MU(2012)} it is assumed that the price process \(P = (P_t)_{t \in [0, T]}\) is a \([0, \infty)\)-valued diffusion such that 
\[
\dd P_t = b(P_t) \dd t + \sigma (P_t) \dd W_t,  \quad P_0 \in (0, \infty),
\]
where \(b \colon (0, \infty) \to \mathbb{R}\) and \(\sigma \colon (0, \infty) \to \mathbb{R} \backslash \{0\}\) are Borel functions satisfying
\[
\frac{1 + |b|}{\sigma^2} \in L^1_\textup{loc} ((0, \infty)),
\]
see also \cite[Definition 5.5.20]{KaraShre}. In the following we assume that \(P\) cannot explode to zero. In \cite{MU(2012)} the notions (NFLVR) and (NFFLVR) are also studied in case \(P\) can explode to zero and (NFLVR), (NFFLVR) and (NRA) are further studied for the infinite time horizon. For the non-explosive case the results from \cite{MU(2012)} are as follows: 
\begin{enumerate}
	\item[(a)] (NFLVR) \(\Leftrightarrow\) \(\frac{b}{\sigma} \in L^2_\textup{loc} ((0, \infty))\) and \(\int_0^1 \frac{x}{\sigma(x)} \dd x = \infty\), see \cite[Corollary 3.4]{MU(2012)}.
	\item[(b)] (NFFLVR) \(\Leftrightarrow\) \(\frac{b}{\sigma} \in L^2_\textup{loc} ((0, \infty))\) and \(\int_0^1 \frac{x}{\sigma(x)} \dd x = \int_1^\infty \frac{x}{\sigma (x)} \dd x = \infty\), see \cite[Theorem 3.6]{MU(2012)}.
	\item[(c)] If \(\frac{b}{\sigma} \in L^2_\textup{loc}((0, \infty))\), then (NRA) \(\Leftrightarrow\) \(\int_1^\infty \frac{x}{\sigma (x)} \dd x = \infty\), see \cite[Theorem 3.11]{MU(2012)}.
\end{enumerate}
Applying Corollary \ref{eq: SMD MS} with \(N = 1\) shows versions of (a) -- (c) under slightly more restrictive regularity assumptions on \(b\) and \(\sigma\).
The novelty of Corollary \ref{eq: SMD MS} or more generally Theorems \ref{theo: main existence MS} and \ref{theo: main non existence MS} is their scope of application.

A multi-dimensional diffusion setting has been studied in \cite{criens17b}. We explain the one-dimensional version: Assume that the price process \(P = (P_t)_{t \in [0, T]}\) is the stochastic exponential of
\[
\dd S_t = b(S_t) \dd t + \sigma (S_t) \dd W_t, 
\]
where \(b, \sigma \colon \mathbb{R} \to \mathbb{R}\) are locally bounded Borel functions such that \(\sigma^2\) is locally bounded away from zero. In this setting, \cite[Propositions 5.1]{criens17b} shows that (NFLVR) always holds and \cite[Proposition 5.2]{criens17b} implies that (NFFLVR) \(\Leftrightarrow\) (NRA) \(\Leftrightarrow\) \(\int_1^\infty \frac{\dd x}{\sigma^2(x)} = \infty\).
Under slightly different regularity assumptions on \(b\) and \(\sigma\), the same observation follows from Theorems \ref{theo: main1 SEM} and \ref{theo: no SMD gen}.
The novelty of Theorems \ref{theo: main1 SEM} and \ref{theo: no SMD gen} is that no diffusion structure is needed. In particular, the coefficients \(b\) and \(\sigma\) are allowed to depend on the path of \(S\) or several sources of risk.
In \cite{criens17b} the main interest lies in the multi-dimensional setting. 
We stress that it is possible to extend our results to a multi-dimensional framework. The type of condition will be similar as in \cite{criens17b}.

In \cite{MAFI:MAFI530} the price process \(P = (P_t)_{t \in [0, T]}\) is assumed to be the stochastic exponential of 
\[
\dd S_t = - \alpha (t, S, X) \theta_t \dd t + \alpha (t, S, X) \dd W_t, 
\]
where \(X = (X_t)_{t \in [0, T]}\) is a continuous process, \(\alpha\) and \(\theta\) are suitable processes such that the integrals are well-defined and \(\llambda \otimes \p\)-a.e. \(\alpha \not = 0\). 
The process \(X\) is called \emph{information process}. 
This setting is closely related to those from Section \ref{sec: Arb GM}. 
Let \(\mathscr{W}\) be the Wiener measure and let \(\nu\) be the law of 
\(
- \int_0^\cdot \theta_s \dd s + W.
\)
The main result from \cite{MAFI:MAFI530} is the following: If a.s. \(\int_0^T \theta_s^2 \dd s < \infty\), then (NFLVR) \(\Leftrightarrow \mathscr{W} \sim \nu\),
see \cite[Proposition 2.3]{MAFI:MAFI530}. 
This result is very different from ours, which are intended to give easy to verify conditions for a large class of models.
\section{Modifying Minimal Local Martingale Measures}\label{sec: modifying MLMM}
In Section \ref{sec: existence MM MS} we proved conditions for the existence of the minimal (local) martingale measure in a Markov switching framework. We  ask the following consecutive questions:
\begin{enumerate}
	\item[1.] Does the MLMM change the dynamics of the Markov chain?
	\item[2.] Is it possible to modify the MLMM such that the dynamics of the Markov chain are changed in a tractable manner?
\end{enumerate}
In this section we answer these questions from a general perspective under an independence assumption,  which holds in our Markov switching framework. 
\subsection{Martingale problems}
To characterize additional sources of risk in our financial market, we introduce a martingale problem.

Let \(J\) be a Polish space, define \(D(\mathbb{R}_+, J)\) to be the space of all \cadlag functions \(\mathbb{R}_+ \to J\) and \(\mathcal{D}\) to be the \(\sigma\)-field generated by the coordinate process \(X = (X_t)_{t \geq 0}\), i.e. \(X_t(\omega) = \omega(t)\) for \(\omega \in D(\mathbb{R}_+, J)\) and \(t \in \mathbb{R}_+\). 
We equip \(D(\mathbb{R}_+, J)\) with the Skorokhod topology, which renders it into a Polish space.
It is well-known that \(\mathcal{D}\) is the Borel \(\sigma\)-field on \(D(\mathbb{R}_+, J)\). We refer to \cite{EK,JS} for more details.
Let \(\D^o \triangleq (\mathcal{D}^o_t)_{t \geq 0}\) be the filtration induced by \(X\), i.e. \(\mathcal{D}^o_t \triangleq \sigma (X_s, s \in [0, t])\), and let \(\D \triangleq (\mathcal{D}_t)_{t \geq 0}\) be its right-continuous version, i.e. \(\mathcal{D}_t \triangleq\bigcap_{s > t} \mathcal{D}^o_s\) for all \(t \in \mathbb{R}_+\). 

Let \((B_n)_{n \in \mathbb{N}}\) be an increasing sequence of nonempty open sets in \(J\) such that \(\bigcup_{n \in \mathbb{N}}B_n = J\) and define 
\begin{align}\label{eq: rhon}
\rho_n (\omega) \triangleq \inf\big(t \in \mathbb{R}_+ \colon \omega(t) \not \in B_n \textup{ or } \omega (t-) \not \in B_n\big),\quad \omega \in D(\mathbb{R}_+, J), n \in \mathbb{N}.
\end{align}
Due to \cite[Proposition 2.1.5]{EK}, \(\rho_n\) is a \(\D^o\)-stopping time and, due to \cite[Problem 4.27]{EK}, \(\rho_n \nearrow \infty\) as \(n \to \infty\). We will use the sequence \((\rho_n)_{n \in \mathbb{N}}\) as a localizing sequence for test martingales of our martingale problem. We fix this sequence, because for some arguments we need a common localizing sequence consisting of \(\D^o\)-stopping times. 

The input data for our martingale problem is the following:
\begin{enumerate}
\item[(i)] A set \(A \subseteq C(J, \mathbb{R})\), where \(C(J, \mathbb{R})\) denotes the space of continuous functions \(J\to\mathbb{R}\).
\item[(ii)] A map  \(L \colon A \to \mathcal{PM}\) such that for all \(f \in A, t \in \mathbb{R}_+\) and \(\omega \in D (\mathbb{R}_+, J)\)
\[
\int_0^{t} \big|Lf(\omega, s)\big| \dd s  < \infty,
\]
where \(\mathcal{PM}\) denotes the space of all \(\D\)-progressively measurable processes.
\item[(iii)] An initial value \(j_0 \in  J\).
\item[(iv)] A time horizon \(0 < T \leq \infty\).
\end{enumerate}
We use the convention that in case \(T = \infty\) the interval \([0, T]\) is identified with \(\mathbb{R}_+\).
\begin{definition}
	\begin{enumerate}
		\item[\textup{(i)}] Let \((\Omega^o, \mathcal{F}^o, \F^o, \p^o)\) be a filtered probability space with right-continuous filtration \(\F^o = (\mathcal{F}^o_t)_{t \in [0, T]}\), supporting a c\`adl\`ag, adapted, \(J\)-valued process \(\xi = (\xi_t)_{t \in [0, T]}\).
		We say that \(\xi\) is a \emph{solution process to the martingale problem \((A, L, j_0, T)\)}, if for all \(f \in A\) and \(n \in \mathbb{N}\) the process
		\begin{align}\label{eq: pro MP}
		M^{f, n} \triangleq f (\xi_{\cdot \wedge \rho_n (\xi)})  - f(\xi_0) - \int_0^{\cdot \wedge \rho_n (\xi)} L f (\xi, s) \dd s
		\end{align}
		is a martingale, \(\p^o (\xi_0 = j_0) = 1\) and for all \(t \in [0, T]\) there exists a constant \(C = C(f, n, t) > 0\) such that a.s. \(\sup_{s \in  [0, t]} |M^{f, n}_s| \leq C\).
		\item[\textup{(ii)}] We say that the \emph{martingale problem has a solution} if there exists a filtered probability space which supports a solution process. 
		\item[\textup{(iii)}] We say that the martingale problem satisfies \emph{uniqueness} if the laws (seen as Borel probability measures on \(D(\mathbb{R}_+, J)\)) of any two solution processes, possibly defined on different filtered probability spaces, coincide.
		\item[\textup{(iv)}] If for all \(j_0 \in J\) the martingale problem \((A, L, j_0, T)\) has a solution and satisfied uniqueness, we call the martingale problem \((A, L, T)\) \emph{well-posed}.
	\end{enumerate}
\end{definition}
	Martingale problems were introduced by Stroock and Varadhan \cite{SV} in a diffusion setting. Martingale problems for semimartingales were studied in \cite{J79} and Markovian martingale problems with a Polish state space were studied in \cite{EK}. Our definition is unifying in the sense that it deals with non-Markovian processes and a Polish state space. Most of the conditions for existence and uniqueness given in \cite{EK, J79, SV} also apply to our setting.

\begin{example}[Martingale problem for Markov chains]\label{ex: xi2} 
	Suppose that \(J = \{1, \dots, N\}\) with \(1 \leq N \leq \infty\). We equip \(J\) with the discrete topology. Let \(\xi = (\xi_t)_{t \geq 0}\) be a Feller--Markov chain with initial value \(j_0 \in J\) and \(Q\)-matrix \(Q\).
	Due to \cite[Theorem 5]{doi:10.1112/jlms/s2-5.2.267}, the generator \((\mathcal{L}, D(\mathcal{L}))\) of \(\xi\) is given by 	\(
	\mathcal{L} = Q\) and
	\(
	D(\mathcal{L}) = \{f \in C_0(J) \colon Q f \in C_0(J)\},
	\)
	where \(C_0(J)\) denotes the space of all continuous functions \(J \to \mathbb{R}\) which are vanishing at infinity.
	Due to Dynkin's formula (see \cite[Proposition VII.1.6]{RY}) the process \(\xi\) solves the martingale problem \((D(\mathcal{L}), \mathcal{L}, j_0, \infty)\) and, due to \cite[Theorem 3.33]{liggett2010continuous}, the martingale problem satisfies uniqueness.
	
	Conversely, in case \(\xi\) is a solution process to the martingale problem \((\mathcal{L}, D(\mathcal{L}), j_0, \infty)\), where \((\mathcal{L}, D(\mathcal{L}))\) given as above is the generator of a Feller process, \(\xi\) is a Feller--Markov chain with \(Q\)-matrix \(Q\), see \cite[Theorem 3.4.2]{EK} and \cite[Theorem 3.33]{liggett2010continuous}.
\end{example}

\subsection{How to modify the MLMM}
Fix a finite time horizon \(0 < T < \infty\) and let  \((\Omega, \mathcal{F}, \F, \p)\) be a complete filtered probability space with right-continuous and complete filtration \(\F = (\mathcal{F}_t)_{t \in [0, T]}\), which supports a solution process \(\xi = (\xi_t)_{t \in [0, T]}\) to the martingale problem \((A, L, j_0, T)\). Moreover, assume that the martingale problem \((A, L, j_0, T)\) satisfies uniqueness. 
Let \(W = (W_t)_{t \in [0, T]}\) be a one-dimensional Brownian motion such that \(\sigma  (W_t, t \in [0, T])\) and \(\sigma (\xi_t, t \in [0, T])\) are independent. We think of \(W\) and \(\xi\) as two independent sources of risk influencing the market.
The independence assumption is satisfied when \(\xi\) is a Feller--Markov chain, see Lemma \ref{lem: indep MC BM} in the Appendix. 

In the following theorem we find a new property of the MLMM. To wit, we show that the MLMM preserves the independence of the sources of risk and their laws. Because the M(L)MM is often used for pricing, this observation is important for analytical and numerical computations. We prove the following theorem in Section \ref{sec: pf theo modi ind}.
\begin{theorem}\label{theo: indp preserving}
Let \(c = (c_t)_{t \in [0, T]}\) be a real-valued progressively measurable process such that a.s. \[\int_0^T c_s^2 \dd s < \infty\] and define
\begin{align*}
Z &\triangleq \mathcal{E} \Big( \int_0^\cdot c_s \dd W_s \Big), \quad
B \triangleq W  - \int_0^\cdot c_s \dd s.
\end{align*}
Suppose further that \(Z\) is a martingale and that the martingale problem \((A, L, j_0, T)\) satisfies uniqueness. 
Define \(\Q\) by the Radon--Nikodym derivative \(\frac{\dd \Q}{\dd \p} \triangleq Z_T\). 
Then, \(\sigma ( B_t, t \in [0, T])\) and \(\sigma (\xi_t, t \in [0,T])\) are \(\Q\)-independent, \(B\) is a \(\Q\)-Brownian motion and \(\xi\) is a solution process to the martingale problem \((A, L, j_0, T)\) on \((\Omega, \mathcal{F}, \F, \Q)\).
\end{theorem}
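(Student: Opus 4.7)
\emph{Step 1 (the Brownian motion $B$).} Since $Z$ is a true $\p$-martingale by hypothesis, Girsanov's theorem directly gives that $B = W - \int_0^\cdot c_s \, \dd s$ is a continuous $\Q$-local martingale, and $[B]_t = [W]_t = t$. Hence, by L\'evy's characterisation, $B$ is a $\Q$-Brownian motion.

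\emph{Step 2 (the martingale problem under $\Q$).} Fix $f \in A$ and $n \in \mathbb{N}$, and consider the bounded $\p$-martingale $M^{f,n}$ from \eqref{eq: pro MP}, which is adapted to the natural filtration of $\xi$. By Lemma \ref{lem: indep MC BM}, $\xi$ and $W$ generate $\p$-independent $\sigma$-algebras, and a standard argument (the product of two martingales in independent filtrations is a martingale in the joint filtration) shows that $W M^{f,n}$ is a $\p$-martingale, so $[W, M^{f,n}] = 0$. Using $\dd Z_t = Z_t c_t \, \dd W_t$ and integration by parts,
\[
\dd(Z_t M^{f,n}_t) = Z_{t-} \, \dd M^{f,n}_t + M^{f,n}_{t-} Z_t c_t \, \dd W_t,
\]
so $Z M^{f,n}$ is a $\p$-local martingale. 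The domination $|Z_t M^{f,n}_t| \leq C Z_t$ combined with uniform integrability of the closed $\p$-martingale $(Z_t)_{t \in [0, T]}$ upgrades this to a true $\p$-martingale; Bayes' formula then yields that $M^{f,n}$ is a $\Q$-martingale. Since $\Q \sim \p$, the initial condition and the uniform bound on $M^{f,n}$ pass to $\Q$, so $\xi$ is a solution process to $(A, L, j_0, T)$ on $(\Omega, \mathcal{F}, \F, \Q)$, and uniqueness of the martingale problem makes this characterisation canonical.

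\emph{Step 3 (independence via conditioning on $\F^\xi_T$).} The essential point is that $Z$ is driven only by $W$. Let $(\p^\omega)_\omega$ denote a regular conditional distribution of $\p$ given $\F^\xi_T$. The $\p$-independence of $W$ and $\xi$ implies that, for $\p^\xi$-a.e.\ $\omega$, $W$ remains a Brownian motion under $\p^\omega$, and after fixing the $\xi$-coordinate the progressive process $c$ becomes adapted to the natural filtration of $W$. Consequently $Z = \mathcal{E}(\int c \, \dd W)$ is a non-negative $\p^\omega$-local martingale, hence a supermartingale with $\E^{\p^\omega}[Z_T] \leq 1$; integrating against $\p^\xi$ and using $\E^\p[Z_T] = 1$ forces equality $\p^\xi$-a.s. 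Thus $Z$ is a genuine martingale under $\p^\omega$ with mean one, and conditional Girsanov applies: under the regular conditional $\Q$-distribution $Z_T \cdot \p^\omega = \Q(\,\cdot \mid \F^\xi_T)(\omega)$, the process $B$ is a Brownian motion. Its conditional law is therefore Wiener measure on $C([0,T], \mathbb{R})$, which does not depend on $\omega$, establishing the $\Q$-independence of $\sigma(B_t, t \in [0, T])$ from $\sigma(\xi_t, t \in [0, T]) = \F^\xi_T$.

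\emph{Main obstacle.} The delicate part is Step 3: one must verify that the stochastic integral $\int c\,\dd W$, originally defined only up to a $\p$-null set in $\F$, still agrees $\p^\omega$-almost surely with the pathwise stochastic integral against the $\p^\omega$-Brownian motion $W$, and that conditional Girsanov can genuinely be invoked. This rests on a Fubini-type disintegration argument together with the $\F$-progressivity of $c$, which forces $\F^W$-progressivity after conditioning on $\F^\xi_T$.
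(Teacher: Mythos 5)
Your Steps 1 and 2 track the paper's own argument closely: the paper likewise establishes \([W,M^g]=0\) by checking that \(WM^g_{\cdot\wedge\rho_m(\xi)}\) is a martingale for the joint filtration (monotone class argument plus the downward theorem), then uses integration by parts and Girsanov to conclude that \(M^g\) remains a local \(\Q\)-martingale, with the equivalence \(\Q\sim\p\) carrying over the initial condition and the uniform bounds. Those two steps are correct.

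Step 3 is where you diverge, and where there is a genuine gap. The paper proves the \(\Q\)-independence by an Ethier--Kurtz-type argument: it builds bounded test martingales \(K^f\) for \(B\), shows \(K^fM^g_{\cdot\wedge\rho_m(\xi)}\) is a \(\Q\)-martingale, defines an auxiliary measure \(\widetilde{\Q}\) with density proportional to \(K^f_\zeta\), checks that \(\xi\) still solves the martingale problem under \(\widetilde{\Q}\), and then invokes the \emph{uniqueness} of the martingale problem to conclude \(\widetilde{\Q}=\Q\) on \(\sigma(\xi_t,t\in[0,T])\); a second conditioning then yields the product formula for the finite-dimensional distributions. Your disintegration/conditional-Girsanov route never uses the uniqueness hypothesis at all, which should already be a warning sign. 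Concretely, the step that fails is the assertion that after conditioning on \(\mathcal{F}^\xi_T\) the process \(c\) ``becomes adapted to the natural filtration of \(W\)'' and that \(Z\) is a \(\p^\omega\)-local martingale. First, \(c\) is only assumed \(\F\)-progressive, and \(\F\) may carry information beyond \((W,\xi)\), so fixing the \(\xi\)-path does not turn \(c\) into a functional of \(W\). Second, and more fundamentally, the conditional Girsanov argument needs \(W\) to remain a Brownian motion for the initially enlarged filtration \((\mathcal{F}_t\vee\mathcal{F}^\xi_T)_{t\in[0,T]}\), so that \(\int_0^\cdot c_s\,\dd W_s\) is still a local martingale after enlargement and under the disintegrated measures. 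This does not follow from the hypotheses: that \(W\) is an \(\F\)-Brownian motion and that \(\sigma(W_t,t\in[0,T])\) and \(\sigma(\xi_t,t\in[0,T])\) are independent is a pairwise statement and does not give independence of the increment \(W_t-W_s\) from \(\mathcal{F}_s\vee\mathcal{F}^\xi_T\) --- the classical pitfall of initial enlargement of filtrations. Without that, \(Z\) need not be a \(\p^\omega\)-supermartingale and your ``mean one forces equality'' step has no starting point. Salvaging the approach would require extra hypotheses (e.g.\ that \(\F\) is generated by \((W,\xi)\) and that the enlargement property holds), i.e.\ a different theorem; the paper's test-martingale argument is designed precisely to avoid these enlargement issues and to put the uniqueness assumption to work.
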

Let us outline an important consequence of Theorem \ref{theo: indp preserving}:
If the MLMM exists, then its density is of the same type as \(Z\) in Theorem \ref{theo: indp preserving} and it follows that the joint law of the sources of risk remains unchanged by an equivalent change to the MLMM. 
In particular, in the setting of Section \ref{sec: MG MS} this means that \(\xi\) stays a Markov chain after a change to the MLMM. 

We ask further whether it is possible to modify the MLMM such that the law of \(\xi\) can be affected in a tractable manner. An answer to this question is provided by the next theorem.
A proof can be found in Section \ref{sec: pf mg JS}.

\begin{theorem}\label{theo: cdc}
	Let \(f \in A\) be strictly positive and suppose that the process
	\begin{align}\label{eq: mart to show bounded}
	Z \triangleq \frac{f(\xi)}{f(j_0)} \exp \Big(- \int_0^\cdot \frac{Lf (\xi, s)}{f(\xi_s)} \dd s \Big)
	\end{align}
	is a martingale.
	Set 	\[
	A^* \triangleq \big\{g \in A \colon fg \in A\big\},
	\]
	and 
	\[
	L^* g \triangleq \frac{L (f g) - g L f}{f}.
	\]
	Suppose that 
	for every \(g \in A^*\) and \(n \in \mathbb{N}\) there exists a constant \(C = C(g, n) > 0\) such that a.s. 
	\[
	\sup_{t  \in [0, T]} \Big| g(\xi_{t \wedge \rho_n(\xi)}) - g(\xi_0) - \int_0^{t \wedge \rho_n(\xi)} L^* g(\xi, s) \dd s \Big| \leq C.
	\]
	Define the probability measure \(\Q\) by the Radon--Nikodym derivative 
	\(
	\frac{\dd \Q}{\dd \p} \triangleq Z_T.
	\)
	Then, \(\sigma(\xi_t, t \in [0, T])\) and \(\sigma(W_t, t \in [0, T])\) are \(\Q\)-independent, \(W\) is a \(\Q\)-Brownian motion and \(\xi\) is a solution process for the martingale problem \((A^*, L^*, j_0, T)\) on \((\Omega, \mathcal{F}, \F, \Q)\).
\end{theorem}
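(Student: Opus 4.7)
The plan is to exploit that $Z$ is a strictly positive functional of the trajectory $\xi$ alone, so that $Z_T$ is $\sigma(\xi_t, t \in [0, T])$-measurable, and that $W$ and $\xi$ are $\p$-independent. Since $Z$ is given to be a $\p$-martingale with $Z_0 = 1$, the measure $\Q$ is well-defined and equivalent to $\p$. I would split the argument into two steps: (i) preservation of the law of $W$ and of the independence of $W$ from $\xi$ under $\Q$, and (ii) verification that $\xi$ solves the martingale problem $(A^*, L^*, j_0, T)$ under $\Q$.

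For (i), I would argue directly. For bounded $F$ measurable with respect to $\sigma(W_t, t \in [0, T])$ and bounded $G$ measurable with respect to $\sigma(\xi_t, t \in [0, T])$, the fact that $Z_T$ depends only on $\xi$ together with the $\p$-independence of $W$ and $\xi$ yields
\[
\E^\Q[FG] = \E^\p[FGZ_T] = \E^\p[F]\cdot \E^\p[GZ_T] = \E^\Q[F] \cdot \E^\Q[G],
\]
using also $\E^\Q[F] = \E^\p[F]\E^\p[Z_T] = \E^\p[F]$. This gives the desired $\Q$-independence and shows that $W$ has Wiener law under $\Q$, hence is a $\Q$-Brownian motion for its natural filtration. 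To upgrade to $\F$, Itô's formula applied to $\log Z$ shows that the continuous martingale part of $Z$ is a stochastic integral against the continuous martingale part of $\xi$, and therefore $\p$-orthogonal to $W$; since jumps of $Z$ come from jumps of $\xi$ and $W$ is continuous, $[W, Z] = 0$, and Girsanov's theorem delivers that $W$ is a $\Q$-Brownian motion for $\F$.

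For (ii), fix $g \in A^*$ and $n \in \N$, and set
\[
M^{g,*,n}_t \triangleq g(\xi_{t \wedge \rho_n(\xi)}) - g(\xi_0) - \int_0^{t \wedge \rho_n(\xi)} L^*g(\xi, s)\dd s.
\]
The crux is to show that $ZM^{g,*,n}$ is a $\p$-martingale. Writing $U_t \triangleq \int_0^t \frac{Lf(\xi, s)}{f(\xi_s)}\dd s$, one has $Z_t g(\xi_t) = (fg)(\xi_t)\, f(\xi_0)^{-1} e^{-U_t}$. Applying Itô's formula to the right-hand side (stopped at $\rho_n(\xi)$), using that $fg \in A$ makes $(fg)(\xi_{\cdot \wedge \rho_n}) - (fg)(\xi_0) - \int_0^{\cdot \wedge \rho_n}L(fg)(\xi,s)\dd s$ a $\p$-local martingale, and invoking the algebraic identity $fL^*g = L(fg) - gLf$ gives
\[
Z_{t \wedge \rho_n}\, g(\xi_{t \wedge \rho_n}) = g(\xi_0) + N_t + \int_0^{t \wedge \rho_n} Z_s L^*g(\xi, s)\dd s,
\]
where $N$ is a $\p$-local martingale. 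Combining this with the product rule applied to $Z_t \cdot \int_0^{t \wedge \rho_n} L^*g(\xi,s)\dd s$ and to $Z_t g(\xi_0)$, together with $Z$ being a local martingale, yields that $ZM^{g,*,n}$ is a $\p$-local martingale. The hypothesis $\sup_{t \in [0, T]}|M^{g,*,n}_t| \leq C$ then gives $|ZM^{g,*,n}| \leq C Z$, and since $Z$ is a true $\p$-martingale on $[0, T]$ (hence uniformly integrable), $ZM^{g,*,n}$ is of class (D) and therefore a true $\p$-martingale. A standard change-of-measure argument upgrades this to $M^{g,*,n}$ being a $\Q$-martingale, completing the verification that $\xi$ solves $(A^*, L^*, j_0, T)$ on $(\Omega, \mathcal{F}, \F, \Q)$.

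The main obstacle is the Itô computation in step (ii): $\xi$ is merely a c\`adl\`ag process in a Polish state space, so $f(\xi), g(\xi)$ and $(fg)(\xi)$ need not be continuous, and one must verify that the jump terms arising in the product and chain rules combine exactly according to the identity $fL^*g = L(fg) - gLf$ — with no residual jump contribution — while simultaneously controlling the integrability that legitimizes Itô's formula stopped at $\rho_n(\xi)$. Once this identity is in place, the independence portion (i) is a soft consequence of $Z$ being a functional of $\xi$ alone.
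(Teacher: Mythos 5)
Your proposal is correct and follows essentially the same route as the paper: the independence of \(W\) and \(\xi\) under \(\Q\) is obtained from the fact that \(Z_T\) is a functional of \(\xi\) alone together with the \(\p\)-independence and \([Z,W]=0\) (which, as you note, ultimately rests on orthogonality coming from independence, since \(\xi\) is merely \(J\)-valued and has no intrinsic ``continuous martingale part''), and the martingale-problem part is the same integration-by-parts computation showing \(ZM^g\) is a local \(\p\)-martingale via the identity \(fL^*g = L(fg) - gLf\), with the jump terms absorbed exactly by the covariation \([f(\xi),g(\xi)]\) as you anticipated. The only cosmetic difference is that you upgrade \(ZM^{g,*,n}\) to a true \(\p\)-martingale by a class (D) argument before changing measure, whereas the paper passes directly to a local \(\Q\)-martingale via \cite[Proposition III.3.8]{JS} and then uses \(\Q\sim\p\) to transfer the a.s.\ boundedness.
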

\begin{remark}\label{rem: loc mart}
	\begin{enumerate}
	\item[\textup{(i)}]
	For all \(\omega \in D(\mathbb{R}_+, J)\) and \(g \in A^*\)
	\[
	\int_0^T\Big(\Big|\frac{L f(\omega, s)}{f(\omega(s))}\Big| + \big| L^* g(\omega, s)\big| \Big)\dd s <  \infty, 
	\]
	because \(f\) and \(g\)  are continuous and the set \(\{\omega(t) \colon t \in  [0, T]\} \subseteq J\) is relatively compact, see \cite[Problem 16, p. 152]{EK}.
	Consequently, \(Z\) and the martingale problem \((A^*, L^*, j_0, T)\) are well-defined.
		\item[\textup{(ii)}]
	In view of \cite[Corollary 2.3.3]{EK}, the process \eqref{eq: mart to show bounded} is always a local martingale by the definition of the martingale problem. 
	\end{enumerate}
\end{remark}
We explain an application of Theorem \ref{theo: cdc}:
Suppose that the MLMM exists. Then, using the change of measure described in Theorem \ref{theo: cdc}, the MLMM can be changed further such that the law of \(\xi\) gets affected as described in the theorem, while the local martingale property of the price process is preserved. We stress that in this manner the MLMM induces a family of ELMMs, which is often infinite. 
In a Markov switching framework with \(N < \infty\) the following proposition  explains how the \(Q\)-matrix of the driving Feller--Markov chain can be changed.
\begin{proposition}\label{prop: com MC}
	Suppose that \(J = \{1, \dots, N\}\) with \(N < \infty\) and 
	\[
	L f(\omega, s) = Q f (\omega(s)), \quad \omega \in D(\mathbb{R}_+, J), s \in \mathbb{R}_+,
	\]
	for a \(Q\)-matrix \(Q = (q_{ij})_{i, j \in J}\) and \(f \in A \triangleq \mathbb{R}^N\). Let \(f \in (0, \infty)^N\) and \(A^*,L^*\) as in Theorem \ref{theo: cdc}. Then, \(A^* = \mathbb{R}^N\) and 
		\[
	L^* f(\omega, s) = Q^* f (\omega(s)), \quad f \in \mathbb{R}^N, \omega \in D(\mathbb{R}_+, J), s \in \mathbb{R}_+,
	\]
	for \(Q^* = (q^*_{ij})_{i, j \in J}\)  with
	\begin{align*} 
	q^*_{ij} \triangleq
	\begin{cases}
	q_{ij} \frac{f (j)}{f (i)},&i \not = j,\\
	- \sum_{k \not = i} q_{ik} \frac{f(k)}{f(i)},& i = j.
	\end{cases}
	\end{align*}
\end{proposition}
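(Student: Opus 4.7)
The plan is to reduce the proposition to a direct algebraic computation on $\mathbb{R}^N$, exploiting the very simple structure of both $A$ and $L$ in this finite-state setting.

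First I would observe that since $J$ is finite (with the discrete topology) the set $A = \mathbb{R}^N$ consists of \emph{all} functions $J \to \mathbb{R}$. Hence for any $g \in A$ the pointwise product $fg$, computed as $(fg)(i) = f(i) g(i)$, is again just an element of $\mathbb{R}^N$. This immediately yields $A^* = \{g \in \mathbb{R}^N : fg \in \mathbb{R}^N\} = \mathbb{R}^N$, settling the first assertion.

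Next I would insert the specific form $L h(\omega, s) = (Qh)(\omega(s))$ into the defining formula $L^* g = (L(fg) - g Lf)/f$ of Theorem \ref{theo: cdc}. Since $L$ only evaluates its argument at $\omega(s)$, the process $L^* g(\omega, s)$ depends on $\omega$ only through $\omega(s)$, so it suffices to compute, for each state $i \in J$,
\begin{align*}
(L^* g)(i) &= \frac{(Q(fg))(i) - g(i) (Qf)(i)}{f(i)}
= \frac{\sum_{j} q_{ij} f(j) g(j) - g(i) \sum_{j} q_{ij} f(j)}{f(i)}.
\end{align*}
Splitting the sums according to $j = i$ and $j \neq i$ (the diagonal terms $q_{ii} f(i) g(i)$ cancel), this equals
\begin{align*}
\sum_{j \neq i} q_{ij} \frac{f(j)}{f(i)} g(j) - g(i) \sum_{j \neq i} q_{ij} \frac{f(j)}{f(i)}
= \sum_{j \neq i} q^*_{ij}\, g(j) + q^*_{ii}\, g(i) = (Q^* g)(i),
\end{align*}
with the proposed $q^*_{ij}$. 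This is exactly the announced formula.

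There is essentially no obstacle here: once the definitions are unwound the statement is a one-line computation. The only points that deserve a brief remark are (a) that $Q^*$ is indeed a $Q$-matrix, namely $q^*_{ij} \geq 0$ for $i \neq j$ (which follows from $f > 0$ and $q_{ij} \geq 0$ for $i \neq j$) and $\sum_j q^*_{ij} = 0$ by construction, and (b) that the hypotheses of Theorem \ref{theo: cdc} are automatically met in this finite-state setting: the boundedness requirement on the $L^*$-martingales is trivial since $J$ is finite (so $g$ and $L^* g$ are bounded on $J$), and the martingale property of the density $Z$ in \eqref{eq: mart to show bounded} reduces to the well-known strict positivity and boundedness of $f(\xi)/f(j_0) \exp(-\int_0^\cdot Lf(\xi,s)/f(\xi_s)\dd s)$, which is a bounded nonnegative local martingale (cf. Remark \ref{rem: loc mart}(ii)) and therefore a true martingale. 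Thus the proof amounts to quoting Theorem \ref{theo: cdc} and performing the displayed computation.
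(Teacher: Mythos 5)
Your proof is correct. The paper itself does not carry out this computation: it simply cites \cite[Proposition 5.1]{palmowski2002}, so your argument is a self-contained replacement rather than a variant of the paper's proof. The substance is exactly right: since $J$ is finite and $A=\mathbb{R}^N$ is the set of \emph{all} real functions on $J$, the closure condition $fg\in A$ is vacuous and $A^*=\mathbb{R}^N$; and plugging $Lh(\omega,s)=(Qh)(\omega(s))$ into $L^*g=(L(fg)-gLf)/f$ and cancelling the diagonal terms $q_{ii}f(i)g(i)$ yields precisely the matrix $Q^*$ claimed. Your two side remarks are also accurate but strictly optional for the proposition as stated: the statement only identifies $A^*$ and $L^*$, so neither the verification that $Q^*$ is a $Q$-matrix nor the martingale property of the density $Z$ (which, as you note, is a bounded non-negative local martingale on $[0,T]$ in this finite-state setting and hence a true martingale) is needed to prove it, though both are relevant when the proposition is used in conjunction with Theorem \ref{theo: cdc}. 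What your approach buys over the paper's is transparency and independence from the external reference; what the citation buys is brevity and a pointer to the general exponential-change-of-measure machinery for Markov processes of which this is a special case.
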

\begin{proof}
	See \cite[Proposition 5.1]{palmowski2002}.
\end{proof}

A useful criterion for the martingale property of \eqref{eq: mart to show bounded} is given by Theorem \ref{prop: mg f} below. We consider it as an extension of results from \cite{CFY, J79, SV}.
In the following \(X = (X_t)_{t \geq 0}\) denotes the coordinate process on \(D(\mathbb{R}_+,J)\).

\begin{definition} A set \(\widetilde{A} \subseteq A\) is called a \emph{determining set} for the martingale problem \((A, L, \infty)\) if for all \(j_0 \in J\) a Borel probability measure \(\mu\) on \(D(\mathbb{R}_+, J)\) is the law of a solution process to the martingale problem \((A, L, j_0, \infty)\) if and only if for all \(f \in \widetilde{A}\) and \(n \in \mathbb{N}\) the process
	\[
	f(X_{\cdot \wedge \rho_n}) - f(X_0) - \int_0^{\cdot \wedge \rho_n} Lf (X, s)\dd s
	\]
	is a \(\mu\)-martingale and \(\mu(X_0 = j_0) = 1\).
\end{definition}
\begin{example}[Determining set for Feller--Markov chains]
	Let \(J, A\) and \(L\) be as in Example \ref{ex: xi2}. Note that
	\[
	G \triangleq \big\{ (f, Qf) \colon f \in A\big\} \subset C_0(J) \times C_0 (J).
	\]
	Because \(C_0(J)\) equipped with the uniform metric is a separable metric space, \(G\) is a separable metric space when equipped with the taxicap uniform metric.
	Hence, we find a countable set \(\widetilde{A} \subseteq A\) such that for each \((f, g) \in G\) there exists a sequence \((f_n)_{n \in \mathbb{N}} \subset \widetilde{A}\) with
	\[
	\|f_n - f\|_\infty + \|Q f_n - g\|_\infty \to 0 \quad \text{ as } n\to \infty.
	\]
	Due to \cite[Proposition 4.3.1]{EK}, \(\widetilde{A}\) is a determining set for the martingale problem \((A, L, \infty)\).
\end{example}
A proof for the following theorem can be found in Section \ref{sec: pf mg JS}.
\begin{theorem}\label{prop: mg f}
	Let \(f, A^*\) and \(L^*\) be as in Theorem \ref{theo: cdc}. Moreover, assume there exists a countable determining set for the martingale problem \((A^*, L^*, \infty)\) 
	and that
	\[
	L^* g (\xi, t) = Kg(\xi_{t}), \quad f \in A^*, t \in \mathbb{R}_+,
	\]
	where \(K\) maps \(A^*\) into the space of Borel functions \(J \to \mathbb{R}\). Finally, assume that the martingale problem \((A^*, L^*, \infty)\) is well-posed and that \((\rho_n (\xi))_{n \in \mathbb{N}}\) is a localizing sequence for the local martingale \eqref{eq: mart to show bounded}, see Remark \ref{rem: loc mart}. Then, the process \eqref{eq: mart to show bounded} is a martingale.
\end{theorem}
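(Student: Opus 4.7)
The plan is to reduce the martingale property of $Z$ to the identity $\E^\p[Z_T] = 1$ and to derive this identity from a local change of measure combined with the assumed well-posedness of $(A^*, L^*, \infty)$. Since $(\rho_n(\xi))_{n \in \N}$ is a localizing sequence for $Z$, the stopped process $Z^n \triangleq Z_{\cdot \wedge \rho_n(\xi)}$ is a true nonnegative $\p$-martingale with $Z^n_0 = 1$, and I define probability measures $\Q^n$ on $(\Omega, \cF_T)$ by $\frac{\dd \Q^n}{\dd \p} \triangleq Z^n_T$.

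The first key step is to show that, under $\Q^n$, the stopped process $\xi_{\cdot \wedge \rho_n(\xi)}$ solves a stopped version of the $(A^*, L^*, j_0)$-martingale problem. This is a direct reworking of the computation in the proof of Theorem \ref{theo: cdc}: the defining identity $f \cdot L^* g = L(fg) - g L f$, combined with integration by parts applied to $Z^n\,g(\xi_{\cdot \wedge \rho_n(\xi)})$ and the fact that the $\p$-martingales attached to $f$, $fg$ and $g$ at level $A$ are localized by $\rho_m \wedge \rho_n$, shows that
\[
g(\xi_{\cdot \wedge \rho_m(\xi) \wedge \rho_n(\xi)}) - g(\xi_0) - \int_0^{\cdot \wedge \rho_m(\xi) \wedge \rho_n(\xi)} L^* g(\xi, s) \dd s
\]
is a $\Q^n$-local martingale for every $g$ in the fixed countable determining set and every $m \in \N$; the uniform boundedness hypothesis carried over from Theorem \ref{theo: cdc} upgrades it to a true bounded $\Q^n$-martingale.

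Let $\widetilde{\Q}$ denote the unique Borel probability measure on $D(\er_+, J)$ under which the coordinate process $X$ solves the well-posed martingale problem $(A^*, L^*, j_0, \infty)$. Using the countable determining set together with a standard pasting argument---gluing a solution of $(A^*, L^*, \cdot, \infty)$ started from $\xi_{\rho_n(\xi)}$ onto the $\Q^n$-law of $\xi_{\cdot \wedge \rho_n(\xi)}$ and verifying through the determining set that the resulting measure on $D(\er_+, J)$ solves $(A^*, L^*, j_0, \infty)$---well-posedness forces the pasted measure to coincide with $\widetilde{\Q}$. Restricting to events determined by the path stopped at $\rho_n$ yields
\[
\Q^n\big(\rho_n(\xi) > T\big) = \widetilde{\Q}\big(\rho_n > T\big),
\]
and monotone convergence with $\rho_n \nearrow \infty$ gives $\widetilde{\Q}(\rho_n > T) \to 1$ as $n \to \infty$.

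To close the argument, observe that on $\{\rho_n(\xi) > T\}$ one has $Z_T = Z^n_T$ and that these events increase $\p$-a.s.\ to a set of full $\p$-measure. Monotone convergence and the preceding identity yield
\[
\E^\p[Z_T] \geq \lim_{n \to \infty} \E^\p\big[Z^n_T \1_{\{\rho_n(\xi) > T\}}\big] = \lim_{n \to \infty} \Q^n\big(\rho_n(\xi) > T\big) = 1,
\]
while Fatou's lemma for the nonnegative local martingale $Z$ gives $\E^\p[Z_T] \leq 1$. Hence $\E^\p[Z_T] = 1$, so $Z$ is a true $\p$-martingale on $[0, T]$. The main obstacle is the combined step of identifying $\xi_{\cdot \wedge \rho_n(\xi)}$ as a stopped solution of $(A^*, L^*, j_0)$ under $\Q^n$ and then transferring this information to the canonical measure $\widetilde{\Q}$; the countability of the determining set, the bounded-test-martingale hypothesis, and the assumption that $L^* g$ depends on $\xi$ only through its present value $\xi_t$ (so that the Markovian pasting yields a genuine solution) are precisely the ingredients that allow this transfer and the subsequent appeal to uniqueness.
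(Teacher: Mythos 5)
Your proposal is correct and follows essentially the same route as the paper: a local change of measure along the localizing sequence $(\rho_n(\xi))_{n\in\mathbb{N}}$, identification of the law of the stopped process under $\Q^n$ with the unique solution law $\mu^*$ of the well-posed martingale problem via a pasting/local-uniqueness argument on the canonical space (the paper's Lemma \ref{lem:luJS}), and a monotone convergence step giving $\E^\p[Z_T]=1$. The only difference is cosmetic: the paper transfers everything to $D(\mathbb{R}_+,J)$ at the outset and works with $\mu = \p\circ\xi^{-1}$, whereas you stay on $(\Omega,\mathcal{F},\p)$ until the pasting step.
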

Roughly speaking, this theorem shows that in Markovian settings we can modify the law of \(\xi\) whenever the martingale problem \((A^*, L^*, \infty)\) is well-posed.
\begin{remark}
	The existence of a solution to the martingale problem \((A^*, L^*, j_0, T)\) is often necessary for the martingale property of \(Z\), see Theorem \ref{theo: cdc}. 
\end{remark}

\section{Proof of Theorems \ref{theo: mart Ito} and  \ref{theo: general SLM}}\label{sec: pf}
The following section is divided into three parts. In the first part we prove Lyapunov-type conditions for non-explosion of It\^o processes, in the second part we prove non-existence conditions for It\^o processes and in the third part we deduce Theorems \ref{theo: mart Ito} and  \ref{theo: general SLM}.

\subsection{Criteria for non-explosion}
In this  section we pose ourselves into a version of the setting from Section \ref{sec: GC}.
Let \(I = (l, r)\) be as in Section \ref{sec: GC} and  \((\Omega,\mathcal{F})\) be a measurable space which supports three real-valued processes \(S = (S_t)_{t \in [0, T]}, b = (b_t)_{t \in [0,T]}\) and \(\sigma = (\sigma_t)_{t \in [0, T]}\).
For every \(n \in \mathbb{N}\) we fix a probability measure \(\Q^n\) and a right-continuous \(\Q^n\)-complete filtration \(\F^n= (\mathcal{F}^n_t)_{t \in [0, T]}\) on \((\Omega, \mathcal{F})\) such that \(S, b\) and \(\sigma\) are \(\F^n\)-progressively measurable. We set \(\tau_n\) as in Theorem \ref{theo: mart Ito}, i.e. 
\[
\tau_n =  \inf(t \in [0,  T] \colon  S_t  \not \in (l_n, r_n)), 
\]
where \(l_n \searrow l, r_n \nearrow r\) are sequences such that \(l < l_{n+1} < l_n < r_n < r_{n +1} <  r\).
Moreover, suppose that \(\Q^n\)-a.s.
\[
\dd S_{t \wedge \tau_n} = b_t \1_{\{t \leq \tau_n\}} \dd t + \sigma_t \1_{\{t \leq \tau_n\}} \dd W^n_t, \quad S_0 \in I,
\]
where \(W^n = (W^n_t)_{t \in [0, T]}\) is a Brownian motion on \((\Omega, \mathcal{F}, \F^n, \Q^n)\). It is implicit that the integrals are well-defined.
We also assume that 
\begin{align}\label{eq: nondeg n}
 \llambda \otimes \Q^n\text{-a.e. } \sigma \not = 0 \text{ for all } n \in \mathbb{N}
\end{align}
and we fix a Borel function \(\zeta \colon [0, T] \to \mathbb{R}_+\) such that \(\zeta \in L^1([0, T])\).
\subsubsection{A Lyapunov criterion}\label{sec: cond E}
In this section we give a Lyapunov-type condition for 
 \begin{align}\label{eq: cond qn}
\limsup_{n \to \infty} \Q^n(\tau_n = \infty) = 1. \end{align} 
For \(f \in C^1(I, \mathbb{R})\) with locally absolutely continuous derivative, it is well-known that there exists a \(\llambda\)-null set \(N^f \subset I\) such that \(f\) has a second derivative \(f''\) on \(I \backslash N^f\). In this case, we set 
\[
\mathcal{L} f   \triangleq f'(S) b + \tfrac{1}{2} f'' (S) \1_{I \backslash N^f}(S) \sigma^2.
\]

\begin{theorem}\label{theo: NE 1}
	Let \(V \colon I \to (0, \infty)\) be differentiable with locally absolutely continuous derivative such that 
	\begin{align}\label{eq: minimum convergence}
	\limsup_{n \to \infty} V(l_n) \wedge V(r_n) = \infty.
	\end{align}
	Suppose there exists a \(\llambda\)-null set \(N \subset I\) such that 
	\begin{equation}\label{ineq: Lyapunov2}\begin{split}
	\mathcal{L}V (t) (\omega) \1_{I \backslash N} (S_t(\omega)) \leq \zeta(t)&V(S_t(\omega)) \1_{I \backslash N}(S_t(\omega))  \\&\text{ for } \llambda \otimes \Q^n\text{-a.a. } (t, \omega) \in [0, T] \times \Omega, \quad n \in \mathbb{N}.
	\end{split}
	\end{equation}
	Then, \eqref{eq: cond qn} holds.
\end{theorem}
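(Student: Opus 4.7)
The plan is to fix $n$, apply an Itô-type formula to $V(S_{\cdot \wedge \tau_n})$ under $\Q^n$, use the Lyapunov inequality to build an exponentially discounted non-negative supermartingale, and then translate the supermartingale bound into the required tail estimate on $\tau_n$ via Markov's inequality, letting $n$ run through a subsequence realizing the $\limsup$ in \eqref{eq: minimum convergence}.

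First, under $\Q^n$ the stopped process $S_{\cdot \wedge \tau_n}$ is a continuous semimartingale with drift $b \1_{[0, \tau_n]}$ and martingale part $\int_0^{\cdot \wedge \tau_n} \sigma_s \dd W^n_s$, and it takes values in the compact interval $[l_n, r_n] \subset I$. Because $V \in C^1(I,\er)$ has a locally absolutely continuous derivative (so $V'' \in L^1_\loc(I)$), and because \eqref{eq: nondeg n} together with the occupation-times formula forces $\llambda \otimes \Q^n$-a.a.\ $(s, \omega)$ to satisfy $S_s(\omega) \in I \setminus N^V$, a generalized Itô formula of Bouleau--Yor / Föllmer--Protter--Shiryaev type (or Krylov's extension, available thanks to the non-degeneracy) applies and yields
\[
V(S_{t \wedge \tau_n}) = V(S_0) + \int_0^{t \wedge \tau_n} \mathcal{L}V(s)\,\dd s + \int_0^{t \wedge \tau_n} V'(S_s) \sigma_s\,\dd W^n_s.
\]

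Next, set $U^n_t \triangleq \exp\bigl(- \int_0^t \zeta(s)\,\dd s\bigr) V(S_{t \wedge \tau_n})$. Integration by parts gives a drift
\[
e^{-\int_0^t \zeta(s)\,\dd s}\Bigl[\mathcal{L}V(t)\,\1_{\{t \leq \tau_n\}} - \zeta(t)\,V(S_{t \wedge \tau_n})\Bigr],
\]
which is non-positive: on $\{t \leq \tau_n\}$ by the Lyapunov hypothesis \eqref{ineq: Lyapunov2}, and on $\{t > \tau_n\}$ because it reduces to $-\zeta(t) V(S_{\tau_n}) \leq 0$ as $\zeta \geq 0$ and $V > 0$. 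Since $V$ is continuous and strictly positive, it is bounded on $[l_n, r_n]$, so $U^n$ is a bounded, non-negative local supermartingale and hence a true supermartingale. Continuity of $S_{\cdot \wedge \tau_n}$ ensures $S_{\tau_n} \in \{l_n, r_n\}$ on $\{\tau_n \leq T\}$, and positivity of $V$ bounds $U^n_T$ from below on that event, giving
\[
e^{-\|\zeta\|_{L^1([0,T])}} \bigl(V(l_n) \wedge V(r_n)\bigr)\,\Q^n(\tau_n \leq T) \leq \E^{\Q^n}[U^n_T] \leq V(S_0).
\]

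Finally, choose a subsequence $(n_k)$ along which $V(l_{n_k}) \wedge V(r_{n_k}) \to \infty$, which is possible by \eqref{eq: minimum convergence}; the inequality above then forces $\Q^{n_k}(\tau_{n_k} \leq T) \to 0$, i.e.\ $\Q^{n_k}(\tau_{n_k} = \infty) \to 1$, which is exactly \eqref{eq: cond qn}. The main technical obstacle is the first step: justifying an Itô formula for a merely $C^1$ function with an absolutely continuous first derivative. This is precisely where the standing non-degeneracy \eqref{eq: nondeg n} is essential — it guarantees that the Lebesgue-null exceptional set $N^V$ on which $V''$ may fail to exist has zero occupation measure for $S$, so the second-order term can be defined unambiguously and the usual Itô identity can be invoked. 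Everything afterwards is a routine supermartingale / Markov-inequality argument.
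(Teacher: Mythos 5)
Your argument is correct and follows essentially the same route as the paper: occupation-times formula plus non-degeneracy to neutralize the null set where \(V''\) fails to exist, the generalized It\^o formula for a \(C^1\) function with locally absolutely continuous derivative, the exponentially discounted supermartingale \(U^n\), the lower bound \(V(l_n)\wedge V(r_n)\) on \(\{\tau_n\le T\}\), and the subsequence realizing the \(\limsup\) in \eqref{eq: minimum convergence}. The only cosmetic difference is that you discount by \(\exp(-\int_0^t\zeta(s)\,\ds)\) instead of \(\exp(-\int_0^{t\wedge\tau_n}\zeta(s)\,\ds)\) and evaluate the supermartingale inequality directly at \(T\), which changes nothing of substance.
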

\begin{proof}
	Let \(L^S\) be the local time of the continuous \(\Q^n\)-semimartingale \(S_{\cdot \wedge \tau_n}\). The occupation times formula yields that  \(\Q^n\)-a.s.
	\begin{align*}
	\int_0^{\tau_n \wedge T} \1_N(S_{s}) \sigma^2_s \dd s = 2 \int_{- \infty}^\infty \1_N(x) L^S_{T} (x) \dd x = 0,
	\end{align*}
	which implies that \(\Q^n\)-a.s \(\llambda (\{t \in [0, \tau_n \wedge T] \colon S_{t} \in N\}) = 0\). We will use this fact in the following without further reference.
	
	Set 
	\begin{align*}
	U^n \triangleq \exp \Big(- \int_0^{\cdot \wedge \tau_n} \zeta(s)\dd s\Big) V(S_{\cdot \wedge \tau_n}).
	\end{align*}
	Using a generalized version of It\^o's formula (see \cite[Lemma IV.45.9]{RW2}), 
	we obtain that 
	the process
	\begin{equation*}\begin{split}
	U^n + \int_0^{\cdot \wedge \tau_n} \exp \Big(- \int_0^{s} \zeta(z) \dd z \Big)\big(\zeta(s) V(S_s) - \mathcal{L} V (s) \big)  \dd s 
	\end{split}
	\end{equation*}
	is a local \(\Q^n\)-martingale.
	We deduce from \eqref{ineq: Lyapunov2} and the fact that non-negative local martingales are supermartingales, that \(\Q^n\)-a.s.
	\begin{align*} 
	U^n \leq \Q^n\textup{-supermartingale starting at } U_0 = V(S_0).
	\end{align*}
W.l.o.g. we assume that \(S_0 \in (l_{1}, r_{1})\).
Note that for all \(n \in \mathbb{N}\) we have \(\Q^n\)-a.s. \(S_{\tau_n} \in \{l_n, r_n\}\) on \(\{\tau_n \leq T\}\).
	We conclude that for all \(n \in \mathbb{N}\)
	\begin{align*}
	\Q^n(\tau_n \leq T) \exp \Big(- \int_0^T \zeta(s)\dd s \Big) (V(l_n) \wedge V(r_n))
	&\leq \E^{\Q^n} \big[ U^n_{\tau_n} \1_{\{\tau_n \leq T\}} \big]
	\\&\leq \E^{\Q^n} \big[ U^n_{T}\big] \leq V(S_0).
	\end{align*}
	By \eqref{eq: minimum convergence} there exists a sequence \((n_k)_{k \in \mathbb{N}} \subset \mathbb{N}\) with \(n_k \to \infty\) as \(k \to \infty\) such that 
	\(
	V(l_{n_k}) \wedge V(r_{n_k}) > 0\) for all \(k \in \mathbb{N}
	\)
	and 
	\(
	\lim_{k \to \infty} V(l_{n_k}) \wedge V(r_{n_k}) = \infty.
	\)
	We deduce from
	\[
	0 \leq \Q^{n_k} (\tau_{n_k} \leq T) \leq V(S_0) \exp \Big(\int_0^T \zeta(s)\dd s \Big) \frac{1}{V(l_{n_k}) \wedge V(r_{n_k})}
	\]
	that 
	\[
	\lim_{k \to \infty} \Q^{n_k} (\tau_{n_k} \leq T) = 0.
	\]
	Because \(\{\tau_n \leq T\}^c = \{\tau_n = \infty\}\), we obtain
	\[
	1 =  \lim_{k \to \infty} \Q^{n_k} (\tau_{n_k} = \infty) \leq \limsup_{n \to \infty} \Q^n(\tau_n = \infty) \leq 1, 
	\]
	which implies  \eqref{eq: cond qn}. The proof is complete.
\end{proof}

\subsubsection{An integral test}
Let \(\overline{a} \colon I \to (0, \infty)\) and \(\underline{u}, \overline{u} \colon I \to \mathbb{R}\) be Borel functions such that 
\[
\frac{1}{\overline{a}} + |\underline{u}| + |\overline{u}| \in L^1_\textup{loc}(I).
\]
Recall from Section \ref{sec: MP SE} that in case \((f, g)\) is one of the pairs \((\underline{u}, \overline{a}), (\overline{u}, \overline{a})\) we set
\begin{align}\label{eq: v}
v(f, g)(x) = \int_{x_0}^x \exp \Big( - \int_{x_0}^y 2 f(z) \dd z \Big) \int_{x_0}^y \frac{2 \exp (\int_{x_0}^u2 f(z) \dd z)}{g(u)} \dd u \dd y,\quad x \in I,
\end{align}
for a fixed \(x_0 \in I\). 
The main result of this section is the following:
\begin{theorem}\label{theo: 1D Feller}
		Suppose that 
		\begin{align} \label{eq: U1 ass}
		\lim_{x \nearrow r} v \left( \overline{u}, \overline{a}\right)(x) = 
		\lim_{x \searrow l} v\left(\underline{u}, \overline{a}\right)(x) =\infty.
		\end{align}
		Moreover, for all \(n \in \mathbb{N}\) assume that for \(\llambda \otimes \Q^n\)-a.a. \((t, \omega) \in [0, T] \times \Omega\)
		\begin{equation}\label{eq: to hold}
		\begin{split}
		\sigma^2_t (\omega) &\leq \zeta (t) \overline{a} (S_t (\omega)), \\
		b_t (\omega) &\leq \sigma_t^2 (\omega) \overline{u} (S_t(\omega)),\\
		b_t (\omega) &\geq \sigma^2_t (\omega) \underline{u} (S_t(\omega)).
		\end{split}
		\end{equation}
		Then, \eqref{eq: cond qn} holds. 
	\end{theorem}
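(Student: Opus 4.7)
The plan is to apply the Lyapunov criterion of Theorem \ref{theo: NE 1}, after constructing a suitable function $V \colon I \to (0,\infty)$ that blows up at both boundaries $l$ and $r$ and satisfies the requisite differential inequality. The natural candidate, motivated by Feller's classical test, is built by splicing the two functions appearing in \eqref{eq: U1 ass}. Concretely, I would set
\[
V(x) \triangleq 1 + V_r(x) + V_l(x), \qquad V_r(x) \triangleq v(\overline{u},\overline{a})(x) \1_{[x_0, r)}(x), \qquad V_l(x) \triangleq v(\underline{u}, \overline{a})(x) \1_{(l, x_0]}(x),
\]
using the representation of $v(\underline{u},\overline a)$ on $(l,x_0)$ with the integrals oriented from $x$ up to $x_0$, so that $V_l \geq 0$. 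Since $v'(f,g)(x_0) = 0$ for either pair $(f,g)$, the splicing produces a function $V$ of class $C^1$ whose derivative is locally absolutely continuous on $I \setminus \{x_0\}$; including $x_0$ in the null set $N$ allows Theorem \ref{theo: NE 1} to be applied.

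The growth condition \eqref{eq: minimum convergence} is immediate from \eqref{eq: U1 ass}: $V(r_n) \geq 1 + v(\overline{u}, \overline{a})(r_n) \to \infty$ and $V(l_n) \geq 1 + v(\underline{u}, \overline{a})(l_n) \to \infty$. The key computation is verifying the Lyapunov inequality \eqref{ineq: Lyapunov2}. Differentiating the definition \eqref{eq: v} yields the identity
\[
\tfrac12 g(x) v''(f,g)(x) + g(x) f(x) v'(f,g)(x) = 1,
\]
which I would record as a short lemma. On the region $\{S_t > x_0\}$, $V'(S_t) = V_r'(S_t) \geq 0$, so the upper bound $b_t \leq \sigma_t^2 \overline{u}(S_t)$ gives
\[
\mathcal{L}V(t) = V_r'(S_t) b_t + \tfrac{1}{2} V_r''(S_t) \sigma_t^2 \leq \sigma_t^2 \bigl( \overline{u}(S_t) V_r'(S_t) + \tfrac12 V_r''(S_t) \bigr) = \frac{\sigma_t^2}{\overline{a}(S_t)} \leq \zeta(t) \leq \zeta(t) V(S_t),
\]
the last equality coming from the ODE for $v(\overline{u}, \overline{a})$ and the last two inequalities from assumption \eqref{eq: to hold} and from $V \geq 1$. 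On $\{S_t < x_0\}$ the signs conspire favourably: $V'(S_t) = V_l'(S_t) \leq 0$, and the lower bound $b_t \geq \sigma_t^2 \underline{u}(S_t)$ again flips the inequality the right way, leading by the ODE for $v(\underline{u}, \overline{a})$ to the same bound $\mathcal{L}V(t) \leq \sigma_t^2/\overline{a}(S_t) \leq \zeta(t) V(S_t)$.

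The single point $S_t = x_0$ sits in the Lebesgue null set $N = \{x_0\}$, and by the occupation times formula combined with \eqref{eq: nondeg n} this set is visited on a time set of zero $\llambda \otimes \Q^n$-measure, exactly as in the opening paragraph of the proof of Theorem \ref{theo: NE 1}. Having verified all hypotheses of Theorem \ref{theo: NE 1}, the conclusion \eqref{eq: cond qn} follows. The main obstacle I anticipate is purely bookkeeping: making precise that $V$ has the regularity required by Theorem \ref{theo: NE 1} despite the $C^2$-failure at $x_0$, and verifying the ODE identity and the sign gymnastics in each region without sign errors, particularly on $(l, x_0)$ where reversing the orientation of the defining integrals of $v(\underline{u}, \overline{a})$ is needed so that $V_l$ genuinely tends to $+\infty$ as $x \searrow l$.
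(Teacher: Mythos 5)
Your proposal is correct and follows essentially the same route as the paper: both splice a Lyapunov function at $x_0$ from the two $v$-functions, verify the sign conditions on $V'$ together with the drift bounds and $\sigma^2 \leq \zeta\,\overline{a}$ on each side, and conclude via Theorem \ref{theo: NE 1}. The only (harmless) difference is that you work with $1+v$ directly, which satisfies $\overline{a}\,(\tfrac12 v'' + f v') = 1 \leq V$ almost everywhere, whereas the paper invokes \cite[Lemma 5.5.26]{KaraShre} to obtain dominating functions $U_1, U_2 \geq 1 + v$ solving $\overline{a}\,(\tfrac12 U'' + f U') = U$; both choices yield the inequality $\mathcal{L}V \leq \zeta V$ and the required blow-up at the boundaries.
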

\begin{proof}
Due to \cite[Lemma 5.5.26]{KaraShre}, there are differentiable functions \(U_1 \colon [x_0, r) \to [1, \infty)\) and \(U_2 \colon (l, x_0] \to [1, \infty)\) with locally absolutely continuous derivatives and a \(\llambda\)-null set \(N' \subset I\) such that \(U_1\) is increasing, \(U_2\) is decreasing, \(U_1 (x_0) = U_2(x_0) = 1, U'_1 (x_0) = U'_2(x_0) = 0\), for all \(x \in [x_0, r) \backslash N'\) and for all \(y \in (l, x_0] \backslash N'\)
\begin{align*}
\overline{a} (x) \left(\tfrac{1}{2}  U_1'' (x) + \overline{u} U_1' (x)\right) &= U_1 (x)\quad \textup{ and }\quad
 \overline{a} (y) \left(\tfrac{1}{2} U_2'' (y) + \underline{u} U_2'  (y)\right)= U_2 (y),
\end{align*}
\(1 + v(\overline{u}, \overline{a}) \leq U_1\) 
and \(1 + v(\underline{u}, \overline{a}) \leq U_2\). 
We define 
\begin{align*}
V \triangleq \begin{cases}
U_1,&\textup{ on } [x_0, r),
\\
U_2,&\textup{ on } (l,x_0],
\end{cases}
\end{align*}
which is a differentiable function with locally absolutely continuous derivative. In particular, \(V' \geq 0\) on \([x_0, r)\), \(V' \leq 0\) on \((l, x_0]\),
\(
\frac{1}{2} V'' + \underline{u} V' \geq 0
\)
on \((l, x_0] \backslash N'\) and \(
\frac{1}{2} V'' + \overline{u} V' \geq 0
\) on \([x_0, r)\backslash N'\).
Furthermore, \begin{align*}\lim_{x \nearrow r} V(x) = \lim_{x\searrow l} V(x) = \infty,\end{align*} due to the assumption \eqref{eq: U1 ass}.
 Let \(\widetilde{N}\) be the set of all \((t, \omega) \in [0, T] \times \Omega\) such that \eqref{eq: to hold} holds. 
For all \((t, \omega) \in \widetilde{N} \colon S_t (\omega) \in [x_0, r) \backslash N'\) 
	\begin{align*}
		\mathcal{L}V (t) (\omega) 
		&= \tfrac{1}{2}  \sigma^2_t(\omega) V'' (S_t(\omega)) + b_t(\omega) V' (S_t(\omega))
		\\&\leq \sigma^2_t(\omega) \left( \tfrac{1}{2} V'' (S_t(\omega)) + \overline{u} (S_t(\omega)) V' (S_t(\omega))\right)
		\\&\leq \zeta (t) \overline{a}(S_t(\omega)) \left(\tfrac{1}{2} V'' (S_t(\omega))+ \overline{u} (S_t(\omega))V' (S_t(\omega))\right)
		= \zeta (t) V (S_t(\omega)).
	\end{align*}
	In the same manner we see that for all \((t, \omega) \in \widetilde{N}\colon S_t (\omega) \in (l, x_0] \backslash N'\) 
	\[
	\mathcal{L} V(t) (\omega) \leq \zeta(t) V(S_t(\omega)).
	\]
	We conclude that \eqref{ineq: Lyapunov2} holds for \(N = N'\). 
The claim follows from Theorem \ref{theo: NE 1}.
\end{proof}
\subsection{Criteria for non-existence}
In this section we give a converse to Theorem  \ref{theo: 1D Feller}. As in Section \ref{sec: MP SE}, let \(I = (l, r)\) with \(- \infty \leq l < r \leq + \infty\)  and let \(\underline{a} \colon I \to (0, \infty)\) and \(\underline{u}, \overline{u} \colon I \to \mathbb{R}\) be Borel functions such that 
\[
\frac{1}{\underline{a}} + |\underline{u}| + |\overline{u}| \in L^1_\textup{loc}(I).
\]
If \((f, g)\) is one of the pairs \((\underline{u}, \underline{a}), (\overline{u}, \underline{a})\), we set \(v(f, g)\) as in \eqref{eq: v}.

Let \(0 < T < \infty\), \((\Omega, \mathcal{F})\) be a measurable space with right-continuous filtration \(\F = (\mathcal{F}_t)_{t \in [0, T]}\) and \(s_0 \in I\). Suppose that \((\Omega, \mathcal{F}, \F)\) supports three progressively measurable processes \(S = (S_t)_{t \in [0, T]}, b = (b_t)_{t \in [0, T]}\) and \(\sigma = (\sigma_t)_{t \in [0, T]}\). We define \(\mathcal{I}\) be the set of all pairs \((\Q, B)\) consisting of a probability measure \(\Q\) on \((\Omega, \mathcal{F})\) and an \((\F, \Q)\)-Brownian motion \(B = (B_t)_{t \in [0, T]}\) with the properties that \(S\) is \(\Q\)-a.s. \(I\)-valued and
\[
\dd S_t = b_t \dd t + \sigma_t \dd B_t,\quad S_0 = s_0,
\]
where it is implicit that the integrals are well-defined.

\begin{theorem} \phantomsection \label{theo: 1D Feller p2} 
	\begin{enumerate}		
		\item[\textup{(i)}] 
		Suppose that the pair \((\underline{u}, \underline{a})\) satisfies the YW conditions (see Section \ref{sec: GC} for this terminology) and 
		\begin{align*}
		\lim_{x \nearrow r} v \left( \underline{u}, \underline{a}\right)(x) < \infty.  \end{align*}
		Then, there exists no pair \((\Q, B) \in \mathcal{I}\) such that for \(\llambda \otimes \Q\)-a.a. \((t, \omega) \in [0, T] \times \Omega\)
		\begin{equation}\label{eq: contr 1}
		\begin{split}
		\underline{a} (S_t(\omega)) &\leq \sigma^2_t(\omega),\\
		\underline{u} (S_t(\omega)) \sigma^2_t(\omega) &\leq b_t(\omega).
		\end{split}
		\end{equation}
		\item[\textup{(ii)}]
		Suppose that the pair \((\overline{u}, \underline{a})\) satisfies the YW conditions and 
		\[
		\lim_{x \searrow l} v\left(\overline{u}, \underline{a}\right)(x) < \infty.
		\]
		Then, there exists no pair \((\Q, B)\in \mathcal{I}\) such that for \(\llambda \otimes \Q\)-a.a. \((t, \omega) \in [0, T] \times \Omega\)
		\begin{equation}\label{eq: contr 2}
		\begin{split}
		\underline{a} (S_t(\omega)) &\leq \sigma^2_t(\omega),\\
		\overline{u} (S_t(\omega)) \sigma^2_t(\omega) &\geq b_t(\omega).
		\end{split}
		\end{equation}
	\end{enumerate}
\end{theorem}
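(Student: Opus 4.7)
The plan is to prove part~(i) by contradiction; part~(ii) then follows by reversing the orientation of the state space (replacing $S$ by its reflection and swapping the roles of the boundaries $l$ and $r$, which interchanges $\underline{u}$ with its negative reflection and leaves $\underline{a}$ essentially unchanged). Suppose a pair $(\Q, B) \in \mathcal{I}$ satisfying \eqref{eq: contr 1} exists. The hypothesis $\lim_{x \nearrow r} v(\underline{u}, \underline{a})(x) < \infty$ is precisely Feller's explosion criterion for the one-dimensional diffusion
\[
\dd Y_u = \underline{u}(Y_u) \underline{a}(Y_u) \dd u + \sqrt{\underline{a}(Y_u)} \dd \widetilde{B}_u, \qquad Y_0 = s_0,
\]
so $Y$ attains $r$ in finite time with positive probability. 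The target is to construct, on an extension of the given space, such a $Y$ together with a pathwise domination $Y_u \leq \widetilde{S}_u$, where $\widetilde{S}$ is a time-changed version of $S$. Combined with $\widetilde{S} < r$, this yields the required contradiction.

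To align the diffusion coefficients of $S$ and $Y$, I would perform the time change
\[
A_t = \int_0^t \frac{\sigma_s^2}{\underline{a}(S_s)} \dd s, \qquad C_u = A^{-1}_u.
\]
The first inequality in \eqref{eq: contr 1} gives $A_t \geq t$, so $A$ is absolutely continuous and strictly increasing with $A_T \geq T$. Setting $\widetilde{S}_u = S_{C_u}$ and invoking the time-change rules for stochastic integrals together with the Dambis--Dubins--Schwarz theorem produces a Brownian motion $\widetilde{B}$ (on a suitably enlarged filtration) and the representation
\[
\dd \widetilde{S}_u = \widetilde{b}_u \dd u + \sqrt{\underline{a}(\widetilde{S}_u)} \dd \widetilde{B}_u
\]
on $[0, A_T]$, whose drift satisfies
\[
\widetilde{b}_u = b_{C_u} \frac{\underline{a}(\widetilde{S}_u)}{\sigma_{C_u}^2} \geq \underline{u}(\widetilde{S}_u) \underline{a}(\widetilde{S}_u)
\]
by the second line of \eqref{eq: contr 1} (divide both sides by $\sigma^2 > 0$ and multiply by $\underline{a}(\widetilde{S})$). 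Next, solve the $(\underline{u}\underline{a}, \sqrt{\underline{a}})$-SDE driven by $\widetilde{B}$ up to explosion and apply a Yamada--Watanabe comparison to obtain $Y_u \leq \widetilde{S}_u$ on $[0, A_T]$. Since Feller's test forces $Y$ to exit through $r$ with positive probability by time $T \leq A_T$, this contradicts $\widetilde{S}_u \in I$.

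The main obstacle is the comparison step. The classical Yamada--Watanabe comparison theorem (cf.\ Ikeda--Watanabe Chapter~IV) requires a common H\"older-type modulus for the diffusion coefficients and a pointwise dominance of the drifts against a genuine SDE coefficient; the YW conditions on $(\underline{u}, \underline{a})$ supply exactly the required moduli $h_n, \kappa_n$ on each compact $[l_n, r_n]$. I would therefore carry out the comparison on the stopping intervals $[0, \tau_n \wedge \rho_n]$, where $\tau_n$ is the first time $\widetilde{S}$ exits $[l_n, r_n]$ and $\rho_n$ is the same for $Y$, and then pass to the limit $n \to \infty$ to obtain the global comparison up to the explosion time of $Y$. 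A minor technical nuisance is that $\widetilde{b}$ is only progressively measurable; this is handled by the one-sided formulation of the comparison, in which only the smaller process $Y$ is a genuine SDE with state-dependent coefficients while $\widetilde{S}$ is an It\^o process whose drift merely dominates $\underline{u}(\widetilde{S})\underline{a}(\widetilde{S})$. Once this comparison is secured, the Feller-test dichotomy closes the argument.
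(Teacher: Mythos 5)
Your proposal is correct and follows essentially the same route as the paper: the same time change $\phi = A^{-1}$ normalizing the diffusion coefficient to $\underline{a}^{1/2}$, the same one-sided Yamada--Watanabe comparison against the solution of $\dd Y = \underline{a}\underline{u}(Y)\dd t + \underline{a}^{1/2}(Y)\dd B'$ carried out on localizing intervals, and the same contradiction via explosion through $r$. The only refinement the paper adds is that Feller's test alone gives explosion in finite time with positive probability, while reaching $r$ \emph{by the fixed horizon} $T$ requires the "hits points fast" result of Bruggeman and Ruf together with regularity of $Y$.
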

\begin{proof}
\textbf{(i).} 
We use a comparison and contradiction argument as in the proof of \cite[Theorem 4.1]{criens17b}. 
For contradiction, assume that \((\Q, B)\in \mathcal{I}\) is such that \eqref{eq: contr 1} holds. W.l.o.g. we assume that \(\F\) is \(\Q\)-complete. In the following we work on \((\Omega, \mathcal{F}, \F, \Q)\).
Because \(\underline{a}\) is positive and continuous and a.s \[\llambda (\{t \in [0, T] \colon \underline{a}(S_t) > \sigma^2_t\}) = 0, \qquad \int_0^T \sigma^2_s \dd s < \infty,\] the function \[[0, T] \ni t \mapsto \int_0^t \frac{\sigma^2_s}{\underline{a}(S_s)}\dd s\] is a.s. finite, continuous and strictly increasing, which implies that the same holds for the function
\[
\phi_t \triangleq \inf \Big( s \in [0, T] \colon \int_0^s \frac{\sigma^2_r}{\underline{a}(S_r)} \dd r \geq t \Big),\quad t \in [0, T],
\]
see \cite[pp. 179 -- 180]{RY}.
Furthermore, we have a.s. \(\phi_t \leq t \text{ for all } t \in [0, T]\). We redefine \(\phi_t\) to be zero on the null sets where the previously mentioned properties fail. 
Because \(\F\) is complete, this modification of \((\phi_t)_{t \in [0, T]}\) is an increasing and continuous sequence of finite stopping times. 

Next, we set \(\F_\phi \triangleq (\mathcal{F}_{\phi_t})_{t \in [0, T]}\).
The following lemma follows from \cite[Propositions V.1.4, V.1.5]{RY}. 
\begin{lemma}\label{lem: tc}
	Suppose that \((H_t)_{t \in [0, T]}\) is  progressively measurable. Then, the time-changed process \((H_{\phi_t})_{t \in [0, T]}\) is \(\F_\phi\)-progressively measurable and a.s.
	\[
	\int_0^t H_{\phi_s} \dd s = \int_0^{\phi_t} \frac{H_s \sigma^2_s}{\underline{a}(S_s)} \dd s, \quad t \in [0, T],
	\]
	provided the integrals are well-defined.
	Moreover, the process \(B_\phi  =  (B_{\phi_t})_{t \in [0, T]}\) is a continuous local \(\F_\phi\)-martingale with a.s.
	\(
	[B_\phi, B_\phi]= \phi
	\),
	and if a.s. \(\int_0^T H^2_s \dd s < \infty\) then also a.s. \(\int_0^T H_{\phi_s}^2 \dd \phi_s < \infty\) and a.s.
	\[
	\int_0^t H_{\phi_s} \dd B_{\phi_s} = \int_0^{\phi_t} H_s \dd B_s, \quad t \in [0, T].
	\]
\end{lemma}
We deduce from Lemma \ref{lem: tc} 
that a.s.
\begin{align*}
\llambda \big(\big\{ t \in [0, T] \colon \underline{a}(S_{\phi_t}) &> \sigma^2_{\phi_t} \text{ or } \underline{u}(S_{\phi_t}) \sigma^2_{\phi_t} > b_{\phi_t}\big\}\big) \\&= \int_0^{\phi_T} \frac{\1_{\{\underline{a}(S_s) > \sigma^2_s\} \cup \{\underline{u}(S_s) \sigma^2_s > b_s\}} \sigma^2_s}{\underline{a}(S_s)} \dd s = 0. 
\end{align*}
We will use this observation in the following without further reference. 

Applying Lemma \ref{lem: tc} with
\[
H_t \triangleq \frac{\underline{a}(S_t)}{\sigma^2_t} \1_{\{\sigma^2_t > 0\}}, \quad t \in [0, T], 
\]
yields that a.s.
\begin{align}\label{eq: meas eq}
\dd \phi_t = \frac{\underline{a}(S_{\phi_t})}{\sigma^2_{\phi_t}} \dd t.
\end{align}

Using again Lemma \ref{lem: tc}, we obtain that a.s. for all \(t \in [0, T]\)
\begin{align*}
S_{\phi_t} &= S_{\phi_0} + \int_0^{\phi_t} b_{s}\dd s + \int_0^{\phi_t} \sigma_s \dd B_s
\\&= s_0 + \int_0^t \frac{b_{\phi_s}\underline{a}(S_{\phi_s})}{\sigma^2_{\phi_s}} \dd s + \int_0^t \sigma_{\phi_s} \dd B_{\phi_s}
\\&= s_0 + \int_0^t  \frac{b_{\phi_s}\underline{a}(S_{\phi_s})}{\sigma^2_{\phi_s}} \dd s + \int_0^t \underline{a}^\frac{1}{2} (S_{\phi_s}) \dd B'_s,
\end{align*}
where 
\[
B' \triangleq \int_0^\cdot \frac{\sigma_{\phi_s} \dd B_{\phi_s}}{\underline{a}^\frac{1}{2} (S_{\phi_s})}.
\]
Due to Lemma \ref{lem: tc} and \eqref{eq: meas eq}, we obtain that a.s. for all \(t \in [0, T]\)
\begin{align*}
[B', B']_t &= \int_0^t \frac{\sigma^2_{\phi_s}}{\underline{a}(S_{\phi_s})} \dd [B_{\phi}, B_{\phi}]_s \\&= \int_0^t \frac{\sigma^2_{\phi_s}}{\underline{a}(S_{\phi_s})} \dd \phi_s
\\ &= \int_0^t \frac{\sigma^2_{\phi_s}}{\underline{a}(S_{\phi_s})} \frac{\underline{a}(S_{\phi_s})}{\sigma^2_{\phi_s}} \dd s
 = t.
\end{align*}
Consequently, \(B'\) is a continuous local \(\F_\phi\)-martingale with a.s. \([B', B']_t = t\) for \(t \in [0, T]\), i.e. an \(\F_\phi\)-Brownian motion due to L\'evy's characterization.
We summarize that
\begin{align*}
\dd S_{\phi_t} =\underline{a} (S_{\phi_t}) \frac{b_{\phi_t}}{\sigma^2_{\phi_t}} \dd t + \underline{a}^\frac{1}{2}(S_{\phi_t}) \dd B'_t, \quad S_{\phi_0} = s_0.
\end{align*}
Using a standard extension of \((\Omega, \mathcal{F}, \F_{\phi}, \Q)\) we can extend \((B'_t)_{t \in [0, T]}\) to a Brownian motion \((B'_t)_{t \geq 0}\), see, e.g., the proof of \cite[Theorem V.1.7]{RY}.

We will use the following terminology: When we say that \((V_t)_{t \geq 0}\) is a continuous \([l, r]\)-valued process we mean that all its paths are continuous in the \([l, r]\)-topology \emph{and} absorbed in \(\{l, r\}\), i.e. that \(V_t = V_{\tau(V)}\) for all \(t \geq \tau(V) \triangleq \inf(t \in \mathbb{R}_+ \colon V_t \not \in I)\). This convention is in line with \cite[Definition 5.5.20]{KaraShre}.
\begin{definition}\label{def: see}
	Let \(\mu \colon I \to \mathbb{R}\) and \(v \colon I \to \mathbb{R}\) be Borel functions.
	We say that an SDE
	\begin{align}\label{eq: def SDE}
	\dd V_t = \mu(V_t) \dd t + v(Y_t)\dd B^*_t,
	\end{align}
	where \((B^*_t)_{t \geq 0}\) is a one-dimensional Brownian motion,
	satisfies \emph{strong existence and uniqueness up to explosion}, if on any complete probability space \((\Omega^o, \mathcal{F}^o, \p^o)\) with complete right-continuous filtration \(\F^o = (\mathcal{F}^o_t)_{t \geq 0}\), which supports a Brownian motion \((B^*_t)_{t \geq 0}\) and an \(I\)-valued \(\mathcal{F}^o_0\)-measurable random variable \(\psi\), there exists a up to indistinguishability unique adapted continuous \([l, r]\)-valued process \((V_t)_{t \geq 0}\) such that a.s.
	\[
	V_{t \wedge \theta_n} = \psi + \int_0^{t \wedge \theta_n} \mu(V_s)\dd s + \int_0^{t \wedge \theta_n} v(V_s)\dd B^*_s, \quad t \geq 0, n \in \mathbb{N},
	\]
	where 
	\[
	\theta_n \triangleq \inf(t \in \mathbb{R}_+ \colon V_t \not \in (l_n, r_n)), \quad n \in \mathbb{N}.
	\]
	It is implicit that the integrals are well-defined.
	The process \((V_t)_{t \geq 0}\) is called the \emph{solution process to \eqref{eq: def SDE} with driver \((B^*_t)_{t \geq 0}\)}.
\end{definition} 

Due to  \cite[Remark 4.50 (2), Theorem 4.53]{MANA:MANA19911510111},
the SDE 
\begin{align}\label{eq: some SDE}
\dd V_t = \underline{a}(V_t) \underline{u}(V_t)\dd t + \underline{a}^\frac{1}{2} (V_t)\dd B^*_t
\end{align}
satisfies strong existence and uniqueness up to explosion.

Consequently, there exists a solution process \((Y_t)_{t \geq 0}\) to \eqref{eq: some SDE} with driver \((B'_t)_{t \geq 0}\).
The following lemma is proven after the proof of Theorem \ref{theo: 1D Feller p2} is complete.
\begin{lemma}\label{lem: order}
	Almost surely \(Y_t \leq S_{\phi_t}\) for all \(t \leq T \wedge \tau(Y)\).
\end{lemma}
Because \((Y_t)_{t \geq 0}\) is regular due to \cite[Proposition 2.2]{mijatovic2012b} and \(\lim_{x \nearrow r} v \left( \underline{u}, \underline{a}\right)(x) < \infty\), we deduce from \cite[Proposition 2.12]{mijatovic2012b} and \cite[Theorem 1.1]{bruggeman2016} that \((Y_t)_{t \in [0, T]}\) reaches \(r\) with positive probability. Consequently, due to Lemma \ref{lem: order}, \((S_t)_{t \in [0, T]}\) reaches \(r\) with positive probability. This is a contradiction. 

\textbf{(ii).} 
For contradiction, assume that \((\Q, B)\in \mathcal{I}\) is such that \eqref{eq: contr 2} holds.
By the same arguments as in part (i), there exists a process \((Y_t)_{t \geq 0}\) such that
\[
\dd Y_t =  \underline{a}(Y_t) \overline{u} (Y_t) \dd t +\underline{a}^\frac{1}{2}(Y_t) \dd B'_t, \quad Y_0 = s_0, 
\]
and a.s. \(S_{\phi_t} \leq Y_t\) for all \(t \leq T \wedge \tau(Y).\) Because \(\lim_{x \searrow l} v\left(\overline{u}, \underline{a}\right)(x) <\infty,\) the process \((Y_t)_{t \in [0, T]}\) reaches \(l\) with positive probability and again the pathwise ordering gives a contradiction. 
\end{proof}
\noindent
\textit{Proof of Lemma \ref{lem: order}:}
There are functions \(h_n \in \mathscr{H}\) and \(\kappa_n \in \mathscr{K}\) such that for all \(x, y \in [l_n, r_n]\) 
\begin{align*}
|\underline{a}^\frac{1}{2} (x) - \underline{a}^\frac{1}{2} (y)| &\leq h_n (|x - y|),\quad |\underline{a}(x) \underline{u}(x) - \underline{a}(y) \underline{u}(y)| \leq \kappa_n (|x - y|).
\end{align*}
We set \[\rho_n \triangleq \inf (t \in [0, T]  \colon S_{\phi_t} \not \in (l_n, r_n) \textup{ or } Y_t \not \in (l_n, r_n)).\]
Note that for all \(t \in (0, T]\) we have 
\[
\int_0^{t \wedge \rho_n} \frac{\dd [Y - S_{\phi},Y - S_{\phi}]_s}{h^2_n(|Y_s -S_{\phi_s}|)} = \int_0^{t \wedge \rho_n} \frac{\big(\underline{a}^\frac{1}{2} (Y_s) - \underline{a}^\frac{1}{2} (S_{\phi_s})\big)^2}{h^2_n(|Y_s - S_{\phi_s}|)} \dd s \leq \int_0^t \dd s = t.
\]
Thus, \cite[Lemma IX.3.3]{RY} implies that the local time of \(Y_{\cdot \wedge \rho_n}- S_{\phi_{\cdot \wedge \rho_n}}\) in the origin is a.s. zero.
We deduce from Tanaka's formula that a.s.
\[
(Y_{t \wedge \rho_n} - S_{\phi_{t \wedge \rho_n}})^+ = \int_0^{t \wedge \rho_n} \1_{\{Y_s - S_{\phi_s} > 0\}} \dd (Y_s - S_{\phi_s}), \quad t \in [0, T].
\]
Taking expectation, using the martingale property of the Brownian part of \(Y_{\cdot \wedge \rho_n} - S_{\phi_{\cdot \wedge \rho_n}}\) and Jensen's inequality yields that for all \(t \in [0, T]\)
\begin{align*}
\E^\Q \big[ (Y_{t \wedge \rho_n} - S_{\phi_{t \wedge \rho_n}})^+\big] &= \E^\Q\Big[ \int_0^{t \wedge \rho_n} \1_{\{Y_s - S_{\phi_s} > 0\}} \Big( \underline{a} (Y_s) \underline{u}(Y_s) - \underline{a}(S_{\phi_s}) \frac{b_{\phi_s}}{\sigma^2_{\phi_s}}\Big)\dd s \Big]
\\&\leq \E^\Q\Big[ \int_0^{t \wedge \rho_n} \1_{\{Y_s - S_{\phi_s} > 0\}} \big| \underline{a} (Y_s) \underline{u}(Y_s) - \underline{a}(S_{\phi_s}) \underline{u}(S_{\phi_s})\big|\dd s \Big]
\\&\leq \E^\Q\Big[ \int_0^{t \wedge \rho_n} \1_{\{Y_s - S_{\phi_s} > 0\}} \kappa_n (|Y_s - S_{\phi_s}|)\dd s \Big]
\\&\leq \int_0^t \E^\Q\big[ \kappa_n ((Y_{s \wedge \rho_n} - S_{\phi_{s \wedge \rho_n}})^+)\big] \dd s
\\&\leq \int_0^t \kappa_n \big( \E^\Q\big[(Y_{s \wedge \rho_n} - S_{\phi_{s \wedge \rho_n}})^+\big] \big)\dd s.
\end{align*}
Finally, Bihari's lemma (see \cite[Lemma E.2]{criens17b}) yields that for all \(t \in [0, T]\)
\[
\E^\Q \big[ (Y_{t \wedge \rho_n} - S_{\phi_{t \wedge \rho_n}})^+\big] = 0.
\]
Consequently, due to the continuous paths of \(Y\) and \(S_\phi\), the claim follows.
\qed

\subsection{Proof of Theorem \ref{theo: mart Ito}}
Because non-negative local martingales are supermartingales, \(Z\) is a martingale if and only if  \(\E^\p[Z_T] = 1\).
By (M1), we can define \(\Q^n\) by the Radon--Nikodym derivative \(\frac{\dd \Q^n}{\dd \p} = Z_{T \wedge \tau_n}\). 
We note that the assumption \(\llambda \otimes \p\)-a.e. \(\sigma \not = 0\) implies \eqref{eq: nondeg n}.
Due to Girsanov's theorem, there exists a \(\Q^n\)-Brownian motion \(B^n = (B^n_t)_{t \in [0, T]}\) such that 
\[
\dd S_{t \wedge \tau_n} = (b_t + c_t \sigma_t) \1_{\{t \leq \tau_n\}} \dd t + \sigma_t \1_{\{t \leq \tau_n\}} \dd B^n_t.
\]
The monotone convergence theorem yields that 
\begin{align*}
\E^\p \big[ Z_T\big] &= \limsup_{n \to \infty} \E^\p \big[ Z_T \1_{\{\tau_n = \infty\}}\big]
\\&= \limsup_{n \to \infty} \Q^n(\tau_n = \infty).
\end{align*}
In view of (M2) and (M3), Theorem \ref{theo: 1D Feller} yields that 
\[
\limsup_{n \to \infty} \Q^n (\tau_n = \infty) = 1.
\]
Thus, \(\E^\p[Z_T] = 1\) and the proof is complete.
\qed

\subsection{Proof of Theorem \ref{theo: general SLM}}
For contradiction, assume that \((Z_t)_{t \in [0, T]}\) is a martingale. 
Define  a probability measure \(\Q\) by the Radon--Nikodym derivative \(\frac{\dd \Q}{\dd \p} \triangleq Z_T\). By Girsanov's theorem, there exists a \(\Q\)-Brownian motion \(B = (B_t)_{t \in [0, T]}\) such that 
\[
\dd S_t =  (b_t + c_t \sigma_t)\dd t + \sigma_t \dd B_t.
\]
Consequently, in case (SL1) holds we obtain a contradiction to part (i) of Theorem \ref{theo: 1D Feller p2} and in case (SL2) holds we obtain a contradiction to part (ii) of Theorem \ref{theo: 1D Feller p2}. The proof is complete.
\qed

\section{Proof of Theorem \ref{theo: mart MS}}\label{sec: pf MS}
The section is split into two parts: First, we prove existence, non-existence and local uniqueness for switching diffusions and second, we deduce Theorem \ref{theo: mart MS}.

\subsection{Existence and non-existence criteria}\label{sec: jump type}
As in Section \ref{sec: MG MS}, let \(I = (l, r)\) with \(- \infty \leq l < r \leq + \infty\) and  \(J = \{1, \dots, N\}\) with \(1 \leq N \leq \infty\). Moreover, let \(u \colon I \times J \to \mathbb{R}\) and \(\sigma \colon I \times J \to \mathbb{R}\backslash \{0\}\) be Borel functions such that 
\begin{align}\label{eq: ES cond}
\frac{1 + u(\cdot, j)}{\sigma^2 (\cdot, j)} \in L^1_\textup{loc} (I) \text{ for all } j \in J.
\end{align}
We fix \(x_0 \in I\) and set 
\[
v(x, j) \triangleq \int_{x_0}^x \exp \left( - \int_{x_0}^y  \frac{2u(z, j)}{\sigma^2(z, j)}  \dd z \right) \int_{x_0}^y \frac{2 \exp (\int_{x_0}^s \frac{2u(z, j)}{\sigma^2(z, j)} \dd z)}{\sigma^2(s, j)} \dd s \dd y
\]
for \((x, j) \in I \times J\).
Let \((\Omega, \mathcal{F}, \F, \p)\) be a filtered complete probability space with a right-continuous and complete filtration \(\F = (\mathcal{F}_t)_{t \geq 0}\), which supports a Brownian motion \(W = (W_t)_{t \geq 0}\), a \(J\)-valued irreducible continuous-time Feller--Markov chain \(\xi = (\xi_t)_{t \geq 0}\) and an \(I\)-valued \(\mathcal{F}_0\)-measurable random variable \(\phi\).
The main result of this section is the following:
\begin{theorem}\phantomsection\label{theo: existence Markov}
	\begin{enumerate} 
		\item[\textup{(i)}] Suppose that \(\sigma\) satisfies the ES conditions for all \(j \in J\) (see Section \ref{sec: MG MS} for this terminology) and
		that \begin{align}\label{eq: MC FT} \lim_{x \searrow l} v(x, j) = \lim_{x \nearrow r} v(x, j) = \infty \text{ for all } j \in J.\end{align} Then, there exists an adapted \(I\)-valued continuous process \((Y_t)_{t \geq 0}\) such that 
		\begin{align}\label{eq: SDE MDP}
		Y = \phi + \int_0^\cdot u(Y_s, \xi_s)\dd s + \int_0^\cdot \sigma (Y_s, \xi_s) \dd W_s, 
		\end{align}
		where it is implicit that the integrals are well-defined.
		\item[\textup{(ii)}] 
		Assume there exists a \(j \in J\) such that \(\sigma\) satisfies the ES conditions for \(j\) 
		and 
		\[\lim_{x \searrow l} v(x, j) < \infty \text{ or } \lim_{x \nearrow r} v(x, j) <  \infty.\] 
		Let \(0 < T \leq \infty\) be a time horizon. If \(\xi\) is recurrent, then there exists no adapted \(I\)-valued continuous process \(Y = (Y_t)_{t \in [0, T]}\) such that \eqref{eq: SDE MDP} holds.
	\end{enumerate}
\end{theorem}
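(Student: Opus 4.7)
The plan is to reduce both parts to a sequence of frozen one-dimensional SDEs on the intervals between the jump times of $\xi$, exploiting the fact that $\xi$ is independent of $W$ (Lemma \ref{lem: indep MC BM}).

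\textbf{Part (i).} Let $0 = T_0 < T_1 < T_2 < \cdots$ denote the jump times of $\xi$; since $\xi$ is a Feller--Markov chain on a discrete state space, $T_k \nearrow \infty$ almost surely. On each interval $[T_k, T_{k+1})$ the chain is constant equal to $\eta_k \triangleq \xi_{T_k}$, so that \eqref{eq: SDE MDP} reduces to the frozen one-dimensional SDE with coefficients $u(\cdot, \eta_k)$ and $\sigma(\cdot, \eta_k)$ driven by $W$. These coefficients satisfy the local integrability \eqref{eq: ES cond}, and $\sigma(\cdot, \eta_k)$ satisfies the ES conditions, which together with standard Engelbert--Schmidt weak existence and Yamada--Watanabe-type pathwise uniqueness yield a unique strong solution up to explosion. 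The Feller-test hypothesis \eqref{eq: MC FT} for regime $\eta_k$ rules out explosion on $[T_k, T_{k+1}]$. Starting from $Y_{T_0} = \phi$ and iterating, one glues the frozen pieces into a global adapted continuous $I$-valued process $Y$, since $T_k \nearrow \infty$.

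\textbf{Part (ii).} Argue by contradiction: assume such a $Y$ exists on $[0, T]$, and without loss of generality $\lim_{x \nearrow r} v(x, j_*) < \infty$ for some $j_* \in J$. The key ingredient is the classical one-dimensional fact that under the ES regularity, the frozen SDE for regime $j_*$ is regular and, by Feller's test for explosion, has strictly positive probability of reaching $r$ on any time interval of positive length from any starting point $y \in I$. If $j_0 = j_*$, the first jump time $T_1$ of $\xi$ is (conditionally on $\xi$) exponential and, by the independence of $\xi$ and $W$, independent of $W$. On the event $\{T_1 > \epsilon\}$, which has positive probability for any $\epsilon \in (0, T)$, the restriction $Y|_{[0, \epsilon]}$ is exactly the frozen explosive SDE for regime $j_*$ driven by $W$, which reaches $r$ with positive probability; this contradicts $Y$ being $I$-valued. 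If $j_0 \neq j_*$, invoke the recurrence of $\xi$ to produce a stopping time $\tau < \infty$ at which $\xi$ enters $j_*$ and a subsequent exponential holding time in $j_*$; apply the strong Markov property at $\tau$ and reduce to the previous case.

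\textbf{Main obstacle.} The most delicate ingredient is the classical explosion fact used above: under the ES regularity, $\lim_{x \nearrow r} v(x, j_*) < \infty$ implies strictly positive probability of reaching $r$ on every interval $(s, s + \epsilon)$ from every interior starting point. This relies on the scale-function and speed-measure analysis of one-dimensional regular diffusions together with Feller's test, and on the fact that the ES conditions ensure that the frozen SDE indeed coincides in law with the associated regular diffusion. A secondary technical point is to use the independence of $\xi$ and $W$, together with the strong Markov property, to justify that on intervals of constancy of $\xi$ the conditional law of $Y$ given $\xi$ is that of the frozen diffusion driven by $W$, which is what allows the classical one-dimensional results to be applied.
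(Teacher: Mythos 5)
Your proposal is correct and follows essentially the same route as the paper: part (i) is proved by solving the frozen Engelbert--Schmidt SDEs between the jump times of \(\xi\) and gluing via strong existence and uniqueness, and part (ii) by waiting (using recurrence) until \(\xi\) enters the explosive regime and then invoking the fact that the frozen regular diffusion with \(\lim_{x \nearrow r} v(x,j) < \infty\) reaches the boundary arbitrarily fast with positive probability, decoupled from the exponential holding time by the independence of \(\xi\) and \(W\). You correctly identified both the key ingredient (the ``hits the boundary arbitrarily fast'' result for regular one-dimensional diffusions) and the secondary technical point (the conditional independence argument needed to combine the events \(\{T_1 > \epsilon\}\) and ``the frozen solution explodes before \(\epsilon\)''), which the paper handles via a functional representation of the strong solution and an explicit conditioning lemma.
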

\begin{proof}
	The case \(N = 1\) concerns classical diffusions for which all claims are known, see \cite{bruggeman2016,MANA:MANA19911510111, KaraShre} for details. We prove the claim under the assumption \(N \geq 2\). 
	 
		\textbf{(i).}
	We define the jump times of \(\xi\) inductively by
	\[
	\gamma_0 \triangleq 0, \quad \gamma_n \triangleq \inf(t \geq \gamma_{n-1} \colon \xi_t \not = \xi_{\gamma_{n-1}}), \quad n \in \mathbb{N}.
	\]
	Because \(\xi\) is irreducible, we have a.s. \(\gamma_n < \infty\) (see \cite[Theorem 10.19]{Kallenberg}) and a.s. \(\gamma_n - \gamma_{n-1} > 0\) for all \(n \in \mathbb{N}\).
	
	We follow the idea from the proof of \cite[Theorem IV.9.1]{IW} and construct the process \(Y\) explicitly from solutions to the SDEs
	\begin{align}\label{eq: SDE 1}
	\dd X^j_t = u(X^j_t, j) \dd t + \sigma (X^j_t, j)\dd W'_t,
	\end{align}
	where \(W' = (W'_t)_{t \geq 0}\) is a Brownian motion.
	For the construction we require a strong existence and uniqueness property, which we explain next.
	
	Fix \(j \in J\). It follows from \cite[Remark 4.50 (2), Theorem 4.53]{MANA:MANA19911510111} and Feller's test for explosion (see \cite[Theorem 5.5.29]{KaraShre}) that the SDE \eqref{eq: SDE 1} has a weak solution and that it satisfies pathwise uniqueness for all deterministic initial values. We conclude from \cite[Theorem 18.14]{Kallenberg} that there exists a Borel function \(F^{j} \colon I \times C(\mathbb{R}_+, \mathbb{R}) \to C(\mathbb{R}_+, I)\) such that for any one-dimensional Brownian motion \(W' = (W'_t)_{t \geq 0}\) and any \(I\)-valued random variable \(\psi\), which is independent of \(\sigma(W'_t, t \in \mathbb{R}_+)\), the process \(X^j = F^{j}(\psi, W')\) is a solution process to \eqref{eq: SDE 1} with \(X^j_0 = \psi\), which is adapted to the completion of the natural filtration of \(W'\) and \(\psi\), see \cite[Definition 5.2.1]{KaraShre}. The function \(F^j\) is independent of the law of \(\psi\) and universally adapted (see \cite[p. 346]{Kallenberg} for a definition). 
	
	Set \(W^n \triangleq W_{\cdot + \gamma_n} - W_{\gamma_n}\). Due to \cite[Proposition V.1.5]{RY} and L\'evy's characterization, \(W^n\) is a Brownian motion for the filtration \(\F^n \triangleq (\mathcal{F}_{t + \gamma_n})_{t \geq 0}\).
	In particular, \(W^n\) is independent of \(\mathcal{F}_{\gamma_n}\).
	By induction, define \begin{align*}
	Y^0 &\triangleq \sum_{j \in J} F^{j}(\phi, W) \1_{\{\xi_0 = j\}}, \\ Y^n &\triangleq \sum_{j \in J} F^{j}(Y^{n-1}_{\gamma_n - \gamma_{n-1}}, W^n) \1_{\{\xi_{\gamma_n} = j\}}, \quad n \in \mathbb{N}.
	\end{align*}
	Moreover, set 
	\[
	Y_t \triangleq \sum_{n = 0}^\infty Y^n_{t - \gamma_n} \1_{\{\gamma_n \leq t < \gamma_{n+1}\}},\quad t \in \mathbb{R}_+.
	\]
	The process \(Y\) is \(I\)-valued and continuous and, as we explain next, it is also \(\F\)-adapted.
	Define \(H_t \triangleq Y^n_{t - \gamma_n} \1_{\{\gamma_n < t\}}\). 
	We claim that \((H_t)_{t \geq 0}\) is \(\F\)-progressively measurable. Because \(t \mapsto Y^n_{t - \gamma_n} \1_{\{\gamma_n < t\}}\) is left-continuous and \(s \mapsto  Y^n_{t - s} \1_{\{s < t\}}\) is right-continuous, an approximation argument shows that is suffices to explain that \((h_t)_{t \geq 0} \triangleq (Y^n_{t - \zeta} \1_{\{\zeta < t\}})_{t \geq 0}\) is \(\F\)-adapted for any \(\F\)-stopping time \(\zeta\) which takes values in the countable set \(2^{-m} \overline{\mathbb{N}}\) for some \(m \in \mathbb{N}\) and satisfies \(\zeta \geq \gamma_n\).
	Let \(G \in \mathcal{B}(\mathbb{R})\) and set \(N_{m, t} \triangleq 2^{-m} \overline{\mathbb{N}}\cap [0, t)\). We have 
	\[
	\{h^m_t \in G\} = \Big(\bigcup_{k \in N_{m, t}}  \big( \{h^m_t \in G\} \cap \{\zeta = k\}\big)\Big) \cup \big(\{0 \in G\} \cap \{\zeta \geq t\}\big) \in \mathcal{F}_t.
	\]
	Here, we use that \(\{Y^n_{t - k} \in G\} \in \mathcal{F}_{t - k + \gamma_n} \subseteq \mathcal{F}_{t - k + \zeta}\) and that \(\mathcal{F}_{t - k + \zeta} \cap \{\zeta = k\} \in \mathcal{F}_t\). Thus, \((H_t)_{t \geq 0}\) is \(\F\)-progressively measurable and consequently \((Y_t)_{t \geq 0}\) is \(\F\)-adapted.

	We note that 
	\[
	\gamma_n - \gamma_{n - 1} = \inf(t \in \mathbb{R}_+ \colon \xi_{t + \gamma_{n - 1}} \not = \xi_{\gamma_{n-1}}),
	\]
	which is an \(\F^{n - 1}\)-stopping time. Thus, \(Y^{n-1}_{\gamma_n - \gamma_{n - 1}}\) is \(\mathcal{F}_{\gamma_n}\)-measurable and therefore independent of \(\sigma(W^n_t, t \in \mathbb{R}_+)\). This yields that the process \(X^{n, j} \triangleq F^{j} (Y^{n - 1}_{\gamma_n - \gamma_{n-1}}, W^n)\) satisfies
	\begin{align*}
	\dd X^{n, j}_t &= u(X^{n, j}_t, j) \dd t + \sigma (X^{n, j}_t, j) \dd W^n_t, \quad X^{n, j}_0 = Y^{n-1}_{\gamma_n - \gamma_{n -1}}.
	\end{align*}
	Thus, due to classical rules for time-changed stochastic integrals (see \cite[Propositions V.1.4, V.1.5]{RY}), a.s. for \(t \in [\gamma_n, \gamma_{n+1}]\) on \(\{\xi_{\gamma_n} = j\}\) we have
	\begin{align}
	Y^n_{t - \gamma_n} 
	&= Y^{n-1}_{\gamma_n - \gamma_{n - 1}} + \int_0^{t - \gamma_n} u(X^{n, j}_s, j)\dd s+ \int_0^{t - \gamma_n} \sigma(X^{n, j}_s, j) \dd W^n_s \nonumber
	\\&= Y^{n-1}_{\gamma_n - \gamma_{n-1}} + \int_{\gamma_n}^{t} u(Y^n_{s - \gamma_n}, j)\dd s + \int_{\gamma_n}^t \sigma (Y^n_{s - \gamma_n}, j) \dd W_s \nonumber
	\\&= Y^{n-1}_{\gamma_n - \gamma_{n-1}} + \int_{\gamma_n}^{t} u(Y_s, \xi_{s})\dd s + \int_{\gamma_n}^t \sigma (Y_s, \xi_{s}) \dd W_s. \nonumber
	\end{align}
	By induction, a.s. for \(t \in [\gamma_n, \gamma_{n + 1}]\)
	\[
	Y^n_{t - \gamma_n} = \phi + \int_0^t u(Y_s, \xi_{s}) \dd s + \int_0^t \sigma (Y_s, \xi_{s})\dd W_s.
	\]
	Therefore, the process \(Y\) satisfies the SDE
	\[
	\dd Y_t = u(Y_t, \xi_{t}) \dd t + \sigma(Y_t, \xi_{t}) \dd W_t,\quad S_0 = \phi,
	\]
	and the proof of (i) is complete.

	\textbf{(ii).}
	For contradiction, assume that \(Y\) satisfies \eqref{eq: SDE MDP}.
	Let \(j \in J\) be such that \(\lim_{x \searrow l} v(x, j) < \infty\) or \(\lim_{x \nearrow r} v(x, j) < \infty\).
	We define 
	\[
	\delta \triangleq \inf (t \in \mathbb{R}_+ \colon \xi_t = j ), \quad \zeta \triangleq \inf(t \geq \delta \colon \xi_t \not = j).
	\]
	Because \(\xi\) is recurrent, we have a.s. \(\delta < \infty\), see \cite[Theorem 1.5.7]{norris_1997}.
	Due to the strong Markov property of \(\xi\) and \cite[Lemma 10.18]{Kallenberg}, for all \(G \in \mathcal{B}(\mathbb{R}_+)\) it holds that
	\begin{align}\label{eq: exp}
	\p (\zeta - \delta \in G) = - \int_G q_{jj} e^{q_{jj} x} \dd x,
	\end{align}
	where \(q_{jj} < 0\) is the \(j\)-th diagonal element of the \(Q\)-matrix of \(\xi\). 
	
	Recall our convention that we call a process \(V = (V_t)_{t \geq 0}\) to be continuous and \([l, r]\)-valued in case all paths are continuous in the \([l, r]\)-topology \emph{and} absorbed in \(\{l, r\}\), i.e. that \(V_t = V_{\tau(V)}\) for all \(t \geq \tau(V) \triangleq \inf(t \in \mathbb{R}_+ \colon V_t \not \in I)\).
	
	It follows from \cite[Remark 4.50 (2), Theorem 4.53]{MANA:MANA19911510111} that the SDE \eqref{eq: SDE 1}
	satisfies strong existence and uniqueness up to explosion in the sense of Definition \ref{def: see}.
	
	Consequently, there exists a continuous \([l, r]\)-valued process \(X = (X_t)_{t \geq 0}\) such that
	\begin{align}\label{eq: SDE e}
	\dd X_t = u(X_t, j) \dd t + \sigma (X_t, j) \dd W^\delta_t,\quad X_0 = Y_{\delta \wedge T},
	\end{align}
	where \(W^\delta \triangleq W_{\cdot + \delta \wedge T} - W_{\delta \wedge T}\) is a Brownian motion for the filtration \(\F^\delta \triangleq (\mathcal{F}_{t + \delta \wedge T})_{t \geq 0}\).
	We prove the following lemma after the proof of (ii) is complete.
	\begin{lemma}\label{lem: comparison}
		Almost surely \(Y_{t + \delta} = X_t\) for all \(0 \leq t \leq \zeta - \delta\) on \(\{\zeta \leq T\}\).
	\end{lemma}
	Because on \(\{\tau (X) < \infty\}\) we have \(X_{\tau(X)} \not \in I\), Lemma \ref{lem: comparison} implies that
	\begin{equation}\label{eq: long comp} \begin{split}
	\p (\tau (X) \leq \zeta - \delta, \zeta \leq T) = 0.
	\end{split}\end{equation}
	The proof of the following lemma is given after the proof of (ii) is complete.
	\begin{lemma}\label{lem: loc pathwise uniqueness}
		Suppose that the SDE \eqref{eq: def SDE} satisfies strong existence and uniqueness up to explosion. Let \(\psi\) be an \(I\)-valued \(\mathcal{F}_0\)-measurable random variable and let \((V_t)_{t \geq 0}\) be the solution process to \eqref{eq: def SDE} with driver \(W\) and initial value \(\psi\) and let \(\tau\) be a stopping time. Then, all adapted \(I\)-valued continuous processes \((U_t)_{t \geq 0}\) with
		\[
		\dd U_t = \mu(U_t)\1_{\{t \leq \tau\}} \dd t + v (U_t)\1_{\{t \leq \tau\}} \dd W_t, \quad U_0 = \psi, 
		\]
		are indistinguishable from \((V_{t \wedge \tau})_{t \geq 0}\). 
	\end{lemma}
Let \(l_n \searrow l, r_n \nearrow r\) be sequences such that \(l < l_{n+1} < l_n < r_n < r_{n +1} <  r\) and set for a function \(\alpha \colon \mathbb{R}_+ \to [l, r]\)
\[
\tau_n (\alpha) \triangleq \inf(t \in \mathbb{R}_+ \colon \alpha(t) \not \in (l_n, r_n)).
\]
	We conclude from Lemma \ref{lem: loc pathwise uniqueness} and Galmarino's test (see \cite[Lemma III.2.43]{JS}) that for all \(n \in \mathbb{N}\) the SDE 
	\begin{align}\label{eq: stopped sde}
	\dd X^j_t = u(X^j_t, j) \1_{\{t \leq \tau_n(X^j) \}} \dd t + \sigma (X^j_t,j) \1_{\{t \leq \tau_n(X^j)\}} \dd W_t,
	\end{align}
	satisfies weak existence and pathwise uniqueness in the usual sense, see \cite[Definitions 5.3.1, 5.3.2]{KaraShre}.
	Thus, due to \cite[Theorem 18.14]{Kallenberg}, there exists a Borel function \(F^n \colon \mathbb{R} \times C(\mathbb{R}_+, \mathbb{R}) \to C(\mathbb{R}_+, I)\) such that whenever \(X^j\) solves \eqref{eq: stopped sde} with driver \(W= (W_t)_{t \geq 0}\) and (possibly stochastic) initial value \(X^j_0\), then a.s. \(X^j = F^n(X^j_0, W)\).
	
	Lemma \ref{lem: loc pathwise uniqueness} and Galmarino's test yield that a.s.
	\begin{align}\label{eq: gal 1}
	\tau_n (X) = \tau_n(F^n (Y_{\delta \wedge T}, W^\delta)).
	\end{align}
	Because strong existence and uniqueness up to explosion holds for the SDE \eqref{eq: SDE 1}, for a.a. \(\omega \in \Omega\) there exists 
	an \(\F^\delta\)-adapted continuous \([l, r]\)-valued process  \(Y^\omega = (Y^\omega_t)_{t \geq 0}\) such that
	\[
	\dd Y^\omega_t = u(Y^\omega_t, j) \dd t + \sigma (Y^\omega_t, j) \dd W^\delta_t, \quad Y_0^\omega = Y_{\delta(\omega) \wedge T} (\omega) \in I.
	\]
	We stress that the initial value \(Y_{\delta(\omega) \wedge T}(\omega)\) is deterministic.
	Lemma \ref{lem: loc pathwise uniqueness} and Galmarino's test yield that a.s.
	\begin{align}\label{eq: gal2}
	\tau_n (Y^\omega) = \tau_n (F^n(Y_{\delta (\omega) \wedge T} (\omega), W^\delta)).
	\end{align}
	We prove the following lemma after the proof of (ii) is complete.
	\begin{lemma} \label{lem: ind}
		For all \(G \in \mathcal{B}(\mathbb{R}_+)\) we have a.s. \[\p(\zeta - \delta \in G | \mathcal{F}_{\delta \wedge T}, \sigma(W^\delta_t, t \in \mathbb{R}_+)) = - \int_G q_{jj} e^{q_{jj} x} \dd x.\]
	\end{lemma}
Using \eqref{eq: long comp}, the monotone convergence theorem and then \eqref{eq: gal 1},  we obtain that
	\begin{align*}
	0 &= \lim_{n \to \infty} \p (\tau_n (X) \leq \zeta - \delta, \zeta \leq T)
	\\&= \lim_{n \to \infty} \p(\tau_n (F^n (Y_{\delta \wedge T}, W^\delta)) \leq \zeta - \delta, \zeta - \delta + \delta \leq T),
	\intertext{using \cite[Theorem 5.4]{Kallenberg} and Lemma \ref{lem: ind} we further obtain that}
	&= \lim_{n \to \infty} \int_0^T \p(\tau_n(F^n(Y_{\delta \wedge T}, W^\delta)) \leq s, s + \delta \leq T) (- q_{jj}) e^{q_{jj} s} \dd s
	\\&= \lim_{n \to \infty} \int_{0}^T \E^{\p} \big[\p (\tau_n (F^n (Y_{\delta \wedge T}, W^\delta)) \leq s|  \mathcal{F}_{\delta \wedge T}) \1_{\{s + \delta \leq T\}}\big] (- q_{jj}) e^{q_{jj} s}\dd s,
	\intertext{which, due to \cite[Theorem 5.4]{Kallenberg} and the independence of \(W^\delta\) and \(\mathcal{F}_{\delta \wedge T}\), equals}
	&= \lim_{n \to \infty} \int_{0}^T \int_\Omega \p(\tau_n (F^n (Y_{\delta (\omega) \wedge T} (\omega), W^\delta)) \leq s) \1_{\{s + \delta (\omega) \leq T\}} \p( \dd \omega) (-q_{jj})e^{q_{jj} s} \dd s,
	\intertext{and finally, with \eqref{eq: gal2} and the monotone convergence theorem, we obtain}
	&= \lim_{n \to \infty} \int_{0}^T \int_\Omega \p(\tau_n (Y^\omega) \leq s) \1_{\{s + \delta (\omega) \leq T\}} \p( \dd \omega) (- q_{jj}) e^{q_{jj} s} \dd s
	\\&= \int_{0}^T \int_\Omega \p(\tau (Y^\omega) \leq s) \1_{\{s + \delta (\omega) \leq T\}} \p( \dd \omega) (- q_{jj}) e^{q_{jj} s} \dd s.
	\end{align*}
	Due to Feller's test for explosion (see \cite[Theorem 5.5.29]{KaraShre}), \(Y^\omega\) reaches \(l\) or \(r\) in finite time with positive probability. In fact, because \(Y^\omega\) is  regular due to \cite[Proposition 2.2]{mijatovic2012b}, \cite[Theorem 1.1]{bruggeman2016} implies that \(Y^\omega\) even reaches \(l\) or \(r\) arbitrarily fast with positive probability, i.e. \(\p(\tau (Y^\omega) \leq \varepsilon) > 0\) for all \(\varepsilon > 0\). Consequently, the identity 
	\[
	\int_{0}^T \int_\Omega \p(\tau (Y^\omega) \leq s) \1_{\{s + \delta (\omega) \leq T\}} \p( \dd \omega) (- q_{jj}) e^{q_{jj} s} \dd s = 0
	\]
	implies that for \(\llambda\)-a.a. \(s \in (0, T)\) we have \(\p(\delta \leq T - s) = 0\). 
	However, because \(\xi\) is irreducible, we have \(\p(\xi_t= j) > 0\) for all \(t > 0\). This is a contradiction and the proof of (ii) is complete.
\end{proof}
\noindent
\textit{Proof of Lemma \ref{lem: comparison}:}
Define 
\(
\iota \triangleq \zeta \wedge T - \delta \wedge T
\).
Note that for all \(t \in \mathbb{R}_+\)
\begin{align*}
\{\iota \leq t\} = \{\zeta \leq t + \delta \wedge T\} \in \mathcal{F}_{t + \delta \wedge T}, 
\end{align*}
which shows that \(\iota\) is an \(\F^\delta\)-stopping time. 
Moreover, we have for all \(s, t \in \mathbb{R}_+\)
\begin{align*}
\{s \wedge \iota + \delta \wedge T \leq t\} = \big(\{s + \delta &\wedge T \leq t\} \cap \overbrace{\{s + \delta \wedge T \leq \zeta \wedge T\}}^{\in \mathcal{F}_{s + \delta \wedge T}} \big) \\&\cup \big(\{\zeta \wedge T \leq t\} \cap  \underbrace{\{s + \delta \wedge T > \zeta \wedge T\}}_{\in \mathcal{F}_{\zeta \wedge T}}\big) \in \mathcal{F}_t.
\end{align*}
Thus, the random time \(s \wedge \iota + \delta \wedge T\) is an \(\F\)-stopping time.
We deduce from classical rules for time-changed stochastic integrals that a.s. for all \(t \in \mathbb{R}_+\)
\begin{align*}
Y_{t \wedge \iota + \delta \wedge T} &= \phi +  \int_0^{t \wedge \iota + \delta \wedge T} u(Y_s, \xi_{s}) \dd s + \int_0^{t \wedge \iota + \delta \wedge T} \sigma(Y_s, \xi_{s}) \dd W_s
\\&= Y_{\delta \wedge T}+ \int_{0}^{t} u(Y_{s \wedge \iota + \delta \wedge T}, j) \1_{\{s \leq \iota\}} \dd s + \int_0^{t} \sigma(Y_{s \wedge \iota + \delta \wedge T}, j) \1_{\{s \leq \iota\}} \dd W^\delta_s.
\end{align*}	
Because the SDE \eqref{eq: SDE 1} satisfies strong existence and uniqueness up to explosion,
Lemma \ref{lem: loc pathwise uniqueness} implies that a.s. \(Y_{t \wedge \iota + \delta \wedge T} = X_{t \wedge \iota}\) for all \(t \in \mathbb{R}_+\). On \(\{\zeta \leq T\} \subseteq \{\delta \leq T\}\) we have \(\iota = \zeta - \delta\) and the claim follows.
\qed\\\\
\noindent
\textit{Proof of Lemma \ref{lem: loc pathwise uniqueness}:}
Due to localization, we can assume that \(\tau\) is finite. 
By \cite[Proposition V.1.5]{RY} and L\'evy's characterization, the process 
\[
\widehat{W}_t \triangleq W_{t + \tau} - W_\tau,\quad t \in \mathbb{R}_+,
\]  is an \((\mathcal{F}_{t + \tau})_{t \geq 0}\)-Brownian motion.
Due to the strong existence and uniqueness hypothesis, there exists a solution process \(O = (O_t)_{t \geq 0}\) to the SDE
\[
\dd O_t = \mu(O_t) \dd t + v(O_t)\dd \widehat{W}_t, \quad O_0 = U_\tau.
\] 
We set 
\[
Z_t \triangleq \begin{cases} U_t,&t \leq \tau,\\
O_{t - \tau},&t>\tau.
\end{cases}
\]
The process \(Z\) has continuous paths and similar arguments as used in the proof of Theorem \ref{theo: existence Markov} (i) show that it is \(\F\)-adapted.
Let 
\[
\theta^Z_n \triangleq \inf(t \in \mathbb{R}_+ \colon Z_t \not \in (l_n, r_n)).
\]
On \(\{\tau \geq t \wedge \theta^Z_n\}\) we have a.s.
\begin{align*}
Z_{t \wedge \theta^Z_n} 
&= \psi + \int_0^{t \wedge \theta^Z_n} \mu(Z_s)\dd s + \int_0^{t \wedge \theta^Z_n}v (Z_s)\dd W_s.
\end{align*}
Next, we discuss what happens on the set \(\{\tau < t \wedge \theta^Z_n\}\).
Set \[\theta^O_n \triangleq \inf (t \in \mathbb{R}_+ \colon O_t \not \in (l_n, r_n)).\] 
On \(\{\tau < \theta^Z_n\}\) we have a.s. 
\(
\theta^Z_n = \theta^O_n + \tau.
\)
Moreover, note that
\[
t \wedge (\theta^O_n + \tau) - \tau = \begin{cases} \theta^O_n,& \textup{ if }\theta^O_n + \tau \leq t,\\
t - \tau,&\textup{ if } t \leq \theta^O_n + \tau.
\end{cases}
\]
Thus, \(t \wedge (\theta^O_n + \tau) - \tau \leq \theta^O_n\).
Classical rules for time-changed stochastic integrals yield that on \(\{\tau < t \wedge \theta^Z_n\}\) a.s.
\begin{equation*}
\begin{split}
Z_{t \wedge \theta^Z_n}
&= Z_\tau + \int_0^{t \wedge \theta^Z_n   - \tau} \mu(O_s)\dd s + \int_0^{t \wedge \theta^Z_n - \tau} v (O_s)\dd \widehat{W}_s
\\&= Z_\tau + \int_\tau^{t \wedge \theta^Z_n} \mu(O_{s - \tau}) \dd s + \int_{\tau}^{t \wedge \theta^Z_n} v (O_{s - \tau}) \dd W_s
\\&= \psi + \int_0^{t \wedge \theta^Z_n} \mu(Z_s)\dd s + \int_0^{t \wedge \theta^Z_n} v (Z_s)\dd W_s.
\end{split}
\end{equation*}
We conclude that \(Z\) is a solution process of the SDE \eqref{eq: def SDE} with driver \(W\) and initial value \(\psi\).
By the strong existence and uniqueness hypothesis, we conclude that a.s. \(Z = V\). The definition of \(Z\) implies the claim.
\qed
\\\\\noindent
\textit{Proof of Lemma \ref{lem: ind}:}
Denote the Wiener measure with initial value \(x \in \mathbb{R}\) by \(\mathscr{W}_x\) and by \(\mu_j\) the law of a Feller--Markov chain with the same \(Q\)-matrix as \(\xi\) and initial value \(j \in J\). Let \(\mathcal{C}\) be the \(\sigma\)-field on \(C(\mathbb{R}_+, \mathbb{R})\) generated by the coordinate process.
We deduce from Lemma \ref{lem: indep MC BM} in the Appendix, \cite[Proposition 4.1.5, Theorems 4.4.2, 4.4.6]{EK} that \((j, x) \mapsto (\mu_j \otimes \mathscr{W}_x) (F)\) is Borel for every \(F \in \mathcal{D} \otimes \mathcal{C}\) and that the process \((\xi, W)\) is a strong Markov process in the following sense: For all \(F \in \mathcal{D} \otimes \mathcal{C}\) and every a.s. finite stopping time \(\theta\) we have a.s.
\[
\p ((\xi_{\cdot + \theta}, W_{\cdot + \theta}) \in F |\mathcal{F}_\theta) = (\mu_{\xi_\theta} \otimes \mathscr{W}_{W_\theta}) (F).
\]
For all \(A \in \mathcal{D}\) and \(F \in \mathcal{C}\) the strong Markov properties of \(\xi, W\) and \((\xi, W)\) imply that a.s.
\begin{align*}
\p(\xi_{\cdot + \delta \wedge T} \in A&, W_{\cdot + \delta \wedge T} \in F|\mathcal{F}_{\delta \wedge T})\\ &= \mu_{\xi_{\delta \wedge T}}(A)\ \mathscr{W}_{W_{\delta \wedge T}}(F)
\\&= \p(\xi_{\cdot + \delta \wedge T} \in A|\mathcal{F}_{\delta \wedge T})\p(W_{\cdot + \delta \wedge T} \in F|\mathcal{F}_{\delta \wedge T}).
\end{align*}
This implies that \(\sigma(\zeta - \delta)\) and \(\sigma(W^\delta_t, t \in \mathbb{R}_+)\) are independent given \(\mathcal{F}_{\delta \wedge T}\). Now, \cite[Proposition 5.6]{Kallenberg} yields that a.s.
\[
\p(\zeta - \delta \in G |\mathcal{F}_{\delta \wedge T}, \sigma(W^\delta_t, t \in \mathbb{R}_+)) = \p(\zeta - \delta \in G |\mathcal{F}_{\delta \wedge T}).
\]
By the strong Markov property of \(\xi\) and \eqref{eq: exp}, we have for \(F \in \mathcal{F}_\delta\)
\begin{align*}
\p(\zeta - \delta \in G, F) = 
- \int_G q_{jj} e^{q_{jj} x} \dd x\ \p(F).
\end{align*}
The proof is complete.
\qed
\subsection{Local uniqueness}\label{sec: LU}
For the space of continuous functions from \(\mathbb{R}_+\) into \(I\) or \(\mathbb{R}\), we denote by \(\mathcal{C}\) the \(\sigma\)-field generated by the coordinate process. 
Moreover, we denote by \(\mathbf{C}^o \triangleq (\mathcal{C}^o_t)_{t \geq 0}\) the filtration generated by the corresponding coordinate process and by \(\mathbf{C} \triangleq (\mathcal{C}_t)_{t \geq 0}\) its right-continuous version. The image space will be clear from the context. 
Let \[\rho \colon C(\mathbb{R}_+, I) \times D (\mathbb{R}_+, J) \to [0, \infty]\] be a \(\mathbf{C}^o \otimes \mathbf{D}^o\)-stopping time. An example for \(\rho\) is
\[
\tau (\alpha, \omega) \triangleq \inf (t \in \mathbb{R}_+ \colon \alpha (t) \not \in U \text{ or } \omega(t) \not \in V), 
\]
where \(U \subseteq I\) and \(V \subseteq J\) are open:
\begin{lemma}\label{lem: gamma nst}
	\(\tau\) is a \(\mathbf{C}^o \otimes \mathbf{D}^o\)-stopping time.
\end{lemma}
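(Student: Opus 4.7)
The plan is to verify $\{\tau \leq t\} \in \mathcal{C}^o_t \otimes \mathcal{D}^o_t$ for every $t \geq 0$ by decomposing $\tau$ as the minimum of two exit times, each depending on only one coordinate. Set $\sigma_1(\alpha, \omega) \triangleq \inf\{s \geq 0 \colon \alpha(s) \notin U\}$ and $\sigma_2(\alpha, \omega) \triangleq \inf\{s \geq 0 \colon \omega(s) \notin V\}$, so that $\tau = \sigma_1 \wedge \sigma_2$. It then suffices to show $\{\sigma_1 \leq t\} \in \mathcal{C}^o_t$ and $\{\sigma_2 \leq t\} \in \mathcal{D}^o_t$, both viewed as cylinders in the product space.

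For $\sigma_1$ I would first establish the identity $\{\sigma_1 \leq t\} = \{\exists s \in [0, t] \colon \alpha(s) \in U^c\}$. The inclusion $\supseteq$ is immediate; for $\subseteq$, if $\sigma_1 = t$ and no $s \leq t$ attains $\alpha(s) \in U^c$, there must be a sequence $s_n \downarrow t$ with $\alpha(s_n) \in U^c$, and continuity of $\alpha$ together with closedness of $U^c$ forces $\alpha(t) \in U^c$, a contradiction. Since $\alpha$ is continuous, the map $s \mapsto d(\alpha(s), U^c)$ is continuous on the compact interval $[0, t]$, and its minimum coincides with $\inf_{s \in (\mathbb{Q} \cap [0, t]) \cup \{t\}} d(\alpha(s), U^c)$ by density of the rationals. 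This expression is manifestly $\mathcal{C}^o_t$-measurable, giving $\{\sigma_1 \leq t\} \in \mathcal{C}^o_t$.

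The analogous identity $\{\sigma_2 \leq t\} = \{\exists s \in [0, t] \colon \omega(s) \in V^c\}$ follows by the same right-continuity/closedness argument. To pass to measurability I would invoke the standing assumption of this section that $J = \{1, \dots, N\}$ is discrete: then every $\omega \in D(\mathbb{R}_+, J)$ is a step function, constant on each half-open interval between consecutive jumps. If $\omega(s) \in V^c$ for some $s < t$, then $\omega$ is constant on $[s, \gamma)$, where $\gamma > s$ is the next jump, and the nonempty interval $[s, \min(\gamma, t))$ contains a rational $q$ with $\omega(q) \in V^c$. Consequently
\[
\{\sigma_2 \leq t\} = \big\{\exists q \in \mathbb{Q} \cap [0, t] \colon \omega(q) \in V^c\big\} \cup \big\{\omega(t) \in V^c\big\} \in \mathcal{D}^o_t.
\]

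Combining the two measurability statements yields $\{\tau \leq t\} = \{\sigma_1 \leq t\} \cup \{\sigma_2 \leq t\} \in \mathcal{C}^o_t \otimes \mathcal{D}^o_t$. The main technical point is the reduction of an uncountable union over $s \in [0, t]$ to a countable one: for the continuous coordinate this is supplied by continuity of $\alpha$, while for the c\`adl\`ag coordinate it rests on the step-function structure that comes from discreteness of $J$. Without the latter, the exit time $\sigma_2$ from an open set would in general only be a stopping time for the right-continuous completion $\mathbf{D}$, not for the natural filtration $\mathbf{D}^o$.
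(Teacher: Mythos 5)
Your proof is correct. The paper offers no argument of its own here: its proof consists solely of the citations \cite[Proposition I.4.5]{RY} (hitting time of a closed set by a continuous adapted process is a raw stopping time) and \cite[Proposition 2.1.5]{EK} (exit time of a c\`adl\`ag process from an open set), one for each coordinate, so your decomposition \(\tau = \sigma_1 \wedge \sigma_2\) and the coordinate-wise treatment reproduce exactly the content of those references, just verified from first principles. The one point where you do more than unwind the citations is the c\`adl\`ag coordinate: the stopping time in \cite[Proposition 2.1.5]{EK} is \(\inf(t \colon \omega(t) \notin V \text{ or } \omega(t-) \notin V)\) (as in \eqref{eq: rhon}), whereas the \(\tau\) of the lemma omits the left-limit clause, and your observation that c\`adl\`ag paths into the discrete space \(J\) are step functions --- so the two infima coincide and the exit time is already a raw \(\mathbf{D}^o\)-stopping time --- is precisely what is needed to reconcile the lemma's formulation with the cited result. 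Making that dependence on the discreteness of \(J\) explicit is a genuine, if small, improvement over the bare citation.
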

\begin{proof}
	See \cite[Proposition I.4.5]{RY} and \cite[Proposition 2.1.5]{EK}. 
\end{proof}

Let \(u \colon I \times J \to \mathbb{R}\) and \(\sigma \colon I \times J \to \mathbb{R}\backslash \{0\}\) be Borel functions such that \eqref{eq: ES cond} holds, \(\sigma\) satisfies \eqref{eq: MC FT} and the ES conditions for all \(j \in J\) (see Section \ref{sec: MG MS} for this terminology). In other words, we ask that the conditions from part (i) of Theorem \ref{theo: existence Markov} hold.

For \(i = 1, 2\), let \((\Omega^i, \mathcal{F}^i, \F^i, \p^i)\) be a filtered probability space with right-continuous complete filtration \(\F^i = (\mathcal{F}^i_t)_{t \geq 0}\). Let \(W^i = (W^i_t)_{t \geq 0}\) be a one-dimensional Brownian motion, \(\xi^i = (\xi^i_t)_{t \geq 0}\) be a \(J\)-valued irreducible Feller--Markov chain with \(Q\)-matrix \(Q\) and \(\xi^i_0 = j_0 \in J\), and let \(X^i = (X^i_t)_{t \geq 0}\) be an adapted continuous \(I\)-valued process such that
\[
\dd X^i_{t \wedge \rho(Y^i, \xi^i)} = u(X^i_t, \xi^i_t) \1_{\{t \leq \rho (X^i, \xi^i)\}} \dd t + \sigma (X^i_t, \xi^i_t) \1_{\{t \leq \rho (X^i, \xi^i)\}} \dd W^i_t, \quad X^i_0 = y_0 \in I.
\]
It is implicit that the stochastic integrals are well-defined. We stress that \(\xi^1\) and \(\xi^2\) have the same law, because they have the same \(Q\)-matrix, see Example \ref{ex: xi2}.

The main observation of this section is the following:
\begin{theorem}\label{theo: LU} 
	\(\p^1 \circ (X^1_{\cdot \wedge \rho(X^1, \xi^1)}, \xi^1)^{-1} = \p^2 \circ (X^2_{\cdot \wedge \rho(X^2, \xi^2)}, \xi^2)^{-1}\).
\end{theorem}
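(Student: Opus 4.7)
The plan is to establish Theorem \ref{theo: LU} by a Yamada--Watanabe type argument: first I construct a strong solution \(Y^i\) to the unstopped switching SDE as a measurable functional of \((\xi^i, W^i)\), next I show that the stopped process \(X^i_{\cdot \wedge \rho(X^i, \xi^i)}\) must coincide with \(Y^i_{\cdot \wedge \rho(Y^i, \xi^i)}\) almost surely, and finally I invoke independence of \(\xi^i\) and \(W^i\) to identify the two joint laws.

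For the first step, the hypotheses on \(u\) and \(\sigma\) are precisely those of Theorem \ref{theo: existence Markov}(i), so on each space \((\Omega^i, \mathcal{F}^i, \F^i, \p^i)\) there exists an \(\F^i\)-adapted \(I\)-valued continuous process \(Y^i\) solving the unstopped switching SDE with \(Y^i_0 = y_0\). The construction there glues together one-dimensional strong solutions at the successive jump times \((\gamma^i_n)_{n \in \mathbb{N}}\) of \(\xi^i\); since for each \(j \in J\) the frozen SDE \eqref{eq: SDE 1} satisfies strong existence and pathwise uniqueness up to explosion (by the ES conditions for \(j\) together with Feller's test and \eqref{eq: MC FT}), the universally adapted Borel functions \(F^{j}\) are intrinsic to the SDE and independent of the ambient probability space. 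Consequently there is a single Borel map \(\Phi \colon D(\mathbb{R}_+, J) \times C(\mathbb{R}_+, \mathbb{R}) \to C(\mathbb{R}_+, I)\), not depending on \(i\), such that \(Y^i = \Phi(\xi^i, W^i)\) \(\p^i\)-almost surely.

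For the second step, I would argue that \(X^i_{\cdot \wedge \rho(X^i, \xi^i)}\) and \(Y^i_{\cdot \wedge \rho(X^i, \xi^i)}\) are indistinguishable by a switching-regime analogue of Lemma \ref{lem: loc pathwise uniqueness}, proved inductively over the jumps of \(\xi^i\). On each interval \([\gamma^i_n, \gamma^i_{n+1})\) the chain is constant at some \(j\), so after conditioning on \(\mathcal{F}^i_{\gamma^i_n}\) and time-shifting, both processes solve the same one-dimensional SDE with coefficients \(u(\cdot, j), \sigma(\cdot, j)\) starting from the common value dictated by the induction hypothesis, the first being stopped at \(\rho(X^i, \xi^i) \wedge \gamma^i_{n+1}\); Lemma \ref{lem: loc pathwise uniqueness} then delivers agreement up to this time. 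Letting \(n \to \infty\) gives \(X^i_{\cdot \wedge \rho(X^i, \xi^i)} = Y^i_{\cdot \wedge \rho(X^i, \xi^i)}\) \(\p^i\)-a.s., and because \(\rho\) is a \(\mathbf{C}^o \otimes \mathbf{D}^o\)-stopping time, Galmarino's test applied to the coinciding stopped paths yields \(\rho(X^i, \xi^i) = \rho(Y^i, \xi^i)\) \(\p^i\)-a.s.; hence \(X^i_{\cdot \wedge \rho(X^i, \xi^i)} = \Phi(\xi^i, W^i)_{\cdot \wedge \rho(\Phi(\xi^i, W^i), \xi^i)}\).

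For the conclusion, Lemma \ref{lem: indep MC BM} guarantees that \(\xi^i\) and \(W^i\) are \(\p^i\)-independent. Since \(\xi^1\) and \(\xi^2\) share the same \(Q\)-matrix and initial value, their laws coincide (see Example \ref{ex: xi2}), while \(W^1\) and \(W^2\) both have Wiener law. Therefore \((\xi^1, W^1)\) and \((\xi^2, W^2)\) have the same joint distribution, and pushing forward through the Borel map \((\omega_2, \omega_1) \mapsto (\Phi(\omega_2, \omega_1)_{\cdot \wedge \rho(\Phi(\omega_2, \omega_1), \omega_2)}, \omega_2)\) yields the claimed equality of laws. I expect the main obstacle to be the inductive step of the second paragraph: one must simultaneously propagate the pathwise identification of \(X^i\) with \(Y^i\) and control the implicit stopping time \(\rho(X^i, \xi^i)\) under a switching-regime extension of Lemma \ref{lem: loc pathwise uniqueness}, where the key technical input is iterating one-dimensional pathwise uniqueness in each frozen regime using the ES conditions.
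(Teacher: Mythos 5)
Your argument is correct, but it takes a genuinely different route from the paper's. You follow the \emph{strong} form of the Yamada--Watanabe argument: you build the global switching solution as a single Borel functional \(\Phi(\xi^i, W^i)\) by gluing the frozen-regime strong solutions \(F^j\) at the jump times (exactly the construction in the proof of Theorem \ref{theo: existence Markov}(i)), identify \(X^i_{\cdot \wedge \rho}\) with the correspondingly stopped \(\Phi(\xi^i, W^i)\) by pathwise uniqueness and Galmarino's test, and then push the common law \(\mu \otimes \mathscr{W}\) of \((\xi^i, W^i)\) (Lemma \ref{lem: indep MC BM}, Example \ref{ex: xi2}) through \(\Phi\). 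The paper instead follows the \emph{weak} coupling route of Jacod: it disintegrates \(\p^i \circ (X^i, \xi^i, W^i)^{-1}\) via regular conditional probabilities \(Q^i(\omega^2,\omega^3, \dd\omega^1)\), glues the two kernels over the common driver law on a product space \(\Omega^*\), verifies that both coordinate copies \(Y^1, Y^2\) solve the same stopped SDE there, and then proves they are indistinguishable (Lemma \ref{loc pathwise uniqueness}). The core technical inputs are identical in both arguments --- frozen-regime pathwise uniqueness from the ES conditions, the extension-to-a-global-solution trick of Lemma \ref{lem: loc pathwise uniqueness}, induction over the jump times of the chain, and Galmarino's test --- and your inductive step is essentially the paper's Lemma \ref{loc pathwise uniqueness} transplanted to the original space, comparing \(X^i\) with \(Y^i\) rather than \(Y^1\) with \(Y^2\). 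What your route buys is that it avoids the regular-conditional-probability machinery and the measurability verifications for the kernels \(Q^i\) (Jacod's Hypothesis 10.43); what it costs is that you must justify that the glued solution really is one fixed, universally measurable functional \(\Phi\) independent of the ambient space --- which is available here because the \(F^j\) from Kallenberg's Theorem 18.14 are intrinsic to the frozen SDEs and the recursion uses only measurable operations on \((\xi, W)\), but which is the one point you should spell out carefully (in particular, that the a.s.\ identities defining \(\Phi\) hold under the common law \(\mu \otimes \mathscr{W}\), so the exceptional null sets transfer between the two spaces). The paper's weak route is the one that generalizes when no strong solution functional is at hand; in the present setting both are valid.
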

\begin{proof}
	We follow the Yamada--Watanabe-type idea used in \cite{J80}.
	Define
	\begin{align*}
	\Omega^* &\triangleq C(\mathbb{R}_+, I) \times  C(\mathbb{R}_+, I) \times D(\mathbb{R}_+, J) \times C(\mathbb{R}_+, \mathbb{R}),\\
	\mathcal{F}^* & \triangleq \mathcal{C} \otimes \mathcal{C} \otimes \mathcal{D} \otimes \mathcal{C},
	\end{align*}
	and for \(i =1, 2\)
	\begin{align*}
	Y^i &\colon \Omega^* \to C(\mathbb{R}_+, I), &\hspace{-3.4cm}Y^i(\omega^1, \omega^2, \omega^3, \omega^4) = \omega^i,\\
	Z^1 &\colon \Omega^* \to D(\mathbb{R}_+, J), &\hspace{-3cm}Z^1(\omega^1, \omega^2, \omega^3, \omega^4) = \omega^3,\\
	Z^2 &\colon \Omega^* \to C(\mathbb{R}_+, \mathbb{R}), &\hspace{-3cm}Z^2(\omega^1, \omega^2, \omega^3, \omega^4) = \omega^4.
	\end{align*}
	Denote the Wiener measure by \(\mathscr{W}\) and denote the unique law of \(\xi^i\) by \(\mu\). Due to Lemma \ref{lem: indep MC BM} in the Appendix, we have
	\[
	\p^i \circ (\xi^i, W^i)^{-1} = \mu \otimes \mathscr{W}.
	\]
	When the space of continuous functions is equipped with the local uniform topology it is a Polish spaces and the corresponding Borel \(\sigma\)-fields is generated by the coordinate process. 
	Thus, there exist regular conditional probabilities 
	\[
	Q^i \colon D(\mathbb{R}_+, J) \times C(\mathbb{R}_+, \mathbb{R}) \times \mathcal{C} \to [0, 1]\] such that 
	\begin{align*}
	\p^i (X^i \in \dd \omega^1, \xi^i \in \dd \omega^2, W^i \in \dd \omega^3) &= Q^i(\omega^2, \omega^3, \dd \omega^1) \mu(\dd \omega^2) \mathscr{W} (\dd \omega^3).
	\end{align*}
	We define a probability measure \(\Q\) on \((\Omega^*, \mathcal{F}^*)\) by
	\begin{align*}
	\Q (\dd \omega^1 \times \dd \omega^2 \times \dd \omega^3 \times \dd \omega^4) \triangleq Q^1(\omega^3, \omega^4, \dd \omega^1) Q^2(\omega^3, \omega^4, \dd \omega^2) \mu(\dd \omega^3) \mathscr{W}(\dd \omega^4).
	\end{align*}
	With abuse of notation, denote the \(\Q\)-completion of \(\mathcal{F}^*\) again by \(\mathcal{F}^*\) and denote by \(\mathcal{F}^*_t\) the \(\Q\)-completion of 
	\begin{align*}
	\bigcap_{s  > t} \left(\mathcal{C}_s\otimes \mathcal{C}_s\otimes \mathcal{D}_s \otimes \mathcal{C}_s \right), \quad t \in \mathbb{R}_+.
	\end{align*}
	From now on we consider \((\Omega^*, \mathcal{F}^*, \F^* = (\mathcal{F}^*_t)_{t \geq 0}, \Q)\) as underlying filtered probability space.
	In view of \cite[Propositions 4.6, 5.6]{J80},
	for all \(A \in \mathcal{C}_{t}\) the map 
	\(
	\omega \mapsto Q^i (\omega, A)
	\)
	is measurable w.r.t. the \(\mu \otimes \mathscr{W}\)-completion of \(\bigcap_{s > t} (\mathcal{D}^o_s \otimes \mathcal{C}^o_s)\). In other words, \cite[Hypothesis 10.43]{J79} is satisfied and we deduce from \cite[Lemmata 2.7, 2.9]{J80}, \cite[Proposition 10.46]{J79} and L\'evy's characterization that \(Z^1\) is a Markov chain with \(Q\)-matrix \(Q\), \(Z^2\) is a Brownian motion 
	and
	\begin{align*}
	\dd Y_{t \wedge \rho(Y^i, Z^1)}^i = u(Y_t^i, Z^1_{t})\1_{\{t \leq \rho(Y^i, Z^1)\}} \dd t 
	+ \sigma(Y_t^i, Z^1_{t})\1_{\{t \leq \rho(Y^i, Z^1)\}} \dd Z^2_t,\quad Y^i_0 = y_0.
	\end{align*}
	The proof of the following lemma is given after the proof of Theorem \ref{theo: LU} is complete.	
	\begin{lemma}\label{loc pathwise uniqueness}
		Almost surely \(Y^1_{\cdot \wedge \rho(Y^1, Z^1) \wedge \rho(Y^2, Z^1)} = Y^2_{\cdot \wedge \rho(Y^1, Z^1) \wedge \rho(Y^2, Z^1)}\).
	\end{lemma}
	Due to Galmarino's test, this implies a.s. \(\rho(Y^1, Z^1) = \rho(Y^2, Z^1)\). Thus, a.s. \(Y^1_{\cdot \wedge \rho(Y^1, Z^1)} = Y^2_{\cdot \wedge \rho(Y^2, Z^1)}\) and
	the claim follows from the definition of \(\Q\).
\end{proof}
\noindent
\textit{Proof of Lemma \ref{loc pathwise uniqueness}:}
\textbf{Step 1:} Due to localization, we can assume that \(\rho (Y^1, Z^1) \wedge \rho(Y^2, Z^1)\) is finite.
Recall the following fact (see \cite[Proposition III.3.5]{RY}): If \((Z_t)_{t \geq 0}\) is a Feller--Markov chain for the right-continuous filtration \(\G= (\mathcal{G}_t)_{t \geq 0}\) and \(\gamma\) is a finite \(\G\)-stopping time, then \((Z_{t + \gamma})_{t \geq 0}\) is a Feller--Markov chain for a filtration \((\mathcal{G}_{t + \gamma})_{t \geq 0}\) and both chains have the same \(Q\)-matrix.
Due to Theorem \ref{theo: cdc} (i), for \(i = 1, 2\) there exists a process \((O^i_t)_{t \geq 0}\) defined by 
\[
\dd O^i_t = u(O^i_t, Z^1_{t + \rho(Y^1, Z^1) \wedge \rho(Y^2, Z^1)}) \dd t + \sigma (O^i_t, Z^1_{t + \rho(Y^1, Z^1) \wedge \rho(Y^2, Z^1)}) \dd W^\rho_t,
\]
where 
\[
W^\rho_t \triangleq Z^2_{t + \rho (Y^1, Z^1) \wedge \rho(Y^2, Z^1)} - Z^2_{\rho(Y^1, Z^1) \wedge \rho(Y^2, Z^1)},\quad t \in \mathbb{R}_+,
\]
with initial value \(O^i_0 = Y^i_{\rho(Y^1, Z^1) \wedge \rho(Y^2, Z^1)}\).
Now, set 
\[
V^i_t \triangleq \begin{cases}
Y^i_t,&t \leq \rho(Y^1, Z^1) \wedge \rho(Y^2, Z^1),\\
O^i_{t - \rho(Y^1, Z^1) \wedge \rho(Y^2, Z^1)},&t > \rho(Y^1, Z^1) \wedge \rho(Y^2, Z^1).
\end{cases}
\]
As in the proof of Lemma \ref{lem: loc pathwise uniqueness}, we deduce from classical rules for time-changed stochastic integrals that 
\begin{align}\label{eq: global eq}
\dd V^i_t = u(V^i_t, Z^1_t)\dd t + \sigma(V^i_t, Z^1_t)\dd Z^2_t, \quad V^i_0 = y_0,
\end{align}
i.e. that \(V^1\) and \(V^2\) are global solutions.
Thus, it remains to show a version of pathwise uniqueness for the global equation \eqref{eq: global eq}.

\textbf{Step 2:} We use induction. Let \((\zeta_n)_{n \in \mathbb{N}}\) be the stopping times 
\[
\zeta_1 \triangleq \inf(t \in \mathbb{R}_+ \colon Z^1_t \not = Z^1_0),\quad \zeta_n \triangleq \inf (t \geq \zeta_{n-1} \colon Z^1_{t} \not = Z^1_{\zeta_{n-1}}),\quad n \geq 2.
\]
We stress that \(\zeta_n \nearrow \infty\) as \(n \to \infty\).
Almost surely on \(\{t \leq \zeta_1\}\) we have 
\begin{align*}
V^i_t &= y_0 + \int_0^t u(V^i_s, j_0) \dd s + \int_0^t \sigma(V^i_s, j_0) \dd Z^2_s,\quad i = 1,2.
\end{align*}
Recalling that under the assumptions from Theorem \ref{theo: existence Markov} (i) the SDE \eqref{eq: SDE 1} satisfies strong existence and uniqueness (up to explosion), we deduce from Lemma \ref{lem: loc pathwise uniqueness} that a.s. \(V^1_t = V^2_t\) for all \(t \leq \zeta_1\). 
In case  \(N = 1\), we have \(\zeta_1  = \infty\) and the proof is complete. In the following, we assume that \(N  \geq 2\) in which case a.s. \(\zeta_n < \infty\) for all \(n \in \mathbb{N}\).
Suppose that \(n \in \mathbb{N}\) is such that a.s. \(V^1_t = V^2_t\) for all \(t \leq \zeta_{n}\). Using classical rules for time-changed stochastic integrals, we obtain that a.s. on \(\{t \leq \zeta_{n + 1} - \zeta_n\} \cap \{Z^1_{\zeta_n} = j\}\) 
\begin{align*}
V^i_{t + \zeta_n} &= V^i_{\zeta_n} + \int_{\zeta_n}^{t + \zeta_n} u(V^i_s, j) \dd s + \int_{\zeta_n}^{t + \zeta_n} \sigma(V^i_s, j) \dd Z^2_s
\\&= V^i_{\zeta_n} + \int_{0}^{t} u(V^i_{s + \zeta_n}, j) \dd s + \int_{0}^{t} \sigma(V^i_{s + \zeta_n}, j) \dd W^n_s,
\end{align*}
where 
\[
W^n_t \triangleq Z^2_{t + \zeta_n} - Z^2_{\zeta_n},\quad t \in \mathbb{R}_+.
\]
We conclude again from Lemma \ref{lem: loc pathwise uniqueness} that a.s. \(V^1_{t + \zeta_n} = V^2_{t + \zeta_n}\) for all \(t \leq \zeta_{n + 1} - \zeta_n\). Consequently, a.s. \(V^1_t = V^2_t\) for all \(t \leq \zeta_{n +1}\) and our claim follows.
\qed

\subsection{Proof of Theorem \ref{theo: mart MS}}
\textbf{(i).} Recall that \(J = \{1, \dots, N\}\) with \(1 \leq N \leq \infty\). 
For \(n \in \mathbb{N}\) define 
\[
\tau_n \triangleq \inf (t \in [0, T] \colon S_t \not\in (l_n, r_n) \text{ or } \xi_t \geq n \wedge N).
\]
Because \(c\) is assumed to be bounded on compact subsets of \(I \times J\), Novikov's condition implies that \((\tau_n)_{n \in \mathbb{N}}\) is a localizing sequence for \(Z\). 
We define \(\Q^n\) by the Radon--Nikodym derivative \(\frac{\dd \Q^n}{\dd \p} \triangleq Z_{T \wedge \tau_n}\). By Girsanov's theorem, 
\[
B^n \triangleq W - \int_0^{\cdot \wedge \tau_n} c (S_s, \xi_s)\dd s
\]
is a \(\Q^n\)-Brownian motion such that 
\[
\dd S_{t \wedge \tau_n} = (b (S_t, \xi_t) + c (S_t, \xi_t) \sigma (S_t, \xi_t)) \1_{\{t \leq \tau_n\}} \dd t + \sigma (S_t, \xi_t) \1_{\{t \leq \tau_n\}} \dd B^n_t.
\]
We deduce from Lemma \ref{lem: indep MC BM}, Example \ref{ex: xi2} and Theorem \ref{theo: indp preserving} that under \(\Q^n\) the process \(\xi\) remains a Feller--Markov chain with unchanged \(Q\)-matrix. W.l.o.g. we extend \(W, \xi\) and \(\F\) to the infinite time interval \(\mathbb{R}_+\).
Applying Theorem \ref{theo: existence Markov} with \(u \triangleq b + c \sigma\) yields that on \((\Omega, \mathcal{F}, \F, \p)\) there exists an adapted continuous \(I\)-valued process \(X = (X_t)_{t \geq 0}\) such that 
\[
\dd X_t = (b(X_t, \xi_t) + c(X_t, \xi_t) \sigma (X_t, \xi_t)) \dd t + \sigma (X_t, \xi_t)\dd W_t, \quad X_0 = S_0.
\]
We set 
\[
\rho_n \triangleq \inf (t \in [0, T] \colon X_t \not\in (l_n, r_n) \text{ or } \xi_t \geq n \wedge N).
\]
It follows from Lemma \ref{lem: gamma nst} and Theorem \ref{theo: LU} that 
\[
\p \circ (X_{\cdot \wedge \rho_n}, \xi)^{-1} = \Q^n \circ (S_{\cdot \wedge \tau_n}, \xi)^{-1}.
\]
Consequently, using Galmarino's test, we obtain that 
\[
\lim_{n \to \infty} \Q^n (\tau_n = \infty) = \lim_{n \to \infty} \p (\rho_n = \infty) = 1.
\]
Now, it follows as in the proof of Theorem \ref{theo: mart Ito} that \(Z\) is a martingale.

\textbf{(ii).} This result follows similar as Theorem \ref{theo: general SLM}, where Theorem \ref{theo: existence Markov} has to be used instead of Theorem \ref{theo: 1D Feller p2}. We omit the details.
\qed

\section{Proof of Theorem \ref{theo: indp preserving}}\label{sec: pf theo modi ind}
\textbf{Step 1. }
Let \(g \in A\) and set
\begin{align}\label{eq: Mg}
	M^g_t \triangleq g(\xi_t) - g(\xi_0)- \int_0^t L g(\xi, s) \dd s, \quad t \in [0, T].
\end{align}
Due to the definition of the martingale problem \((A, L, T)\), the process \(M^g\) is a local martingale with localizing sequence \((\rho_n (\xi))_{n \in \mathbb{N}}\).
Thus, the quadratic variation process \([M^g, W]\) is well-defined. Our first step is to show that a.s. \([M^g, W] = 0\).
We explain that \(WM^g\) is a local martingale for the completed right-continuous version of the natural filtration of \(\xi\) and \(W\). Let \(0 \leq s < t \leq T\), \(G \in \sigma (W_r, r \in [0, s]) \triangleq \mathcal{W}_s\) and \(F \in \sigma (\xi_r, r \in [0, s]) \triangleq \mathcal{E}_s\). The independence assumption yields that
\begin{align*}
	\E^\p\big[W_t M^g_{t \wedge \rho_m (\xi)} \1_{G \cap F}\big] &= \E^\p\big[W_t \1_G\big] \E^\p\big[M^g_{t \wedge \rho_m (\xi)} \1_F\big] \\&= \E^\p\big[W_s \1_G\big] \E^\p\big[M^g_{s \wedge \rho_m(\xi)} \1_F\big] \\&= \E^\p\big[W_s M^g_{s \wedge \rho_m(\xi)} \1_{G \cap F}\big].
\end{align*}
By a monotone class argument, we have 
\[
\E^\p\big[W_t M^g_{t \wedge \rho_m(\xi)} \1_B\big] = \E^\p\big[W_s M^g_{s \wedge \rho_m(\xi)} \1_B\big]
\]
for all \(B \in \mathcal{W}_s \vee \mathcal{E}_s\). Due to the downwards theorem (see \cite[Theorem II.51.1]{RW1}), the process \(W M^g_{\cdot \wedge \rho_m(\xi)}\) is a martingale for the completed right-continuous version \(\G \triangleq (\mathcal{G}_t)_{t \in [0, T]}\) of \((\mathcal{W}_t \vee \mathcal{E}_t)_{t \in [0, T]}\). Consequently, because \(\rho_m (\xi)\nearrow \infty\) as \(m \to \infty\), \(W M^g\) is a local \(\G\)-martingale.
By the tower rule, also \(W\) and \(M^g\) are local \(\G\)-martingales.
Integration by parts implies that  
\begin{align*}
	[W,M^g] &= WM^g - \int_0^\cdot W_{s} \dd M^g_s - \int_0^\cdot M^g_{s-} \dd W_s,
\end{align*}
where the stochastic integrals are defined as local \(\G\)-martingales.
Here, we use that \([W, M^g]\) can be defined independently of the filtration. 
We deduce that the process \([W, M^g]\) is a continuous local \(\G\)-martingale of finite variation and hence a.s. \([W, M^g] = 0\).

\textbf{Step 2.} In this step we identify the laws of \(B\) and \(\xi\) under \(\Q\). 
Clearly, \(B\) is a \(\Q\)-Brownian motion due to Girsanov's theorem. 
Next, we show that on \((\Omega, \mathcal{F}, \F, \Q)\) the process \(\xi\) is a solution process for the martingale problem \((A, L, T)\). 
By Step 1 and Girsanov's theorem, the process
\[
M^g - \int_0^t\frac{\dd [Z, M^g]_s}{Z_{s}} = M^g - \int_0^\cdot  \theta_s\dd\hspace{0.05cm} [W, M^g]_s
= M^g
\] 
is a local \(\Q\)-martingale. 
The equivalence \(\Q\sim \p\) implies that \(\Q(\xi_0 = e_0) = 1\) and that \(M^g_{\cdot \wedge \rho_n(\xi)}\) is \(\Q\)-a.s. bounded.
Thus, the claim follows.

\textbf{Step 3.} We prove \(\Q\)-independence of \(B\) and \(\xi\) borrowing an idea from \cite[Theorem 4.10.1]{EK}. 
We define \(C^2_b(\mathbb{R})\) to be the set of all bounded twice continuously differentiable functions \(\mathbb{R}   \to \mathbb{R}\) with bounded first and second derivative. 
Suppose that \(f \in C^2_b(\mathbb{R})\) with \(\inf_{x \in \mathbb{R}} f(x) > 0\) and define 
\[
K^f_t \triangleq f(B_t) \exp \Big( - \frac{1}{2} \int_0^t \frac{f''(B_s)}{f(B_s)} \dd s \Big),\quad t \in [0, T].
\]
By It\^o's formula, we have 
\begin{align*}
	\dd K^f_t &=  \exp \Big(- \frac{1}{2} \int_0^t \frac{f''(B_s)}{f(B_s)} \dd s \Big) \big(\dd f(B_t) - \tfrac{1}{2} f''(B_t) \dd t \big) 
	\\&= \exp \Big(- \frac{1}{2} \int_0^t \frac{f''(B_s)}{f(B_s)} \dd s \Big)f'(B_t)\dd B_t.
\end{align*}
Thus, \(K^f\) is a \(\Q\)-martingale, as it is a bounded local \(\Q\)-martingale.
Recall that the quadratic variation process is not affected by an equivalent change of measure.
By Step 1, \(\Q\)-a.s. \([B, M^g] = 0\).
Due to integration by parts, we obtain that 
\begin{align*}
	\dd K^f_t M^g_{t} &= K^f_t \dd M^g_{t} + M^g_{t -} \dd K^f_t +  \dd [K^f, M^g]_t
	\\&= K^f_t \dd M^g_{t} + M^g_{t-} \dd K^f_t,
\end{align*}
which implies that \(K^f M^g_{\cdot \wedge \rho_m(\xi)}\) is a \(\Q\)-martingale, as it is a bounded local \(\Q\)-martingale.

Let \(\zeta\) be a stopping time such that \(\zeta \leq T\) and set 
\[
\widetilde{\Q} (G) \triangleq \frac{\E^\Q\big[ \1_{G} K^f_\zeta\big]}{\E^\Q\big[K^f_\zeta\big]},\quad G \in \mathcal{F}.
\]
Because \(K^f M^g_{\cdot \wedge \rho_m(\xi)}, K^f\) and \(M^f_{\cdot \wedge \rho_m(\xi)}\) are \(\Q\)-martingales (see also Step 2), the optional stopping theorem implies that for all stopping times \(\psi \leq T\)
\begin{align*}
	\E^{\widetilde{\Q}} \big[ M^g_{\psi \wedge \rho_m(\xi)} \big] &= \frac{\E^\Q \big[M^g_{\psi \wedge \rho_m(\xi)} K^f_\zeta\big]}{\E^\Q \big[K^f_\zeta\big]}
	= 0.
\end{align*}
Consequently, by \cite[Proposition II.1.4]{RY}, \(M^g_{\cdot \wedge \rho_m(\xi)}\) is a \(\widetilde{\Q}\)-martingale. Because \(\widetilde{\Q} \sim \Q\), this implies that on \((\Omega, \mathcal{F}, \F, \widetilde{\Q})\) the process \(\xi\) is a solution process for the martingale problem \((A, L, T)\).
The uniqueness assumption for the martingale problem \((A, L, j_0, T)\) implies that 
\begin{align}\label{eq: conclusion}
	\widetilde{\Q} (\Gamma) = \Q(\Gamma)
\end{align}
for all 
\[
\Gamma \triangleq \big\{\xi_{t_1} \in G_1, \dots, \xi_{t_n} \in G_n\big\},
\]
where \(G_1, \dots, G_n \in \mathcal{B}(J)\) and \(0 \leq t_1 < \dots < t_n \leq T\).
We fix \(\Gamma\) such that \(\Q(\Gamma) > 0\) and define 
\[
\widehat{\Q} (F) \triangleq \frac{\E^\Q \big[ \1_F \1_\Gamma\big]}{\Q(\Gamma)}, \quad F \in \mathcal{F}.
\]
Using the definition of \(\widetilde{\Q}\), \eqref{eq: conclusion}, the fact that \(K^f\) is a \(\Q\)-martingale and the optional stopping theorem, we obtain
\begin{align*}
	\E^{\widehat{\Q}} \big[K^f_\zeta\big] = \frac{\E^\Q \big[K^f_\zeta \1_\Gamma\big]}{\Q(\Gamma)} = \frac{\widetilde{\Q}(\Gamma) \E^\Q \big[K^f_\zeta\big]}{\Q(\Gamma)} = \E^\Q \big[K^f_\zeta\big] = f(0).
\end{align*}
Because \(\zeta\) was arbitrary, we conclude that \(K^f\) is a \(\widehat{\Q}\)-martingale. Furthermore, \(\widehat{\Q}(B_0 = 0) = 1\) follows from the fact that \(B\) is a \(\Q\)-Brownian motion.
Finally, due to \cite[Proposition 4.3.3]{EK}, the process \(B\) is a \(\widehat{\Q}\)-Brownian motion. We conclude that
\[
\widehat{\Q} \big(B_{s_1} \in F_1, \dots, B_{s_k} \in F_k\big) = \Q \big(B_{s_1} \in F_1, \dots, B_{s_k} \in F_k\big),
\]
for all \(F_1, \dots, F_k \in \mathcal{B}(\mathbb{R})\) and \(0 \leq s_1 < \dots < s_k \leq T\).
By the definition of \(\widehat{\Q}\), we have proven that
\begin{align*}
	\Q\big(B_{s_1} &\in F_1, \dots, B_{s_k} \in F_k, \xi_{t_1} \in G_1, \dots, \xi_{t_m} \in G_m\big) \\&= \Q \big(B_{s_1} \in F_1, \dots, B_{s_k} \in F_k\big)  \Q\big(\xi_{t_1} \in G_1, \dots, \xi_{t_m} \in G_n\big),
\end{align*}
which implies that the \(\sigma\)-fields \(\sigma(\xi_t, t \in [0, T])\) and \(\sigma(B_t, t \in [0, T])\) are \(\Q\)-independent. The proof is complete.
\qed

\section{Proof of Theorem \ref{theo: cdc}}\label{sec: pf mg JS}
Because \(\sigma(\xi_t, t \in [0, T])\) and \(\sigma(W_t, t \in [0, T])\) are assumed to be \(\p\)-independent, it follows as in the proof of Theorem \ref{theo: indp preserving} that a.s. \([Z,  W] = 0\). Thus, Girsanov's theorem implies that \(W\) is a \(\Q\)-Brownian motion.

Take \(0 \leq s_1 < \dots < s_m \leq T, 0 \leq t_1 < \dots < t_n \leq T\),\((G_k)_{k \leq m} \subset \mathcal{B}(J)\) and \((F_k)_{k \leq n} \subset \mathcal{B}(\mathbb{R})\), and set 
\begin{align*}
\Gamma_1 &\triangleq \big\{\xi_{s_1} \in G_1, \dots, \xi_{s_m} \in G_m\big\},\\
\Gamma_2 &\triangleq \big\{W_{t_1} \in F_1, \dots, W_{t_n} \in F_n\big\}.
\end{align*}
The \(\p\)-independence of \(\sigma(\xi_t, t \in [0, T])\) and \(\sigma(W_t, t \in [0, T])\) and the uniqueness of the Wiener measure yield that
\begin{align*}
\Q(\Gamma_1 \cap \Gamma_2) &= \E^\p \big[ Z_T \1_{\Gamma_1 \cap \Gamma_2}\big]
\\&= \E^\p \big[Z_T \1_{\Gamma_1} \big] \p(\Gamma_2)
\\&= \Q(\Gamma_1) \Q(\Gamma_2).
\end{align*}
We conclude that \(\sigma(\xi_t, t \in [0, T])\) and \(\sigma(W_t, t \in [0, T])\) are \(\Q\)-independent.

For \(g \in A^*\) we set 
\begin{align*}
	M^g_t &\triangleq g(\xi_t) - g(\xi_0) - \int_0^t L^* g(\xi, s)\dd s, \quad t \in [0, T],\\
	K^{f}_t &\triangleq f(\xi_t)  - f(\xi_0)-  \int_0^t Lf(\xi, s)\dd s, \quad t \in [0, T],\\
	K^{fg}_t &\triangleq f(\xi_t) g(\xi_t) - f(\xi_0) g(\xi_0)- \int_0^t L(fg)(\xi, s)\dd s, \quad t \in [0, T].
\end{align*}
The processes \(K^f\) and \(K^{fg}\) are local \(\p\)-martingales.
We set 
\[
V_t \triangleq  \frac{1}{f(\xi_0)} \exp \Big(- \int_0^t \frac{Lf (\xi, s)}{f(\xi_s)} \dd s \Big),\quad t \in [0, T].
\]
Integration by parts implies that
\begin{align*}
\dd Z_t &= V_t \Big(\dd f(\xi_t) - f(\xi_{t}) \frac{L (\xi, t)}{f(\xi_t)} \dd t\Big) = V_t \dd K^f_t.
\end{align*}
Using again integration by parts and the identity \(L^* g = \frac{1}{f} (L (fg) - g Lf)\) yields
\begin{align*}
\dd Z_t M^g_t &= Z_{t-} \dd M^g_t + M^g_{t-} \dd Z_t + \dd [Z, M^g]_t
\\&= V_t \Big( f(\xi_{t-}) \dd M^g_t + M^g_{t-} \dd K^f_t + \dd [f(\xi), g(\xi)]_t \Big)
\\&= V_t \Big( f(\xi_{t-}) \dd g(\xi_t) - f(\xi_{t-}) L^* g(\xi, t)\dd t + g(\xi_{t-}) \dd f(\xi_t) \\&\qquad - g(\xi_{t-}) L f(\xi, t)\dd t - \Big(g(\xi_0) + \int_0^t L^* g(\xi, s)\dd s \Big)\dd K^f_t + \dd [f(\xi), g(\xi)]_t \Big)
\\&= V_t \Big( \dd \big((fg)(\xi_t)\big) - L (f g)(\xi, t)\dd t - \Big(g(\xi_0) + \int_0^t L^* g(\xi, s)\dd s \Big)\dd K^f_t \Big)
\\&= V_t \Big( \dd K^{fg}_t - \Big(g (\xi_0) + \int_0^t L^* g(\xi, s)\dd s \Big) \dd K^f_t \Big).
\end{align*}
We conclude that \(Z M^g\) is a local \(\p\)-martingale and it follows from \cite[Proposition III.3.8]{JS} that \(M^g\) is a local \(\Q\)-martingale. 
Due to the equivalence \(\Q \sim \p\), we conclude that on \((\Omega, \mathcal{F}, \F, \Q)\) the process \(\xi\) is a solution process to the martingale problem \((A^*, L^*, j_0, T)\).
 \qed
\section{Proof of Theorem \ref{prop: mg f}}
Let \((X_t)_{t \geq 0}\) be the coordinate process on \(D (\mathbb{R}_+, J)\) and denote
\[
M^f_t \triangleq \frac{f(X_t)}{f(j_0)} \exp \Big(- \int_0^t \frac{L f(X, s)}{f(X_{s})} \dd s\Big), \quad t \in [0, T].
\]
Define by \(\mu \triangleq \p \circ \xi^{-1}\) a Borel probability measure on \(D (\mathbb{R}_+, J)\).
We have to show that \[
\E^\mu\big[M^f_{T}\big] = 1.\]
It follows from \cite[Lemma 2.9]{J79} that \(M^f\) is a local \(\mu\)-martingale with localizing sequence \((\rho_n)_{n \in \mathbb{N}}\).
For all \(n \in \mathbb{N}\), define a Borel probability measure \(\mu_n\) on \(D (\mathbb{R}_+, J)\) via the Radon--Nikodym derivative \[\frac{\dd \mu_n}{\dd \mu} = M^f_{T \wedge \rho_n}.\]
The following lemma is proven after the proof of Theorem \ref{prop: mg f} is complete.
\begin{lemma}\label{lem:luJS}
	Let \(\mu^*\) be the unique law of a solution process to the martingale problem \((A^*, L^*, j_0, \infty)\).
	For all \(n \in \mathbb{N}\) we have \(\mu_n = \mu^*\) on \(\mathcal{D}^o_{T \wedge \rho_n}\).
\end{lemma}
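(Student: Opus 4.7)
\emph{Plan.} The plan is to combine a Girsanov-type calculation with a concatenation-and-uniqueness argument: first show that under $\mu_n$ the coordinate process satisfies the martingale problem $(A^*, L^*, j_0, \infty)$ stopped at $T \wedge \rho_n$; then extend $\mu_n$ past $T \wedge \rho_n$ by gluing with $\mu^*_{X_{T \wedge \rho_n}}$ to produce a solution of the full problem; finally invoke well-posedness to identify this extension with $\mu^*$, which immediately yields agreement on $\mathcal{D}^o_{T \wedge \rho_n}$.

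\emph{Step 1 (Girsanov on the canonical space).} I work on $(D(\mathbb{R}_+, J), \mathcal{D}, \mu)$, on which the coordinate process $X$ solves $(A, L, j_0, T)$. The integration-by-parts computation from the proof of Theorem \ref{theo: cdc}, applied verbatim, shows that for every $g \in A^*$ the product $M^f_{\cdot \wedge T \wedge \rho_n} N^{g,n}_\cdot$ with
\[
N^{g,n}_t \triangleq g(X_{t \wedge T \wedge \rho_n}) - g(X_0) - \int_0^{t \wedge T \wedge \rho_n} L^* g(X, s) \dd s
\]
is a local $\mu$-martingale. The boundedness hypothesis on $g$-test processes imposed in Theorem \ref{theo: cdc}, together with the fact that $M^f_{\cdot \wedge T \wedge \rho_n}$ is a bounded $\mu$-martingale (since $(\rho_n)_{n \in \mathbb{N}}$ localizes $M^f$), upgrades this local martingale to a true $\mu$-martingale. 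Bayes' rule then gives that each $N^{g,n}$ is a $\mu_n$-martingale, and clearly $\mu_n(X_0 = j_0) = 1$.

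\emph{Step 2 (Concatenation and identification).} By well-posedness of $(A^*, L^*, \infty)$ and the Markov form $L^* g(\omega, t) = K g(\omega(t))$, the family $\{\mu^*_j : j \in J\}$ of unique solutions indexed by starting point forms a measurable strong Markov family; the countable determining set delivers the required measurability of $j \mapsto \mu^*_j$. Define a probability measure $\tilde\mu_n$ on $D(\mathbb{R}_+, J)$ by gluing: sample $\omega$ from $\mu_n$, then independently sample a continuation $\omega'$ from $\mu^*_{\omega(T \wedge \rho_n(\omega))}$, and concatenate the two paths at time $T \wedge \rho_n(\omega)$. By construction $\tilde\mu_n = \mu_n$ on $\mathcal{D}^o_{T \wedge \rho_n}$. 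To conclude, verify that for every $g$ in the countable determining set and every $m \in \mathbb{N}$ the process $g(X_{\cdot \wedge \rho_m}) - g(X_0) - \int_0^{\cdot \wedge \rho_m} L^* g(X, s) \dd s$ is a $\tilde\mu_n$-martingale: split the time interval at $T \wedge \rho_n$, apply Step 1 on the part before and the defining martingale property of $\mu^*_{X_{T \wedge \rho_n}}$ on the part after, and patch via the tower rule. Well-posedness of $(A^*, L^*, j_0, \infty)$ then forces $\tilde\mu_n = \mu^*$, whence $\mu_n = \mu^*$ on $\mathcal{D}^o_{T \wedge \rho_n}$.

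\emph{Main obstacle.} The technical crux is the patching argument in Step 2 across $T \wedge \rho_n$, which relies on the strong Markov structure of $\mu^*$ and the measurability of $j \mapsto \mu^*_j$. Both are furnished by well-posedness together with the countable determining set, but the bookkeeping on conditional expectations and the Fubini-type interchanges across the concatenation time must be handled with care to produce a common null set valid for all test functions $g$ in the determining set.
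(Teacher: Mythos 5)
Your proposal is correct and follows essentially the same route as the paper: establish the stopped martingale property under $\mu_n$ via the integration-by-parts/Bayes argument from Theorem \ref{theo: cdc}, prove Borel measurability of $j \mapsto \mu^*_j$ from the countable determining set, concatenate $\mu_n$ with $\mu^*_{X_{T\wedge\rho_n}}$ at the stopping time, and invoke well-posedness to identify the glued measure with $\mu^*$. The paper executes exactly this scheme (adapting the proof of Theorem III.2.40 in Jacod--Shiryaev), handling the patching via the identity $\mathcal{D}^o_\rho \vee \theta_\rho^{-1}(\mathcal{D}) = \mathcal{D}$ and the optional stopping theorem.
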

Recalling that \(\{\rho_n > T\} \in \mathcal{D}^o_{T \wedge \rho_n}\), Lemma \ref{lem:luJS} implies that
\[
\E^\mu \big[M^f_T\big] = \lim_{n \to \infty} \E^\mu \big[M^f_{T \wedge \rho_n} \1_{\{\rho_n > T\}}\big] = \lim_{n \to \infty} \mu^*(\rho_n > T) = 1.
\]
This completes the proof. \qed
\\\\
\noindent
\textit{Proof of Lemma \ref{lem:luJS}:}
We adapt the proof of \cite[Theorem III.2.40]{JS}. 
To simplify our notation, we set \(T \wedge \rho_n\triangleq\rho\).
We denote by \(\mu_j\) the unique law of a solution process to the martingale problem \((A^*, L^*, j, \infty)\). 

\textbf{Step 1.} We show that \(j \mapsto \mu_j (G)\) is Borel for all \(G \in \mathcal{D}\), following the strategy outlined in \cite[Exercise 6.7.4]{SV}. 
Recall that we assume that \(A^*\) contains a countable determining set \(\widetilde{A}\). 
Let \(\mathcal{P}\) be the space of Borel probability measures on \(D(\mathbb{R}_+, J)\) equipped with the topology of convergence in distribution.
Note that a \(\D^o\)-adapted process is a \(\D\)-martingale if and only if it is a \(\D^o\)-martingale. The implication \(\Rightarrow\) follows from the downward theorem (see \cite[Theorem II.51.1]{RW1}) and the implication \(\Leftarrow\) follows from the tower rule.  For \(g \in \widetilde{A}\) set
\[
K^g_{t} \triangleq g(X_{t}) - g(X_0) - \int_0^{t} K g(X_{s}) \dd s, \quad t \in \mathbb{R}_+.\]
Define \(\mathcal{I}\) to be the set of all \(\nu \in \mathcal{P}\) such that \(\nu \circ X_0^{-1} = \delta_j\) for some \(j \in J\). Moreover, let \(\mathcal{M}\) be the set of all \(\nu \in \mathcal{P}\) such that
\[
\E^\nu \big[ (K^g_{t \wedge \rho_m} - K^g_{s \wedge \rho_m})\1_G\big] = 0, 
\]
for all \(g \in \widetilde{A}\), all rational \(s < t, m \in \mathbb{N}\) and \(G\) in a countable determining class of \(\mathcal{D}^o_s\). 
By the uniqueness assumption, \(\{\mu_j, j \in J\} = \mathcal{I} \cap \mathcal{M}\). Because the set \(\{\delta_j, j \in J\}\) is Borel due to \cite[Theorem 8.3.7]{cohn13} and \(\nu \mapsto \nu \circ X_0^{-1}\) is continuous, \(\mathcal{I}\) is Borel. 
The set \(\mathcal{M}\) is Borel due to \cite[Theorem 15.13]{aliprantis2013infinite}.
We conclude that \(\{\mu_j, j \in J\}\) is Borel. 
Let \(\Phi \colon \{\mu_j, j \in J\} \to J\) be defined by \(\Phi(\mu_j) = j\) for all \(j \in J\). We note that \(\Phi\) is a continuous injection. Thus, the inverse map \(\Phi^{-1}\) is Borel due to Kuratovski's theorem (\cite[Proposition 8.3.5]{cohn13}). This means that \(j \mapsto \mu_j(G)\) is Borel for all \(G \in \mathcal{D}\).

\textbf{Step 2.} Because \(\mu_n \sim \mu\), we have \(\mu_n(X_0 = j_0) = 1\).
As in the proof of Theorem \ref{theo: cdc}, we see that for all \(g \in A^*\) the process \(K^g_{\cdot \wedge \rho}\)
is a \(\mu_n\)-martingale.

\textbf{Step 3.} For every \(t \in \mathbb{R}_+\) we denote by \(\theta_t \colon D(\mathbb{R}_+, J) \to D(\mathbb{R}_+, J)\) the shift operator given by \(\theta_t \omega (s) = \omega(t + s)\).
Recalling that \(\rho\) is bounded, we deduce from \cite[Lemma III.2.44]{JS} that \[\mathcal{D}^o_\rho \vee \theta^{-1}_\rho (\mathcal{D}) = \mathcal{D}.\] 
Hence, we can associate to each \(G \in \mathcal{D}\) a (not necessarily unique) \(G' \in \mathcal{D}^o_{\rho} \otimes \mathcal{D}\) such that 
\[
G = \big\{\omega \in D \colon (\omega, \theta_{\rho(\omega)}\omega) \in G'\big\}.
\]
We define 
\[
\nu (G) \triangleq \int \mu_n(\dd \omega) \mu_{\omega(\rho(\omega))} (\dd \omega^*) \1_{G'} (\omega, \omega^*).
\]
It follows from \cite[Lemma III.2.47]{JS} that \(\nu\) is a probability measure, i.e. that \(\nu\) is defined unambiguously.
Our goal is to show that \(\nu\) solves the martingale problem \((A^*, L^*, j_0, \infty)\).
Providing an intuition, \(\nu\) is the law of 
\[
\begin{cases} X^1_t,& t < \rho (X^1),\\
X^2_{t - \rho (X^1)}, &t \geq \rho (X^1), 
\end{cases}
\]
in case \(X^1\) is sampled according to \(\mu_n\) and \(X^2\) is sampled according to \(\mu_{j}\) with \(j = X^1_{\rho (X^1)}\). In other words, we extend \(\mu_n\) to a solution of the global martingale problem.
For \(G \in \mathcal{D}^o_0\) we can choose \(G' = G \times D (\mathbb{R}_+, J)\). Consequently, 
\[
\nu(X_0 = j_0) = \mu_n(X_0 = j_0) = 1. 
\]
Let \(\psi\) be a bounded \(\D^o\)-stopping time and fix \(m \in \mathbb{N}\).
For \(\omega, \alpha \in D(\mathbb{R}_+, J)\) and \(t \in \mathbb{R}_+\) we set
\[
z(\omega, \alpha) (t) \triangleq \begin{cases} \omega(t),&t < \rho(\omega),\\
\alpha(t - \rho(\omega)),&t \geq \rho(\omega),\end{cases}
\]
and
\[
V(\omega, \alpha) \triangleq \begin{cases} \big((\psi \wedge \rho_m) \vee \rho - \rho\big) (z(\omega, \alpha)), &\alpha (0) = \omega (\rho(\omega)),\\
0,&\text{otherwise}.\end{cases}
\]
Due to \cite[Theorem IV.103]{DellacherieMeyer78} the map
\(V\) is \(\mathcal{D}^o_\rho \otimes \mathcal{D}\)-measurable such that \(V(\omega, \cdot)\) is a \(\D^o\)-stopping time for all \(\omega \in D(\mathbb{R}_+, J)\). Furthermore, it is evident from the definition that
\[
(\psi \wedge \rho_m) (\omega) \vee \rho(\omega) = \rho(\omega) + V(\omega, \theta_{\rho(\omega)} \omega)
\]
for \(\omega \in D (\mathbb{R}_+, J)\). 
For all \(\omega \in \{\rho< \psi \wedge \rho_m\} \in \mathcal{D}_\rho^o\) and \(\alpha \in D(\mathbb{R}_+, J)\) with \(\alpha (0) = \omega(\rho(\omega))\) we have
\(
V(\omega, \alpha) 
\leq \rho_m(\alpha).
\)
Note further that for \(\omega \in \{\rho < \psi \wedge \rho_m\}\)
\begin{align*}
K^g_{V(\omega, \theta_{\rho(\omega)} \omega)} (\theta_{\rho(\omega)}\omega) &= K^g_{(\psi \wedge \rho_m)(\omega) - \rho(\omega)} (\theta_{\rho(\omega)} \omega )  \\&= g(\omega((\psi \wedge \rho_m)(\omega))) - g(\omega(\rho(\omega))) - \int_{\rho(\omega)}^{(\psi \wedge \rho_m)(\omega)} Kg (\omega(s)) \dd s
\\&= K^g_{(\psi \wedge \rho_m)(\omega)}(\omega) - K^g_{\rho(\omega)}(\omega).
\end{align*}
Because \(K^g_{\cdot \wedge \rho}\) is a \(\mu_n\)-martingale, we have
\[
\E^\nu \big[ K^g_{\rho \wedge \psi \wedge \rho_m}\big] = \E^{\mu_n} \big[K^g_{\rho \wedge \psi\wedge \rho_m}\big] = 0,
\]
due to the optional stopping theorem. 
Therefore, we obtain
\begin{align*}
\E^\nu \big[ K^g_{\psi \wedge \rho_m} \big] &= \E^\nu \big[ K^g_{\psi \wedge \rho_m} - K^g_{\rho \wedge \psi \wedge \rho_m}\big] 
\\&= \E^\nu \big[ \big(K^g_{\psi \wedge \rho_m} - K^g_{\rho}\big) \1_{\{\rho < \psi \wedge \rho_m\}}\big]
\\&= \E^\nu\big[K^g_{V (\cdot, \theta_\rho)} ( \theta_\rho ) \1_{\{\rho < \psi \wedge \rho_m\}}\big]
\\&=\int \mu_n(\dd \omega) \E^{\mu_{\omega(\rho(\omega))}} \big[ K^g_{V (\omega, \cdot) \wedge \rho_m}\big] \1_{\{\rho(\omega) < (\psi \wedge \rho_m)(\omega)\}} = 0,
\end{align*}
again due to the optional stopping theorem (recall that \(V(\omega, \cdot)\) is bounded and that \(K^g_{\cdot \wedge \rho_m}\) is a \(\mu_j\)-martingale for all \(j \in J\)). We conclude from \cite[Proposition II.1.4]{RY} that \(K^g_{\cdot \wedge \rho_m}\) is a \(\nu\)-martingale
and hence that under \(\nu\) the coordinate process \((X_t)_{t \geq 0}\) solves the martingale problem \((A^*, L^*, j_0, \infty)\). The uniqueness assumption implies that \(\nu = \mu^*\). Because also for \(G \in \mathcal{D}^o_\rho\) we can choose \(G' = G \times D(\mathbb{R}_+, J)\), we obtain that 
\(
\mu^* (G) = \nu(G) = \mu_n(G). 
\)
The proof is complete.
\qed

\appendix
\section{Independence of Markov chains and Brownian motion}
Let \(0 <T \leq \infty\) be a time horizon and let \((\Omega, \mathcal{F}, \F, \p)\) be a complete probability space with right-continuous and complete filtration \(\F = (\mathcal{F}_t)_{t \in [0, T]}\), which supports a one-dimensional Brownian motion \(W = (W_t)_{t \in [0, T]}\) and a Feller--Markov chain \(\xi =  (\xi_t)_{t \in [0, T]}\).  We suppose that the initial value \(\xi_0\) of \(\xi\) is deterministic. Recall our convention that \(W\) is a Brownian motion for \(\F\) and that \(\xi\) is a Markov chain for \(\F\).
\begin{lemma}\label{lem: indep MC BM}
	\(\sigma  (W_t, t \in [0, T])\) and \(\sigma (\xi_t, t \in [0, T])\) are independent. 
\end{lemma}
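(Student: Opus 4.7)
The plan is to mimic Step 3 of the proof of Theorem \ref{theo: indp preserving}, which is essentially the same argument but phrased for a general martingale-problem source of risk. The central idea is that $W$ and $\xi$ carry ``orthogonal'' martingale content: $W$ is continuous while the martingale associated with $\xi$ is purely discontinuous. Combined with uniqueness of the martingale problem for the Feller--Markov chain (Example \ref{ex: xi2}), this will force the joint law to be a product.

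\textbf{Step 1: martingales from $\xi$.} Let $(\mathcal{L}, D(\mathcal{L}))$ be the generator of $\xi$ with $\mathcal{L} = Q$. By Dynkin's formula, for every $f \in D(\mathcal{L})$ the process
\[
M^f_t \triangleq f(\xi_t) - f(\xi_0) - \int_0^t Qf(\xi_s)\,ds, \quad t \in [0, T],
\]
is a bounded $\F$-martingale.

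\textbf{Step 2: orthogonality.} Since $W$ is continuous, $[W, \int_0^{\cdot} Qf(\xi_s)\,ds] = 0$; moreover, since $M^f$ has jumps only where $\xi$ does and $W$ has no jumps, the purely discontinuous part of $M^f$ contributes nothing to the covariation either. Hence a.s.\ $[W, M^f] = 0$, and integration by parts shows that $WM^f$ is an $\F$-local martingale.

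\textbf{Step 3: Girsanov-style exchange.} For bounded $g \in C^2(\mathbb{R})$ with $\inf g > 0$, define
\[
K^g_t \triangleq g(W_t) \exp\Big(-\tfrac{1}{2}\int_0^t \tfrac{g''(W_s)}{g(W_s)}\,ds\Big).
\]
By It\^o's formula, $K^g$ is a bounded $\F$-martingale. Using integration by parts together with $[W, M^f] = 0$ (exactly as in Step 3 of the proof of Theorem \ref{theo: indp preserving}), the product $K^g M^f$ is an $\F$-local martingale, and since $M^f$ is bounded so is the product on any bounded time interval.

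\textbf{Step 4: conclude via uniqueness.} Fix a bounded $\F$-stopping time $\zeta \le T$ and define $\widetilde{\mathds{P}}$ on $\mathcal{F}$ by $d\widetilde{\mathds{P}}/d\mathds{P} = K^g_\zeta / \mathds{E}^\mathds{P}[K^g_\zeta] = K^g_\zeta / g(W_0)$. The optional stopping theorem applied to $K^gM^f$ shows that $M^f$ remains a $\widetilde{\mathds{P}}$-martingale for every $f \in D(\mathcal{L})$; thus under $\widetilde{\mathds{P}}$ the process $\xi$ still solves the martingale problem for $(\mathcal{L}, \xi_0)$. By the uniqueness result recalled in Example \ref{ex: xi2}, $\widetilde{\mathds{P}} \circ \xi^{-1} = \mathds{P} \circ \xi^{-1}$, so for every
\[
\Gamma = \{\xi_{t_1} \in G_1, \dots, \xi_{t_n} \in G_n\}, \quad G_i \in \mathcal{B}(J),\ 0 \le t_1 < \dots < t_n \le T,
\]
with $\mathds{P}(\Gamma) > 0$ one has $\mathds{E}^\mathds{P}[K^g_\zeta \mathbf{1}_\Gamma] = g(W_0)\mathds{P}(\Gamma)$. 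Setting $\widehat{\mathds{P}}(F) \triangleq \mathds{P}(F \cap \Gamma)/\mathds{P}(\Gamma)$, this rewrites as $\mathds{E}^{\widehat{\mathds{P}}}[K^g_\zeta] = g(W_0)$; since $\zeta$ was arbitrary, $K^g$ is a $\widehat{\mathds{P}}$-martingale. By \cite[Proposition 4.3.3]{EK} (as invoked in Theorem \ref{theo: indp preserving}), $W$ is a $\widehat{\mathds{P}}$-Brownian motion, so its finite-dimensional distributions under $\widehat{\mathds{P}}$ equal those under $\mathds{P}$. This yields
\[
\mathds{P}(W_{s_1}\!\in\! F_1, \dots, W_{s_k}\!\in\! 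F_k \mid \Gamma) = \mathds{P}(W_{s_1}\!\in\! F_1, \dots, W_{s_k}\!\in\! F_k)
\]
for all Borel $F_i$ and times $s_i \in [0, T]$, establishing independence of the generated $\sigma$-fields.

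\textbf{Main obstacle.} The delicate point is Step 2: one must ensure that $M^f$ and $W$ are local martingales for a common filtration so that $[W, M^f]$ is unambiguously defined and so that the arguments of Step 4 apply. Since both processes are adapted to $\F$ and martingales for $\F$ by hypothesis, this is built into the setup, but it is precisely this joint adaptedness to a single filtration that makes the orthogonality meaningful; without it, the whole scheme collapses.
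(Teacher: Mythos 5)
Your proposal is correct and follows essentially the same route as the paper: the paper also pairs the Dynkin martingale $M^f$ with the exponential functional $K^g$ of $W$, kills the covariation (the paper argues via $M^f$ being of finite variation and $K^g$ continuous, you via $W$ continuous and $M^f$ purely discontinuous — the same fact), and then explicitly invokes the measure-change/uniqueness argument of Step 3 of the proof of Theorem \ref{theo: indp preserving}, which is exactly your Steps 3--4.
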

\begin{proof}
Let \(f \in C_0(J)\) be such that \(Q f \in C_0(J)\), where \(Q\) is the \(Q\)-matrix of \(\xi\). We set 
\[
M \triangleq f(\xi) - f(\xi_0) - \int_0^\cdot Q f(\xi_s) \dd s.
\]
Let \(g \in C^2_b(\mathbb{R})\) such that \(\inf_{x \in \mathbb{R}} g(x) > 0\) and set 
\[
K \triangleq g(W) \exp  \Big(- \frac{1}{2} \int_0^\cdot \frac{g''(W_s)}{g(W_s)} \dd s \Big).
\]
It\^o's formula yields that \(K\) is a martingale. The process \(M\) is a martingale by Dynkin's formula. Thus, integration by parts yields that
\begin{align}\label{eq: iden ibp}
\dd M_t K_t = M_{t} \dd K_t + K_t \dd M_t + \dd [M, K]_t.
\end{align}
Because \(\xi\) has only finitely many jumps in a finite time interval, the process \(M\) is of finite variation over finite time intervals. Thus, since \(K\) has continuous paths, it follows that a.s. \([M, K] =0\). In view of \eqref{eq: iden ibp}, we conclude that \(MK\) is a martingale. Now, recalling Example \ref{ex: xi2}, we can argue as in Step 3 of the proof of Theorem \ref{theo: indp preserving} to obtain the claimed independence. \end{proof}

\bibliographystyle{agsm}

\end{document}